\newtheorem{thm}{Theorem}[section]
\newtheorem{cor}[thm]{Corollary}
\newtheorem{lem}[thm]{Lemma}
\newtheorem{prop}[thm]{Proposition}
\theoremstyle{definition}
\newtheorem{defn}[thm]{Definition}
\theoremstyle{remark}
\newtheorem{rem}[thm]{Remark}
\numberwithin{equation}{section}
\begin{document}

\newcommand{\thmref}[1]{Theorem~\ref{#1}}
\newcommand{\secref}[1]{Section~\ref{#1}}
\newcommand{\lemref}[1]{Lemma~\ref{#1}}
\newcommand{\propref}[1]{Proposition~\ref{#1}}
\newcommand{\corref}[1]{Corollary~\ref{#1}}
\newcommand{\remref}[1]{Remark~\ref{#1}}
\newcommand{\eqnref}[1]{(\ref{#1})}
\newcommand{\exref}[1]{Example~\ref{#1}}

\makeatletter
\newsavebox\myboxA
\newsavebox\myboxB
\newlength\mylenA

\newcommand*\xoverline[2][0.75]{
    \sbox{\myboxA}{$\m@th#2$}
    \setbox\myboxB\null
    \ht\myboxB=\ht\myboxA
    \dp\myboxB=\dp\myboxA
    \wd\myboxB=#1\wd\myboxA
    \sbox\myboxB{$\m@th\overline{\copy\myboxB}$}
    \setlength\mylenA{\the\wd\myboxA}
    \addtolength\mylenA{-\the\wd\myboxB}
    \ifdim\wd\myboxB<\wd\myboxA
       \rlap{\hskip 0.5\mylenA\usebox\myboxB}{\usebox\myboxA}
    \else
        \hskip -0.5\mylenA\rlap{\usebox\myboxA}{\hskip 0.5\mylenA\usebox\myboxB}
    \fi}
\makeatother

\makeatletter
\def\mathcenterto#1#2{\mathclap{\phantom{#1}\mathclap{#2}}\phantom{#1}}
\let\old@wt\widetilde
\def\widetildeto#1#2{\mathcenterto{#2}{\old@wt{\mathcenterto{#1}{#2}}}}
\makeatother

\def\wt{\widetildeto{o}}

\DeclarePairedDelimiterX\setc[2]{\{}{\}}{\,#1 \;\delimsize\vert\; #2\,}

\newcommand{\nc}{\newcommand}
\nc{\Z}{{\mathbb Z}}
\nc{\C}{{\mathbb C}}
\nc{\N}{{\mathbb N}}
\nc{\la}{\lambda}
\nc{\ep}{\epsilon}
\nc{\La}{\Lambda}
\nc{\V}{\mf V} \nc{\bi}{\bibitem}
\nc{\mc}{\mathcal}
\nc{\mf}{\mathfrak}
\nc{\hf}{\frac{1}{2}}
\nc{\gl}{{\mf{gl}}}
\nc{\osp}{\mf{osp}}
\nc{\spo}{\mf{spo}}
\nc{\ov}{\overline}
\nc{\xov}{\xoverline}
\nc{\un}{\underline}
\nc{\w}{\widetilde}
\nc{\J}{\mathbb{J}}
\nc{\xx}{{\mf x}}
\nc{\OO}{\mathcal{O}}
\nc{\Hom}{\textrm{Hom}}

\nc{\wla}{\w{\la}}
\nc{\ovla}{\ov{\la}}
\nc{\wmu}{\w{\mu}}
\nc{\ovmu}{\ov{\mu}}

\nc{\cP}{{\mathcal{P}}}
\nc{\cH}{{\mathcal{H}}}
\nc{\ovcH}{\ov{{\mathcal{H}}}}
\nc{\wcH}{\w{\mathcal{H}}}

\nc{\cG}{{\mathcal{G}}}
\nc{\ovcG}{\ov{\mathcal{G}}}
\nc{\wcG}{\w{\mathcal{G}}}

\nc{\G}{{\mathfrak g}}
\nc{\SG}{\ov{\mathfrak g}}
\nc{\DG}{\tilde{\mathfrak g}}

\nc{\fh}{\mathfrak h}
\nc{\ovfh}{\ov{{\mathfrak h}}}
\nc{\wfh}{\tilde{{\mathfrak h}}}

\setlength\headheight{14pt}

\pagestyle{fancy}
\fancyhf{}
\fancyhead[RO]{\thepage}
\fancyhead[LE]{\thepage}
\fancyhead[CO]{\small{Gaudin Hamiltonians on unitarizable modules}}
\fancyhead[CE]{\small{Cheong and Lam}}

\title{\Large \bf Gaudin Hamiltonians on unitarizable modules over classical Lie (super)algebras}
\author{Wan Keng Cheong and Ngau Lam}
\date{}

\maketitle

\begin{abstract}

Let $M$ be a tensor product of unitarizable irreducible highest weight modules over the Lie (super)algebra $\cG$, where $\cG$ is $\gl(m|n)$, $\osp(2m|2n)$ or $\spo(2m|2n)$. We show, using super duality, that the singular eigenvectors of the (super) Gaudin Hamiltonians for $\cG$ on $M$ can be obtained from the singular eigenvectors of the Gaudin Hamiltonians for the corresponding Lie algebras on some tensor products of finite-dimensional irreducible modules. As a consequence, the (super) Gaudin Hamiltonians for $\cG$ are diagonalizable on the space spanned by singular vectors of $M$ and hence on $M$. In particular, we establish the diagonalization of the Gaudin Hamiltonians, associated to any of the orthogonal Lie algebra $\mf{so}(2n)$ and the symplectic Lie algebra $\mf{sp}(2n)$, on the tensor product of infinite-dimensional unitarizable irreducible highest weight modules.

\end{abstract}

 \setcounter{tocdepth}{1}

 \section{Introduction}

The Gaudin model was introduced by Gaudin \cite{G76, G83} to describe a completely integrable quantum spin chain associated to any finite-dimensional simple Lie algebra $\cG$. Let $(\cdot, \cdot)$ be a non-degenerate invariant bilinear form on $\cG$. Let $\{I_a \, | \, a=1, \ldots, d\}$ be a basis for $\cG$ and $\{I^a\, |\, a=1, \ldots, d\}$ the dual basis with respect to the bilinear form $(\cdot, \cdot)$, where $d$ is the dimension of $\cG$.
The Casimir symmetric tensor $\Omega$ for $\cG$ is defined to be $\Omega=\sum_{a=1}^d I_a\otimes I^a$. Fix $\ell \in \N$ with $\ell \geq 2$, and let $z_1, \ldots, z_\ell$ be distinct complex numbers. For any $i=1, \ldots, \ell$, the (quadratic) Gaudin Hamiltonian $H^i$ of the Gaudin model associated to $\cG$ is defined to be
\begin{equation*}
H^i=\sum_{\substack{j=1 \\ j\not=i}}^\ell \frac{\Omega^{(ij)}}{z_i-z_j},
\end{equation*}
where $\Omega^{(ij)}$ is defined as in \eqref{(ij)}.
The Gaudin Hamiltonians $H^1, \ldots, H^\ell$ act on the tensor product $M_1\otimes \cdots \otimes M_\ell$, where each $M_i$ is a $\cG$-module, and they are mutually commuting operators.

To find common eigenvectors for Gaudin Hamiltonians is one of the main problems of studying the Gaudin model. Bethe ansatz method provides an explicit construction of common eigenvectors from the solutions of the so-called Bethe ansatz equations and proves to be effective for the special linear Lie algebra $\mf{sl}(2)$ (cf. \cite{G76}). The eigenvectors obtained by this method are called Bethe vectors. Babujian and Flume \cite{BF} generalized the Bethe ansatz equations to the case of any simple Lie algebra. In \cite{FFR}, Feigin, Frenkel and Reshetikhin proposed a new approach based on Wakimoto modules over the affine Lie algebra at the critical level. They obtained the Bethe vectors by restricting certain invariant functionals on tensor products of Wakimoto modules and found as well that the Bethe vectors are in close connection with the solutions of the Knizhnik-Zamolodchikov equations \cite{KZ} (see also \cite{EFK}).

In this paper, we are interested in the super version of (quadratic) Gaudin Hamiltonians. For precise definitions of such Gaudin Hamiltonians, see \eqref{Hamiltonian-C} and \eqref{Hamiltonian}. We find an interesting connection between the Gaudin Hamiltonians associated to the (trivial) central extension of any infinite-rank classical Lie (super)algebra of type $\mf{a, c, d}$ and the Gaudin Hamiltonians associated to the (trivial) central extension of the corresponding Lie algebra.

Let $\w\cG$ be the infinite-rank classical Lie superalgebra of type $\mf{a, c, d}$ described in Sections \ref{type-a}--\ref{type-d}, and let $\cG[m]$ and $\ov\cG[m]$ be certain subalgebras of $\w\cG$. Let $\w\G$, $\G[m]$ and $\SG[m]$ be the central extensions of $\w\cG$, $\cG[m]$ and $\ov\cG[m]$, respectively.
Our main tool is super duality (cf. \cite{CL10, CLW11, CLW12}), which asserts that there are equivalences of tensor categories between the parabolic BGG categories $\w\OO$ of $\DG$-modules, $\OO[m]$ of $\G[m]$-modules and $\ov\OO[m]$ of $\SG[m]$-modules (cf. \thmref{SD}). The following diagrams summarize the relationships among these categories.
$$
\begin{tikzcd}[column sep=small]
& \w\OO \arrow[dl, "T_{[m]}"'] \arrow[dr, "T_{[0]}"] & \\
\OO[m]  &    &  \OO[0]
\end{tikzcd}
\qquad
\begin{tikzcd}[column sep=small]
& \w\OO \arrow[dl, "\ov{T}_{[m]}"'] \arrow[dr, "\ov{T}_{[0]}"] & \\
\ov\OO[m]  &    &  \ov\OO[0]
\end{tikzcd}
$$
Here the tensor functors $T_{[m]}$, $\ov{T}_{[m]}$, $T_{[0]}$, and $\ov{T}_{[0]}$, defined in Section \ref{category-O}, are equivalences of tensor categories for $m \in \N$. Note that $\DG$, $\G[m]$ and $\SG[m]$, for $m \in \N$, are Lie superalgebras while $\G[0]$ and $\SG[0]$ are Lie algebras. In other words, super duality gives equivalences of categories between the parabolic BGG categories for Lie superalgebras and the corresponding Lie algebras.

We view the Gaudin Hamiltonians for $\DG$ (resp., $\G[m]$ and $\SG[m]$) as linear operators on the tensor product of modules in the category $\w\OO$ (resp., $\OO[m]$ and $\ov\OO[m]$). Instead of constructing singular eigenvectors for these operators directly, we apply super duality and the work \cite{CaL} to establish the following result. 

\begin{thm}[\thmref{eigenvector-C1}]
There exist one-to-one correspondences relating the sets of singular eigenvectors of the Gaudin Hamiltonians for $\DG$, $\G[m]$ and $\SG[m]$ for $m \in \Z_+$.
\end{thm}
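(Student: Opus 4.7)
The plan is to deduce all three bijections from super duality by showing that the tensor functors $T_{[m]}, \ov T_{[m]}: \w\OO \to \OO[m], \ov\OO[m]$ intertwine the corresponding Gaudin Hamiltonians and restrict to bijections on the subspaces of singular vectors. Given a tensor product $M = M_1 \otimes \cdots \otimes M_\ell$ of modules in $\w\OO$, the tensor equivalences send $M$ to the tensor products of the images, on which the Gaudin Hamiltonians for $\DG$, $\G[m]$ and $\SG[m]$ act respectively. If each functor carries $H^i_{\DG}$ to $H^i_{\G[m]}$ and $H^i_{\SG[m]}$, then the common eigenspaces correspond; combined with the fact that the equivalences respect the triangular decompositions and therefore preserve singular subspaces, this yields the asserted one-to-one correspondences.

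The first technical point I would address is that the Casimir tensors for the infinite-rank (super)algebras $\DG$, $\G[m]$, $\SG[m]$ give well-defined operators on tensor products of objects in the appropriate BGG category. Although $\Omega = \sum_a I_a \otimes I^a$ is formally an infinite sum, on each weight vector in a module from the relevant category only finitely many summands contribute, so $\Omega^{(ij)}$ is an honest endomorphism and the Gaudin Hamiltonians \eqref{Hamiltonian-C} and \eqref{Hamiltonian} are legitimate operators on every tensor product. This convergence argument also guides the compatibility statement to aim for in the next step, since the finite truncations of $\Omega$ can be matched across the three (super)algebras through the subalgebra embeddings used to build super duality.

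The main step, and the principal obstacle, is to establish that each tensor functor intertwines the Gaudin Hamiltonians. This is not automatic from the equivalence of tensor categories, since $\DG$, $\G[m]$ and $\SG[m]$ are genuinely distinct Lie (super)algebras with different Casimir elements; one needs a precise compatibility between their actions on corresponding modules. I would combine the functoriality of $T_{[m]}$ and $\ov T_{[m]}$ on tensor products with the results of \cite{CaL}, which supply an intertwining of Casimir-type operators in exactly the setting arising from super duality, to identify $H^i_{\DG}$ with $H^i_{\G[m]}$ and $H^i_{\SG[m]}$ via the equivalences. Once this identification is in hand, the correspondence of singular common eigenvectors of $H^1, \ldots, H^\ell$ follows formally by restricting the category equivalences to the joint eigenspaces sitting inside the singular subspace of $M$, and composing the two resulting bijections yields the third correspondence between the $\G[m]$- and $\SG[m]$-sides.
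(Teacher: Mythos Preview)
Your overall strategy matches the paper's: use that $T_{[m]}$ and $\ov T_{[m]}$ are tensor functors, establish that the Gaudin Hamiltonians agree on the image, and conclude a bijection of singular eigenvectors. The well-definedness discussion and the Casimir compatibility (your second and third paragraphs) are exactly \lemref{finitesum} and \lemref{wOm-Om}/\lemref{corresp} in the paper, and citing \cite{CaL} for this is appropriate.

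However, your phrase ``the equivalences respect the triangular decompositions and therefore preserve singular subspaces'' hides the only nontrivial point, and as written it is a gap. A $\DG$-singular vector $v$ of weight $\w\mu\in\w\cP^+$ does \emph{not} in general lie in $T_{[m]}(\wt M)=\bigoplus_{\gamma\in\fh[m]^*}\wt M_\gamma$, because $\w\mu$ is typically not supported on $\J_m^+$; so the functor does not literally ``preserve'' $v$, and the Hamiltonian compatibility $\wcH^i=\cH^i[m]$ from \lemref{corresp} cannot be applied to $v$ itself. What the paper does (see \propref{isom}) is produce an explicit element $A\in U(\tilde{\mf l})$, coming from a sequence of odd reflections, such that $v\mapsto Av$ is a linear isomorphism $\wt M_{\w\mu}^{\rm sing}\to T_{[m]}(\wt M)_{\mu[m]}^{\rm sing}$. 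The proof of \thmref{eigenvector-C1} then hinges on two facts: $Av$ has weight $\mu[m]\in\Xi[m]$, so \lemref{corresp} gives $\cH^i[m](Av)=\wcH^i(Av)$; and $\wcH^i$ is a $\w\G$-homomorphism, hence commutes with $A$, giving $\wcH^i(Av)=A\,\wcH^i(v)$. These combine to show $v$ is an eigenvector of $\wcH^i$ iff $Av$ is an eigenvector of $\cH^i[m]$ with the same eigenvalue.

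Your categorical formulation can be made to work, but only after you replace ``respect the triangular decompositions'' by the precise statement that the equivalence sends parabolic Verma modules to parabolic Verma modules (part of \thmref{SD}), so that $\Hom_{\w\OO}(\Delta(\DG,\w\mu),\wt M)\cong\Hom_{\OO[m]}(\Delta(\G[m],\mu[m]),T_{[m]}(\wt M))$ identifies the singular subspaces, and then argue by naturality that this isomorphism intertwines post-composition by $\wcH^i$ and $\cH^i[m]$. The paper's explicit $A$ is what makes this naturality transparent, and it is also what is needed later (Section~\ref{diag}) to actually construct eigenvectors; without it your bijection remains abstract.
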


Furthermore, by using the notion of truncation functors (see \eqref{trun-functor}), we relate the singular eigenvectors of the Gaudin Hamiltonians for $\G[m]$ and $\SG[m]$ to the singular eigenvectors of the Gaudin Hamiltonians for their finite-rank counterparts $\G[m]_n$ and $\SG[m]_n$, for $n \in \N$.

The diagonalization problem of Gaudin Hamiltonians plays an important role in the study of Gaudin models. Let $\cG$ be a finite-dimensional simple Lie algebra. Rybnikov \cite{Ry} shows that for generic $z_1, \ldots, z_\ell$, the Gaudin algebra, which is generated by (quadratic) Gaudin Hamiltonians and higher Gaudin Hamiltonians, is diagonalizable on the space spanned by singular vectors in $L_1\otimes \cdots \otimes L_\ell$, where $L_i$ is a finite-dimensional irreducible module over $\cG$ for each $i=1, \ldots, \ell$. We refer the reader to \cite{MV, MTV1, MTV2, FFRyb, LMV} for related results.

In this paper, we would like to extend Rybnikov's result to the case of Lie (super)algebras. We focus our attention on the finite-dimensional classical Lie (super)algebra $\ovcG^\xx[m]_n$ of type $\xx$, where $\xx$ denotes a fixed type among $\mf{a, c,d}$ (see Sections \ref{type-a}, \ref{type-c} and \ref{type-d}). Note that $\ovcG^{\mf{a}}[m]_n \cong \gl(m|n)$, $\ovcG^{\mf{c}}[m]_n \cong \mf{spo}(2m|2n)$ and $\ovcG^{\mf{d}}[m]_n \cong \mf{osp}(2m|2n)$.

\begin{thm} [\thmref{dom-uni-1} and \thmref{dom-uni-2}]
Let $M=\ov{L}_1 \otimes \cdots \otimes \ov{L}_\ell$, where each $\ov{L}_i$ is an irreducible highest weight module over $\ovcG^\xx[m]_n$, and let $M^{\rm sing}$ be the space spanned by singular vectors in $M$.
Suppose that for $\xx=\mf{a}$, each $\ov{L}_i$ is a polynomial module, and for $\xx=\mf{c,d}$, all the highest weights of $\ov{L}_i$'s lie in $\ov{\mc Q}^{\xx, I}(m|n)$ or $\ov{\mc Q}^{\xx, II}(m|n)$ (see Section \ref{unitarizable-cd}). Then for $i=1, \ldots, \ell$ and generic $z_1, \ldots, z_\ell$, the (super) Gaudin Hamiltonian $\xov H^i[m]_n$ associated to $\ovcG^\xx[m]_n$ is diagonalizable on $M^{\rm sing}$, and any of its eigenbasis can be obtained from the singular eigenvectors of the Gaudin Hamiltonians associated to the Lie algebra counterparts $\cG^\xx[0]_k$'s on some tensor products of finite-dimensional irreducible modules.
\end{thm}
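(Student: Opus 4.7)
The plan is to transport the diagonalization problem from the finite-rank (super)algebra $\ovcG^\xx[m]_n$ to the Lie-algebra side $\cG^\xx[0]_k$ (a direct sum of classical simple Lie algebras) by stringing together truncation, super duality, and the singular-eigenvector correspondence already stated in \thmref{eigenvector-C1}, and then to invoke Rybnikov's diagonalization theorem on the Lie-algebra side.

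First, I would lift each unitarizable highest-weight module $\ov L_i$ over $\ovcG^\xx[m]_n$ to an irreducible highest-weight module $\ov{\mathbb L}_i$ over the infinite-rank algebra $\SG[m]$ lying in $\ov\OO[m]$. The hypotheses on the highest weights (polynomial for type $\mf a$, lying in $\ov{\mc Q}^{\xx,I}(m|n)$ or $\ov{\mc Q}^{\xx,II}(m|n)$ for types $\mf c, \mf d$) are precisely what guarantees that such a lift exists and that the truncation functor recovers $\ov L_i$ from $\ov{\mathbb L}_i$. Since the Gaudin Hamiltonian $\xov H^i[m]_n$ on $\ov L_1\otimes\cdots\otimes\ov L_\ell$ arises as the truncation of the corresponding Gaudin operator $\xov H^i[m]$ on $\ov{\mathbb L}_1\otimes\cdots\otimes\ov{\mathbb L}_\ell$, and truncation is exact and intertwines the two operators, it suffices to diagonalize $\xov H^i[m]$ on the space of singular vectors in the infinite-rank tensor product.

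Next, I would apply super duality: the inverse of $\ov T_{[m]}$ transports $\ov{\mathbb L}_1\otimes\cdots\otimes\ov{\mathbb L}_\ell$ to a tensor product in $\w\OO$, and then $T_{[0]}$ transports it further to $\OO[0]$, where the acting algebra $\G[0]$ is a genuine (infinite-rank) Lie algebra. By \thmref{eigenvector-C1}, there is a bijection between singular eigenvectors of the Gaudin Hamiltonians on the three sides that matches eigenvalues; in particular this bijection identifies the joint eigenspace decompositions of $\xov H^i[m]$ and of the corresponding $H^i[0]$ on the $\G[0]$ side. Hence diagonalizability of $\xov H^i[m]$ on the singular subspace is equivalent to diagonalizability of $H^i[0]$ on the singular subspace of the image tensor product in $\OO[0]$.

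Finally, I would apply the truncation functor on the Lie-algebra side to reduce to a finite-rank $\cG^\xx[0]_k$. Under the hypothesis that the original weights are polynomial / lie in $\ov{\mc Q}^{\xx,I}(m|n)$ or $\ov{\mc Q}^{\xx,II}(m|n)$, the image modules become \emph{finite-dimensional} irreducibles over the classical Lie algebra $\cG^\xx[0]_k$ (for $k$ large enough), so Rybnikov's theorem \cite{Ry} yields a basis of joint eigenvectors of the Gaudin Hamiltonians on the singular subspace for generic $z_1,\ldots,z_\ell$. Pulling this eigenbasis back through $T_{[0]}$, $\ov T_{[m]}^{-1}$, and the truncation on the $\SG[m]$ side produces the required eigenbasis of $\xov H^i[m]_n$ on $M^{\rm sing}$, and makes explicit the asserted relationship to eigenvectors on the Lie-algebra side. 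The main obstacle I expect is bookkeeping: one must verify carefully that the functors $T_{[0]}$, $\ov T_{[m]}$, and the truncation functors not only preserve modules and intertwine the Gaudin operators, but that the singular-eigenvector bijection of \thmref{eigenvector-C1} transports an \emph{entire eigenbasis} (not merely individual eigenvectors), and that the finiteness needed for Rybnikov's theorem persists after truncation on the Lie-algebra side.
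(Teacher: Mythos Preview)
Your plan is essentially the paper's own proof of \thmref{dom-uni-1}: pass from $\ovcG^\xx[m]_n$ to $\SG^\xx[m]_n$ via $\iota$ (\propref{iota_corresp}), lift through the truncation functor to $\SG^\xx[m]$ (\propref{trun_eigenvector}), apply super duality $\ov T_{[m]}^{-1}$ followed by $T_{[0]}$ (\thmref{eigenvector-C1}), truncate on the Lie-algebra side to some $\cG^\xx[0]_k$, and invoke Rybnikov. For type $\mf a$ and for weights in $\ov{\mc Q}^{\xx,I}(m|n)$ this is exactly right.

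There is, however, a genuine gap in your treatment of the type~II case. You assert that ``the hypotheses on the highest weights \ldots are precisely what guarantees that such a lift exists,'' meaning a lift of $\ov L_i$ to a module in $\ov\OO[m]$. This is false for $\xi_i\in\ov{\mc Q}^{\xx,II}(m|n)$. Concretely, if $\xi_i=\ov\la_i+d_i\mathbbm{1}_{m|n}$, then after transporting through $\iota$ the coefficient of $\ep_{j-1/2}$ in the would-be $\SG^\xx[m]_n$-highest weight is $\langle(\la_i)'_j-m\rangle-d_i-d$, which is negative for $j$ large, so the weight does not lie in $\ov\Xi[m]_n$ and the module is not in $\ov\OO[m]_n$; no lift to $\ov\OO[m]$ exists, and the chain $\ov T_{[m]}^{-1}\circ T_{[0]}$ cannot be applied.

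The paper handles this with an extra step you omit: the isomorphism $\widehat\varphi$ of \lemref{lem:phi} and \lemref{lem:phi_ext}, which identifies $\ovcG^{\mf c}[m]_n$ with $\cG^{\mf d}[m]_n$ (and $\ovcG^{\mf d}[m]_n$ with $\cG^{\mf c}[m]_n$). Under this identification a highest weight in $\ov{\mc Q}^{\mf c,II}(m|n)$ corresponds to one in ${\mc Q}^{\mf d,I}(m|n)$ (see the proof of \propref{prop: U_typeI_II}(ii)), and one is reduced to the type~I situation for the \emph{other} letter $\mf c\leftrightarrow\mf d$. That is precisely the content of \thmref{dom-uni-2}. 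Once you insert this $\widehat\varphi$-step for the type~II weights, your outline becomes the paper's proof.
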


\begin{rem} 
The $\ov{L}_i$'s in the above theorem are unitarizable modules over $\ovcG^\xx[m]_n$ with respect to a natural choice of $*$-structures on $\ovcG^\xx[m]_n$ (see Sections \ref{unitarizable-a} and \ref{unitarizable-cd}). The theorem should also give a procedure of constructing common eigenvectors of the Gaudin Hamiltonians for classical Lie (super)algebras from the ones for the corresponding Lie algebras.

\end{rem}

While there is a great deal of work on Gaudin models, not much appears to be known about the super Gaudin Hamiltonians. In \cite{MVY}, Mukhin, Vicedo and Young investigate the Gaudin Hamiltonians associated to the general linear Lie superalgebra $\gl(m|n)$ and show that the operators are diagonalizable on the space spanned by singular vectors in the tensor power of the natural module $\C^{m|n}$ with an eigenbasis formed by Bethe vectors. In \cite{KuM}, Kulish and Manojlovi\'{c} explicitly construct eigenvectors for the Gaudin Hamiltonians associated to the ortho-symplectic Lie superalgebra $\mf{osp}(1|2)$. The reader may also want to consult \cite{HMVY, L, LM} for related results. We hope that this paper can provide new insights into Gaudin models.

This paper is organized as follows. In Section \ref{pre}, we fix notation and review some background materials on classical Lie (super)algebras and their central extensions. We also define the parabolic BGG categories $\w\OO$, $\OO[m]$ and $\ov\OO[m]$ associated to the Lie (super)algebras $\DG$, $\G[m]$ and $\SG[m]$, respectively, and discuss super duality which gives equivalences of these categories. In Section \ref{unitarizable}, we give a brief introduction to $*$-structures and study the unitarizable modules which will be considered in our study of Gaudin Hamiltonians. In Section \ref{GH}, we investigate the Gaudin Hamiltonians associated to the central extensions of finite-rank and infinite-rank Lie (super)algebras, and establish the one-to-one correspondences between the sets of singular eigenvectors (see \thmref{eigenvector-C1} and \propref{trun_eigenvector}). In Section \ref{diag}, we concentrate on the Gaudin Hamiltonians for finite-dimensional classical Lie (super)algebras and prove \thmref{dom-uni-1} and \thmref{dom-uni-2}, which give an affirmative answer to the diagonalization of the operators on unitarizable modules and a possible way of finding common eigenvectors.

\vskip 0.3cm
\noindent{\bf Notations.} Throughout the paper, $\N$ stands for the set of positive integers, $\Z$ for the set of integers, $\Z^*$ for the set of nonzero integers, $\Z_+$ for the set of non-negative integers, $\hf\Z$ for the set of half integers and integers, and $\C$ for the set of complex numbers.  All vector spaces, algebras, tensor products, etc., are over $\C$.

\bigskip

\section{Preliminaries} \label{pre}
 In this section, we first define the finite-rank and infinite-rank Lie (super)algebras $\wcG^\xx$, $\cG^\xx[m]_n$ and $\ovcG^\xx[m]_n$, where $\xx$ denotes one of the three types $\mf{a,c,d}$.  We consider their central extensions $\wt\G^\xx$, $\G^\xx[m]_n$ and ${\SG}^\xx[m]_n$ and the parabolic BGG category $\w\OO^\xx$ (resp., $\OO^\xx[m]_n$ and $\ov\OO^\xx[m]_n$) of modules over $\wt\G^\xx$  (resp., $\G^\xx[m]_n$ and ${\SG}^\xx[m]_n$).
We then recall the truncation functors which relate $\OO^\xx[m]_\infty$ and $\ov\OO^\xx[m]_\infty$ to $\OO^\xx[m]_n$ and $\ov\OO^\xx[m]_n$, respectively, for $n \in \N$.
Finally, we describe the tensor functors $T_{[m]}$ and $\ov T_{[m]}$ and their properties. We refer the readers to \cite[Sections 2 and 3]{CL10} for type $\mf{a}$ and \cite[Sections 2 and 3]{CLW11} for types $\mf{c,d}$
for details (see also \cite[Sections 6.1 and 6.2]{CW} and \cite[Section 2.4]{CaL}).
{\bf  We fix $m\in\Z_+$ and $n \in \N\cup\{\infty\}$ throughout this paper.}

Let $\w{V}$ denote the superspace over $\C$ with ordered basis $\{v_r \,|\, r \in\hf\Z\}$. The parity of $v_r$ is defined as follows: $|v_r|=\bar{0}$ if $r\in\Z$, and $|v_r|=\bar{1}$ if $r\in\hf+\Z$.

Let $\gl(\w{V})$ be the Lie superalgebra consisting of all linear endomorphisms on $\w{V}$ which vanish on all but finitely many $v_r$'s. For $i, j \in \hf\Z$, we let $E_{i, j}$ be the linear endomorphism on $\w{V}$ defined by
$$E_{i, j}(v_r)=\delta_{j,r} v_i  \qquad \textrm{for }r \in \hf\Z,$$
where $\delta$ is the Kronecker delta. The Lie superalgebra $\gl(\w{V})$ is spanned by $E_{i, j}$ with $i, j \in \hf\Z$.

The Lie superalgebra $\gl(\w{V})$ has a central extension, denoted by $\widehat{\gl}(\w{V})$, by the one-dimensional center $\C K$  corresponding to the following $2$-cocycle (cf. \cite[p. 99]{CL03}):
\begin{align}\label{central extension}
\tau(A, B):=\text{Str}([J,A]B),\qquad \hbox{$A,B\in \gl(\w{V})$,}
\end{align}
where $J=-\sum_{r\ge \hf} E_{r,r}$ and $\text{Str}$ denotes the supertrace. In fact, the cocycle $\tau$ is a coboundary. Moreover, there is an isomorphism $\iota$ from the direct sum $\gl(\w{V})\oplus \C K$ of Lie superalgebras to $\widehat{\gl}(\w{V})$ defined by
\begin{equation}\label{iso}
  \iota(A)=A+\mbox{Str}(JA)K,\quad\hbox{for $A\in\gl(\w{V})$, \quad and \quad $\iota(K)=K$.}
\end{equation}

\begin{rem}
The notion of central extensions is convenient for the formulation of truncation functors and super duality described in Section \ref{category-O}. The reader is encouraged to see \cite[Remark 3.3]{CLW11} for more explanations.
  \end{rem}

Let
\begin{align*}
{\mathbb J}_m(n) &=
\left\{\pm\hf, \pm \frac{3}{2}, \ldots, \pm(m-\hf) \right\}\cup\{0\}\cup \setc*{\pm j }{ j\in\N, j<n+1},\\
\ov{{\mathbb J}}_m(n) &=
\{ \pm{1},\ldots,\pm{m}\}\cup\{0\}\cup \setc[\Big]{\pm(j-\hf)}{j\in\N, j<n+1 },\\
\w{\mathbb J}(n)&=\setc[\Big]{r \in \hf \Z }{ -n \leq r \leq n }, \\
{{\mathbb J}}^\times_m(n) &={\mathbb J}_m(n)\backslash\{0\}, \qquad \ov{{\mathbb J}}^\times_m(n) = \ov{\mathbb J}_m(n) \backslash\{0\}, \qquad \w{\mathbb J}^\times(n)=\w{\mathbb J}(n))\backslash\{0\}, \\
{{\mathbb J}}^+_m(n)&=\setc*{r \in {\mathbb J}_m(n)}{r > 0 }, \\
\ov{{\mathbb J}}^+_m(n)&=\setc*{r \in \ov{\mathbb J}_m(n) }{ r > 0 },\\
\w{{\mathbb J}}^+(n)&=\setc*{r \in \w{{\mathbb J}}(n) }{ r > 0 }.
\end{align*}
We let $\wt{V}(n)$, $V_m(n)$, $\ov{V}_{\! m}(n)$, $\wt{V}^\times(n)$, $V^\times_m(n)$ and $\ov{V}^\times_{\! \! m}(n)$ be the subspaces
of $\wt{V}$ with ordered bases $\{v_i\}$ indexed by $\w{\mathbb J}(n)$, ${\mathbb J}_m(n)$, $\ov{{\mathbb J}}_m(n)$, $\w{\mathbb J}^\times(n)$, ${\mathbb J}^\times_m(n)$ and $\ov{{\mathbb J}}^\times_m(n)$, respectively.
This gives rise to subalgebras $\gl(\wt{V}(n))$, $\gl(V_m(n))$, $\gl(\ov{V}_{\!  m}(n))$, $\gl(\wt{V}^\times(n))$,
$\gl(V^\times_m(n))$ and $\gl(\ov{V}^\times_{\! \! m}(n))$ of the Lie superalgebra $\gl(\w{V})$. Let $\widehat{\bf b}:=\bigoplus_{r\le s, r,s\in \hf\Z}\C E_{r,s}$ denote the standard Borel subalgebra of $\gl(\w{V})$.

We will drop the symbol $(n)$ if $n=\infty$. For example, ${{\mathbb J}}_m:={{\mathbb J}}_m(\infty)$.
Define the total orders of ${\mathbb J}_m$ and $\ov{{\mathbb J}}_m$ by
\begin{align*}
\ldots&<_{{\mathbb J}_m}-3<_{{\mathbb J}_m}-2<_{{\mathbb J}_m}-1<_{{\mathbb J}_m}-(m-\hf)<_{{\mathbb J}_m}\ldots<_{{\mathbb J}_m}-\frac{3}{2}<_{{\mathbb J}_m}-\hf\\
&<_{{\mathbb J}_m}0<_{{\mathbb J}_m}\hf<_{{\mathbb J}_m}\frac{3}{2}<_{{\mathbb J}_m}\ldots<_{{\mathbb J}_m}{m}-\hf<_{{\mathbb J}_m}1<_{{\mathbb J}_m}2<_{{\mathbb J}_m}3<_{{\mathbb J}_m}\ldots
 \end{align*}
and
 \begin{align*}
\ldots&<_{\ov{\mathbb J}_m}-\frac{5}{2}<_{\ov{\mathbb J}_m}-\frac{3}{2}<_{\ov{\mathbb J}_m}-\hf<_{\ov{\mathbb J}_m}-m<_{\ov{\mathbb J}_m}\ldots<_{\ov{\mathbb J}_m}-2<_{\ov{\mathbb J}_m}-1\\
&<_{\ov{\mathbb J}_m}0<_{\ov{\mathbb J}_m}1<_{\ov{\mathbb J}_m}2<_{\ov{\mathbb J}_m}\ldots<_{\ov{\mathbb J}_m}{m}<_{\ov{\mathbb J}_m}\hf<_{\ov{\mathbb J}_m}\frac{3}{2}<_{\ov{\mathbb J}_m}\frac{5}{2}<_{\ov{\mathbb J}_m}\ldots,
 \end{align*}
respectively. The orderings give the standard Borel subalgebras
$$
{\bf b}[m]:=\bigoplus_{\substack{r\le_{{\mathbb J}_m} s,\\ r,s\in{{\mathbb J}_m}}}\C E_{r,s}\quad \hbox{ and}\quad \ov{{\bf b}}[m]:=\bigoplus_{\substack{r\le_{\ov{\mathbb J}_m} s,\\ r,s\in {\ov{\mathbb J}_m}}}\C E_{r,s}
$$
 of $\gl(V_m)$ and $\gl(\ov{V}_{\! m})$, respectively.

\subsection{General linear superalgebra $\wcG^{\mf{a}}$} \label{type-a}
Let $\w{V}^+(n)$, ${V}^+_m(n)$ and $\ov{V}^+_{\! \! m}(n)$ be the subspaces of $\w{V}$ with ordered bases $\{v_i\}$ indexed by $\w{\mathbb J}^+(n)$, ${\mathbb J}^+_m(n)$ and $\ov{\mathbb J}^+_m(n)$, respectively. Let $\wcG^{\mf{a}}_n$ denote the Lie subalgebra of $\gl(\wt{V})$ with basis $\{E_{i,j}\,|\, i,j\in \w{\J}^+(n)\}$. We denote $\wcG^{\mf{a}}:=\wcG^{\mf{a}}_\infty$.
Let $\cG^{\mf{a}}[m]_n$ and $\ovcG^{\mf{a}}[m]_n$ denote the Lie subalgebras of $\wcG^{\mf{a}}$
with bases $\{E_{i,j}\,|\, i,j\in {\J}^+_m(n)\}$ and $\{E_{i,j}\,|\, i,j\in \ov{\J}^+_m(n)\}$, respectively. The Lie (super)algebras $\cG^{\mf{a}}[m]_n$ and $\ovcG^{\mf{a}}[m]_n$ are isomorphic to $\gl(m | n)$. Let
$\w{\bf b}^{\mf{a}}:=\wcG^{\mf{a}}\cap\widehat{\bf b}$, ${\bf b}^{\mf{a}}[m]_n:=\cG^{\mf{a}}[m]_n \cap {\bf b}[m]$ and $\ov{\bf b}^{\mf{a}}[m]_n:=\ovcG^{\mf{a}}[m]_n \cap \ov{{\bf b}}[m]$ stand for the standard Borel subalgebras of $\wcG^{\mf{a}}$, $\cG^{\mf{a}}[m]_n$ and $\ovcG^{\mf{a}}[m]_n$, respectively. The corresponding Cartan subalgebras $\w{\bf h}^{\mf{a}}$, ${\bf h}^{\mf{a}}[m]_n$ and $\ov{\bf h}^{\mf{a}}[m]_n$ have bases $\{E^{\mf{a}}_i:=E_{i,i}\,|\, i\in \hf \N\}$, $\{E^{\mf{a}}_i \,|\, i\in {\J}^+_m(n)\}$ and $\{E^{\mf{a}}_i \,|\, i\in \ov{\J}^+_m(n)\}$, respectively. Let $\{\epsilon_i\}$ denote the dual bases of the Cartan subalgebras with the corresponding indices.

\subsection{Ortho-symplectic superalgebra $\wcG^{\mf{c}}$ and its subalgebras}\label{type-bd}\label{type-c}
Define a non-degenerate skew-supersymmetric bilinear form $(\cdot|\cdot)$ on $\w{V}^\times$ by
\begin{align}
(v_i |v_j)&=-(v_j |v_i)={\rm sgn}(i) \delta_{i, -j}, \quad & i, j \in \Z^*; \label{skew-sym-1}\\
(v_r |v_s)&=(v_s |v_r)=\delta_{r,-s}, \quad & r, s \in \hf +\Z; \label{skew-sym-2}\\
(v_i |v_r)&=(v_r |v_i)=0, \quad & i \in \Z^*, r \in \hf +\Z; \label{skew-sym-3}
\end{align}
where ${\rm sgn }(i)=1$ if $i >0$ and ${\rm sgn }(i)=-1$ if $i<0$.
The bilinear form induces non-degenerate bilinear forms on $V^\times_m(n)$ and $\ov{V}^\times_{\! \! m}(n)$.

Let $\wcG^{\mf{c}}_n$ (resp., $\cG^{\mf{c}}[m]_n$ and $\ovcG^{\mf{c}}[m]_n$) be the subalgebra of the Lie superalgebra $\gl(\w{V}^\times(n))$ (resp., $\gl({V}^\times_{m}(n))$ and $\gl(\ov{V}^\times_{\! \! m}(n))$) which preserves the bilinear form $(\cdot|\cdot)$. The Lie superalgebra $\wcG^{\mf{c}}:=\wcG^{\mf{c}}_\infty$ is spanned by the following elements ($i, j \in \Z^*$ and $r, s \in \hf+\Z$):
\begin{align*}
E^{\mf{c}}_{i, j} &:=-E^{\mf{c}}_{-j, -i}:=E_{i,j}-E_{-j, -i}, \quad & ij>0; \\
E^{\mf{c}}_{i, j} &:=E^{\mf{c}}_{-j, -i}:=E_{i,j}+E_{-j, -i}, \quad & ij<0; \\
E^{\mf{c}}_{r, s}&:=-E^{\mf{c}}_{-s, -r}:=E_{r,s}- E_{-s, -r}; \\
E^{\mf{c}}_{i, r} &:= E^{\mf{c}}_{-r, -i}:=E_{i,r}+ E_{-r, -i}, \quad & i>0; \\
E^{\mf{c}}_{i, r} &:= -E^{\mf{c}}_{-r, -i}:=E_{i,r}- E_{-r, -i}, \quad & i<0.
\end{align*}
The subalgebras $\cG^{\mf{c}}[m]_n$ and ${\ovcG}^{\mf{c}}[m]_n$ of $\wcG^{\mf{c}}$ are spanned by $E^{\mf{c}}_{i, j}$ with $i, j \in \J^\times_m(n)$ and $\ov\J^\times_m(n)$, respectively. Note that $\cG^{\mf{c}}[m]_n$ is isomorphic to $\osp(2m | 2n)$ while $\ovcG^{\mf{c}}[m]_n$ is isomorphic to $\spo(2m | 2n)$.

We let
$\w{\bf b}^{\mf{c}}:=\wcG^{\mf{c}} \cap\widehat{\bf b}$, ${\bf b}^{\mf{c}}[m]_n:=\cG^{\mf{c}}[m]_n \cap {\bf b}[m]$ and $\ov{\bf b}^{\mf{c}}[m]_n:=\ovcG^{\mf{c}}[m]_n \cap \ov{{\bf b}}[m]$ stand for the standard Borel subalgebras of $\wcG^{\mf{c}}$, $\cG^{\mf{c}}[m]_n$ and $\ovcG^{\mf{c}}[m]_n$, respectively. The corresponding Cartan subalgebras $\w{\bf h}^{\mf{c}}$, ${\bf h}^{\mf{c}}[m]_n$ and $\ov{\bf h}^{\mf{c}}[m]_n$ have bases $\{E^\mf{c}_i:=E_{i,i}-E_{-i,-i} \,|\, i\in \hf \N \}$, $\{E^\mf{c}_i\,|\, i\in {\J}^+_m(n)\}$ and $\{E^\mf{c}_i\,|\, i\in \ov{\J}^+_m(n)\}$, respectively. Let $\{\epsilon_i\}$ denote the dual bases of the Cartan subalgebras with the corresponding indices.

\subsection{Ortho-symplectic superalgebra $\wcG^{\mf{d}}$ and its subalgebras}\label{type-d}
Define a non-degenerate supersymmetric bilinear form $(\cdot|\cdot)$ on $\w{V}^\times$ by
\begin{align}
(v_i |v_j)&=(v_j |v_i)= \delta_{i, -j}, \quad & i, j \in \Z^*; \label{sym-1}\\
(v_r |v_s)&=-(v_s |v_r)={\rm sgn}(r)\delta_{r,-s}, \quad & r, s \in \hf +\Z; \label{sym-2}\\
(v_i |v_r)&=(v_r |v_i)=0, \quad & i \in \Z^*, r \in \hf +\Z. \label{sym-3}
\end{align}
The bilinear form induces non-degenerate bilinear forms on $V^\times_{m}(n)$ and $\ov{V}^\times_{\! \! m}(n)$.

Let $\wcG^{\mf{d}}_n$ (resp., $\cG^{\mf{d}}[m]_n$ and ${\ovcG}^{\mf{d}}[m]_n$) be the subalgebra of the Lie superalgebra $\gl(\w{V}^\times(n))$ (resp., $\gl({V}^\times_{m}(n))$ and $\gl(\ov{V}^\times_{\! \! m}(n))$) which preserves the bilinear form $(\cdot|\cdot)$.
The Lie superalgebra $\wcG^{\mf{d}}:=\wcG^{\mf{d}}_\infty$ is spanned by the following elements ($i, j \in \Z^*$ and $r, s \in \hf+\Z$):
\begin{align*}
E^{\mf{d}}_{i, j} &:=-E^{\mf{d}}_{-j, -i}:=E_{i, j}-E_{-j, -i}; \\
E^{\mf{d}}_{r, s}&:=-E^{\mf{d}}_{-s, -r}:=E_{r, s}- E_{-s, -r},  \quad & rs>0; \\
E^{\mf{d}}_{r, s}&:=E^{\mf{d}}_{-s, -r}:=E_{r, s}+ E_{-s, -r},  \quad & rs <0; \\
E^{\mf{d}}_{i, r} &:= E^{\mf{d}}_{-r, -i}:=E_{i, r}+ E_{-r, -i}, \quad & r>0; \\
E^{\mf{d}}_{i, r} &:= -E^{\mf{d}}_{-r, -i}:=E_{i, r}- E_{-r, -i}, \quad & r<0.
\end{align*}
The subalgebras $\cG^{\mf{d}}[m]_n$ and ${\ovcG}^{\mf{d}}[m]_n$ of $\wcG^{\mf{d}}$ are spanned by $E^{\mf{d}}_{i, j}$ with $i, j \in \J^\times_m(n)$ and $\ov\J^\times_m(n)$, respectively. Note that $\cG^{\mf{d}}[m]_n$ is isomorphic to $\spo(2m | 2n)$ while $\ovcG^{\mf{d}}[m]_n$ is isomorphic to $\osp(2m | 2n)$.

We let
$\w{\bf b}^{\mf{d}}:=\wcG^{\mf{d}} \cap\widehat{\bf b}$, ${\bf b}^{\mf{d}}[m]_n:=\cG^{\mf{d}}[m]_n \cap {\bf b}[m]$ and $\ov{\bf b}^{\mf{d}}[m]_n:=\ovcG^{\mf{d}}[m]_n \cap \ov{{\bf b}}[m]$ stand for the standard Borel subalgebras of $\wcG^{\mf{d}}$, $\cG^{\mf{d}}[m]_n$ and $\ovcG^{\mf{d}}[m]_n$, respectively. The corresponding Cartan subalgebras $\w{\bf h}^{\mf{d}}$, ${\bf h}^{\mf{d}}[m]_n$ and $\ov{\bf h}^{\mf{d}}[m]_n$ have bases $\{E^\mf{d}_i:=E_{i,i}-E_{-i,-i} \,|\, i\in \hf \N \}$, $\{E^\mf{d}_i\,|\, i\in {\J}^+_m(n)\}$ and $\{E^\mf{d}_i\,|\, i\in \ov{\J}^+_m(n)\}$, respectively. Let $\{\epsilon_i\}$ denote the dual bases of the Cartan subalgebras with the corresponding indices.

Define a linear automorphism $\varphi$ of degree $\ov 1$  on the superspace $\w V^\times$ by
\begin{equation}\label{phi}
  \varphi(v_{\pm r}):=\left\{
                       \begin{array}{ll}
                          v_{\pm (r-\hf)}, & \hbox{if $r\in \N$;} \\
                          v_{\pm (r+\hf)}, & \hbox{if $r\in \hf+\Z_+$.}
                       \end{array}
                     \right.
\end{equation}
The automorphism $\varphi$ induces an automorphism $\widehat{\varphi}$ on the Lie superalgebra $\gl(\wt{V}^\times)$. The supersymmetric bilinear form on $\w V^\times$ defined by \eqref{sym-1}, \eqref{sym-2} and \eqref{sym-3} is exactly the bilinear form induced, via ${\varphi}$, by the skew-supersymmetric bilinear form on $\w V^\times$ defined by \eqref{skew-sym-1}, \eqref{skew-sym-2} and \eqref{skew-sym-3}.
The restriction of $\widehat{\varphi}$ to $\wcG^{\mf{c}}$ gives an isomorphism from $\wcG^{\mf{c}}$ to $\wcG^{\mf{d}}$ and hence an isomorphism from $\cG^{\mf{c}}[m]_n$ (resp., $\ovcG^{\mf{c}}[m]_n$) to $\ovcG^{\mf{d}}[m]_n$ (resp., $\cG^{\mf{d}}[m]_n$). It is evident that $\widehat{\varphi}$ preserves the corresponding Borel and Cartan subalgebras. The restrictions of $\widehat{\varphi}$ are denoted by $\widehat{\varphi}$ as well.
We summarize the results in the following lemma.

\begin{lem}\label{lem:phi}
There is an isomorphism $\widehat{\varphi}$ from $\wcG^{\mf{c}}$ to $\wcG^{\mf{d}}$ given by
\begin{equation}\label{phi_cd}
\widehat{\varphi}(E^{\mf{c}}_{r, s})=
\begin{cases}
E^{\mf{d}}_{r-\hf, s-\hf},\quad\text{if } r,s\in \N\cup(-(\hf+\Z_+));\\
E^{\mf{d}}_{r+\hf, s+\hf},\quad\text{if } r,s\in -\N\cup(\hf+\Z_+);\\
E^{\mf{d}}_{r-\hf, s+\hf},\quad\text{if } r\in \N\cup(-(\hf+\Z_+)),s\in -\N\cup(\hf+\Z_+);\\
E^{\mf{d}}_{r+\hf, s-\hf},\quad\text{if } r\in -\N\cup(\hf+\Z_+), s\in \N\cup(-(\hf+\Z_+)).
\end{cases}
\end{equation}
The restrictions of $\widehat{\varphi}$ to $\cG^{\mf{c}}[m]_n$ and $\ovcG^{\mf{c}}[m]_n$ give isomorphisms $\widehat{\varphi}: \cG^{\mf{c}}[m]_n\longrightarrow \ovcG^{\mf{d}}[m]_n$ and $\widehat{\varphi}: \ovcG^{\mf{c}}[m]_n \longrightarrow\cG^{\mf{d}}[m]_n$, respectively.
\end{lem}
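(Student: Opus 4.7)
The plan is to deduce the isomorphism from the intertwining of the two bilinear forms noted just before the lemma, then verify the explicit formula \eqref{phi_cd} by a case-by-case computation on spanning elements, and finally derive the restrictions from the action of $\varphi$ on the index sets.

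First I would establish the isomorphism $\widehat{\varphi}: \wcG^{\mf{c}} \to \wcG^{\mf{d}}$ conceptually. Since $\varphi$ is an invertible (odd) linear automorphism of $\w V^\times$, the map $A \mapsto \varphi A \varphi^{-1}$ is automatically a Lie superalgebra automorphism of $\gl(\w V^\times)$. By the assertion preceding the lemma, the supersymmetric bilinear form on $\w V^\times$ is the $\varphi$-pullback of the skew-supersymmetric one, so an operator $A$ preserves the skew-supersymmetric form if and only if $\widehat\varphi(A)$ preserves the supersymmetric form. Consequently $\widehat\varphi$ carries $\wcG^{\mf{c}}$ isomorphically onto $\wcG^{\mf{d}}$; this step is essentially formal.

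Next I would verify \eqref{phi_cd} directly. The definition \eqref{phi} shows that $\varphi$ sends each $v_a$ to $v_{a'}$ for a unique $a'$, namely $a' = a - \hf$ if $a \in \N \cup (-(\hf + \Z_+))$ and $a' = a + \hf$ if $a \in -\N \cup (\hf + \Z_+)$; hence $\widehat\varphi(E_{a,b}) = E_{a',b'}$. For each of the four cases in \eqref{phi_cd}, I would select the appropriate decomposition $E^{\mf{c}}_{r,s} = E_{r,s} \pm E_{-s,-r}$ from Section \ref{type-c} (the sign depending on whether each of $r, s$ is an integer or a half-integer and on its sign), apply $\widehat\varphi$ termwise using the rule above, and then rewrite the result in the $E^{\mf{d}}$-notation of Section \ref{type-d}, invoking the defining relations $E^{\mf{d}}_{\alpha,\beta} = \pm E^{\mf{d}}_{-\beta,-\alpha}$ as needed. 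The four cases of \eqref{phi_cd} correspond precisely to the four possibilities for the $\pm\hf$ shifts applied to $r$ and $s$ independently.

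Finally I would handle the restrictions. The definition of $\varphi$ in \eqref{phi}, together with the index sets recorded in Section \ref{pre}, yields bijections $\J^\times_m(n) \leftrightarrow \ov\J^\times_m(n)$: positive integers in one index set pair with positive half-integers in the other via the $\pm\hf$ shift, and similarly for negative indices. Hence $\varphi$ maps $V^\times_m(n)$ bijectively onto $\ov V^\times_m(n)$ (and vice versa), and $\widehat\varphi$ identifies $\gl(V^\times_m(n))$ with $\gl(\ov V^\times_m(n))$. Combining this with the global isomorphism and the definitions $\cG^{\mf{c}}[m]_n = \wcG^{\mf{c}} \cap \gl(V^\times_m(n))$ and $\ovcG^{\mf{d}}[m]_n = \wcG^{\mf{d}} \cap \gl(\ov V^\times_m(n))$ (and their analogues) gives the two restricted isomorphisms. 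The main obstacle is purely bookkeeping in the second step: one must track two independent binary choices (integer vs.\ half-integer, positive vs.\ negative) for each of $r, s$, and match the resulting signs with those in the definitions of $E^{\mf{c}}$ and $E^{\mf{d}}$ in order to land on precisely the element named in \eqref{phi_cd} rather than its negative or an $E^{\mf{d}}_{-\beta,-\alpha}$ variant.
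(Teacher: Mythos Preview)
Your proposal is correct and follows essentially the same approach as the paper: the paper establishes the isomorphism in the paragraph preceding the lemma by noting that $\varphi$ induces $\widehat\varphi$ on $\gl(\w V^\times)$ and that the two bilinear forms are intertwined by $\varphi$, then derives the restrictions from the action on index sets, with the lemma itself presented as a summary. Your only addition is a more explicit outline of the case-by-case verification of \eqref{phi_cd}, which the paper leaves implicit.
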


\subsection{Dynkin diagrams} \label{Dynkin}
\label{dynkin}
Consider the free abelian group with basis
$\{\ep_r \, | \, r \in \hf\N \}$. It is endowed with a symmetric bilinear form $(\cdot,\cdot)$ defined by
$$
(\ep_r,\ep_s)=(-1)^{2r}\delta_{r,s}, \qquad r,s \in \hf\N.
$$
The parity of $\ep_r$ is defined as follows: $|\ep_r|=0$ for $r \in \N$ and $|\ep_r|=1$ for $r \in \frac{1}{2}+\Z_+$.
Let
$$
\alpha_{\times}=\ep_{m}-\ep_{\frac{1}{2}}, \quad \alpha_{r}=\ep_{r}-\ep_{r+\frac{1}{2}},\quad \beta_{r}=\ep_{r}-\ep_{r+1},\quad r\in \hf\N.
$$

For $\xx =\mf{a, c, d}$, the Dynkin diagrams of the Lie superalgebras $\wcG^{\xx}_n$ and $\ovcG^{\xx}[m]_n$ (where $m \in \N$) together with prescribed fundamental systems are listed below (cf. \cite[Section 2.5]{K}). In what follows, $\bigcirc$ and $\bigotimes$ denote an even simple root and an odd isotropic simple root, respectively.

\begin{center}
\hskip -3cm \setlength{\unitlength}{0.16in} \medskip
\begin{picture}(24,3)
\put(-.5,1.4){{\ovalBox(1.8,1.5){$\wcG^{\mf a}_n$}}}

\put(5.7,2){\makebox(0,0)[c]{$\bigotimes$}}
\put(8,2){\makebox(0,0)[c]{$\bigotimes$}}
\put(10.4,2){\makebox(0,0)[c]{$\bigotimes$}}
\put(14.85,2){\makebox(0,0)[c]{$\bigotimes$}}
\put(17.25,2){\makebox(0,0)[c]{$\bigotimes$}}

\put(6,2){\line(1,0){1.55}}
\put(8.4,2){\line(1,0){1.55}} \put(10.82,2){\line(1,0){0.8}}
\put(13.2,2){\line(1,0){1.2}} \put(15.28,2){\line(1,0){1.45}}

\put(12.5,1.95){\makebox(0,0)[c]{$\cdots$}}

\put(5.5,1){\makebox(0,0)[c]{\tiny $\alpha_{1/2}$}}
\put(8,1){\makebox(0,0)[c]{\tiny $\alpha_{1}$}}
\put(10.4,1){\makebox(0,0)[c]{\tiny $\alpha_{3/2}$}}
\put(14.8,1){\makebox(0,0)[c]{\tiny $\alpha_{n-1}$}}
\put(17.15,1){\makebox(0,0)[c]{\tiny $\alpha_{n-1/2}$}}
\end{picture}
\end{center}

\begin{center}
\hskip -3cm \setlength{\unitlength}{0.16in}
\begin{picture}(24,4)

\put(-.5,1.4){{\ovalBox(1.8,1.5){$\wcG^{\mf c}_n$}}}

\put(8,2){\makebox(0,0)[c]{$\bigotimes$}}
\put(10.4,2){\makebox(0,0)[c]{$\bigotimes$}}
\put(14.85,2){\makebox(0,0)[c]{$\bigotimes$}}
\put(17.25,2){\makebox(0,0)[c]{$\bigotimes$}}

\put(6,3.8){\makebox(0,0)[c]{$\bigotimes$}}
\put(6,.3){\makebox(0,0)[c]{$\bigotimes$}}
\put(8.4,2){\line(1,0){1.55}} \put(10.82,2){\line(1,0){0.8}}
\put(13.2,2){\line(1,0){1.2}} \put(15.28,2){\line(1,0){1.45}}

\put(12.5,1.95){\makebox(0,0)[c]{$\cdots$}}

\put(7.5,2.2){\line(-1,1){1.3}}
\put(7.6,1.8){\line(-1,-1){1.25}}
\put(6,.8){\line(0,1){2.6}}

\put(4.5,3.8){\makebox(0,0)[c]{\tiny $\alpha_{1/2}$}}
\put(3.8,0.3){\makebox(0,0)[c]{\tiny $-\epsilon_{1/2}-\epsilon_{1}$}}

\put(8.1,1){\makebox(0,0)[c]{\tiny $\alpha_{1}$}}
\put(10.9,1){\makebox(0,0)[c]{\tiny $\alpha_{3/2}$}}
\put(14.8,1){\makebox(0,0)[c]{\tiny $\alpha_{n-1}$}}
\put(17.15,1){\makebox(0,0)[c]{\tiny $\alpha_{n-1/2}$}}

\end{picture}
\end{center}

\begin{center}
\hskip -3cm \setlength{\unitlength}{0.16in}
\begin{picture}(24,3)
\put(-.5,1.4){{\ovalBox(1.8,1.5){$\wcG^{\mf d}_n$}}}

\put(5.55,2){\makebox(0,0)[c]{$\bigcirc$}}
\put(8,2){\makebox(0,0)[c]{$\bigotimes$}}
\put(10.4,2){\makebox(0,0)[c]{$\bigotimes$}}
\put(14.85,2){\makebox(0,0)[c]{$\bigotimes$}}
\put(17.25,2){\makebox(0,0)[c]{$\bigotimes$}}

\put(8.4,2){\line(1,0){1.55}} \put(10.82,2){\line(1,0){0.8}}
\put(13.2,2){\line(1,0){1.2}} \put(15.28,2){\line(1,0){1.45}}

\put(12.5,1.95){\makebox(0,0)[c]{$\cdots$}}

\put(6,1.8){$\Longrightarrow$}

\put(5.4,1){\makebox(0,0)[c]{\tiny $-2\epsilon_{1/2}$}}

\put(8,1){\makebox(0,0)[c]{\tiny $\alpha_{1/2}$}}
\put(10.5,1){\makebox(0,0)[c]{\tiny $\alpha_{1}$}}
\put(14.8,1){\makebox(0,0)[c]{\tiny $\alpha_{n-1}$}}
\put(17.15,1){\makebox(0,0)[c]{\tiny $\alpha_{n-1/2}$}}

\end{picture}
\end{center}

%%%

\begin{center}
\hskip -3cm \setlength{\unitlength}{0.16in} \medskip
\begin{picture}(24,3)
\put(-.5,1){{\ovalBox(3,2,2){$\ovcG^{\mf a}[m]_n$}}}

\put(5.7,2){\makebox(0,0)[c]{$\bigcirc$}}
\put(8,2){\makebox(0,0)[c]{$\bigcirc$}}
\put(10.4,2){\makebox(0,0)[c]{$\bigcirc$}}
\put(14.85,2){\makebox(0,0)[c]{$\bigcirc$}}
\put(17.25,2){\makebox(0,0)[c]{$\bigotimes$}}
\put(19.4,2){\makebox(0,0)[c]{$\bigcirc$}}
\put(23.9,2){\makebox(0,0)[c]{$\bigcirc$}}

\put(6,2){\line(1,0){1.55}}
\put(8.4,2){\line(1,0){1.55}} \put(10.82,2){\line(1,0){0.8}}
\put(13.2,2){\line(1,0){1.2}} \put(15.28,2){\line(1,0){1.45}}
\put(17.7,2){\line(1,0){1.25}} \put(19.8,2){\line(1,0){0.9}}
\put(22.1,2){\line(1,0){1.4}}

\put(12.5,1.95){\makebox(0,0)[c]{$\cdots$}}
\put(21.5,1.95){\makebox(0,0)[c]{$\cdots$}}

\put(5.5,1){\makebox(0,0)[c]{\tiny $\beta_1$}}
\put(8,1){\makebox(0,0)[c]{\tiny $\beta_2$}}
\put(10.4,1){\makebox(0,0)[c]{\tiny $\beta_3$}}
\put(14.7,1){\makebox(0,0)[c]{\tiny $\beta_{m-1}$}}
\put(17.15,1){\makebox(0,0)[c]{\tiny $\alpha_{\times}$}}
\put(19.5,1){\makebox(0,0)[c]{\tiny $\beta_{1/2}$}}
\put(23.8,1){\makebox(0,0)[c]{\tiny $\beta_{n-3/2}$}}
\end{picture}
\end{center}

\begin{center}
\hskip -3cm \setlength{\unitlength}{0.16in} \medskip
\begin{picture}(24,3)

\put(-.5,1){{\ovalBox(3,2,2){$\ovcG^{\mf c}[m]_n$}}}

\put(5.7,2){\makebox(0,0)[c]{$\bigcirc$}}
\put(8,2){\makebox(0,0)[c]{$\bigcirc$}}
\put(10.4,2){\makebox(0,0)[c]{$\bigcirc$}}
\put(14.85,2){\makebox(0,0)[c]{$\bigcirc$}}
\put(17.25,2){\makebox(0,0)[c]{$\bigotimes$}}
\put(19.4,2){\makebox(0,0)[c]{$\bigcirc$}}
\put(23.9,2){\makebox(0,0)[c]{$\bigcirc$}}

\put(6.85,2){\makebox(0,0)[c]{$\Longrightarrow$}}

\put(8.4,2){\line(1,0){1.55}} \put(10.82,2){\line(1,0){0.8}}
\put(13.2,2){\line(1,0){1.2}} \put(15.28,2){\line(1,0){1.45}}
\put(17.7,2){\line(1,0){1.25}} \put(19.8,2){\line(1,0){0.9}}
\put(22.1,2){\line(1,0){1.4}}

\put(12.5,1.95){\makebox(0,0)[c]{$\cdots$}}
\put(21.5,1.95){\makebox(0,0)[c]{$\cdots$}}

\put(5.5,1){\makebox(0,0)[c]{\tiny$-2\ep_{1}$}}
\put(8,1){\makebox(0,0)[c]{\tiny $\beta_1$}}
\put(10.4,1){\makebox(0,0)[c]{\tiny $\beta_2$}}
\put(14.7,1){\makebox(0,0)[c]{\tiny $\beta_{m-1}$}}
\put(17.15,1){\makebox(0,0)[c]{\tiny $\alpha_{\times}$}}
\put(19.5,1){\makebox(0,0)[c]{\tiny $\beta_{1/2}$}}
\put(23.8,1){\makebox(0,0)[c]{\tiny $\beta_{n-3/2}$}}
\end{picture}
\end{center}

\begin{center}
\hskip -3cm \setlength{\unitlength}{0.16in} \medskip
\begin{picture}(24,4)

\put(-.5,1){{\ovalBox(3,2,2){$\ovcG^{\mf d}[m]_n$}}}

\put(6,0){\makebox(0,0)[c]{$\bigcirc$}}
\put(6,4){\makebox(0,0)[c]{$\bigcirc$}}
\put(8,2){\makebox(0,0)[c]{$\bigcirc$}}
\put(10.4,2){\makebox(0,0)[c]{$\bigcirc$}}
\put(14.85,2){\makebox(0,0)[c]{$\bigcirc$}}
\put(17.25,2){\makebox(0,0)[c]{$\bigotimes$}}
\put(19.4,2){\makebox(0,0)[c]{$\bigcirc$}}
\put(23.9,2){\makebox(0,0)[c]{$\bigcirc$}}

\put(6.35,0.3){\line(1,1){1.35}} \put(6.35,3.7){\line(1,-1){1.35}}
\put(8.4,2){\line(1,0){1.55}} \put(10.82,2){\line(1,0){0.8}}

\put(13.2,2){\line(1,0){1.2}} \put(15.28,2){\line(1,0){1.45}}
\put(17.7,2){\line(1,0){1.25}} \put(19.8,2){\line(1,0){0.9}}
\put(22.1,2){\line(1,0){1.4}}

\put(12.5,1.95){\makebox(0,0)[c]{$\cdots$}}
\put(21.5,1.95){\makebox(0,0)[c]{$\cdots$}}

\put(4.9,3.9){\makebox(0,0)[c]{\tiny$\beta_{1}$}}
\put(4.2,0){\makebox(0,0)[c]{\tiny${-}\ep_{1}{-}\ep_{2}$}}

\put(8,1){\makebox(0,0)[c]{\tiny $\beta_2$}}
\put(10.4,1){\makebox(0,0)[c]{\tiny $\beta_3$}}
\put(14.7,1){\makebox(0,0)[c]{\tiny $\beta_{m-1}$}}
\put(17.15,1){\makebox(0,0)[c]{\tiny $\alpha_{\times}$}}
\put(19.5,1){\makebox(0,0)[c]{\tiny $\beta_{1/2}$}}
\put(23.8,1){\makebox(0,0)[c]{\tiny $\beta_{n-3/2}$}}
\end{picture}
\end{center}

The Dynkin diagram of the Lie algebra $\ovcG^{\mf{a}}[0]_n$ is given below:

\begin{center}
\hskip -3cm \setlength{\unitlength}{0.16in}
\begin{picture}(24,3)

\put(-.5,1){{\ovalBox(3,2,2){$\ovcG^{\mf a}[0]_n$}}}

\put(5.7,2){\makebox(0,0)[c]{$\bigcirc$}}
\put(8,2){\makebox(0,0)[c]{$\bigcirc$}}
\put(10.4,2){\makebox(0,0)[c]{$\bigcirc$}}
\put(14.85,2){\makebox(0,0)[c]{$\bigcirc$}}
\put(17.25,2){\makebox(0,0)[c]{$\bigcirc$}}
\put(19.4,2){\makebox(0,0)[c]{$\bigcirc$}}

\put(6,2){\line(1,0){1.55}}
\put(8.4,2){\line(1,0){1.55}} \put(10.82,2){\line(1,0){0.8}}
\put(13.2,2){\line(1,0){1.2}} \put(15.28,2){\line(1,0){1.45}}
\put(17.7,2){\line(1,0){1.25}}

\put(12.5,1.95){\makebox(0,0)[c]{$\cdots$}}

\put(5.4,1){\makebox(0,0)[c]{\tiny $\beta_{1/2}$}}
\put(8,1){\makebox(0,0)[c]{\tiny $\beta_{3/2}$}}
\put(10.5,1){\makebox(0,0)[c]{\tiny $\beta_{5/2}$}}
\put(14.8,1){\makebox(0,0)[c]{\tiny $\beta_{n-7/2}$}}
\put(17.15,1){\makebox(0,0)[c]{\tiny $\beta_{n-5/2}$}}
\put(19.8,1){\makebox(0,0)[c]{\tiny $\beta_{n-3/2}$}}

\end{picture}
\end{center}
For $\xx=\mf{c,d}$, the Dynkin diagrams of the Lie algebras $\ovcG^{\xx}[0]_n$ are somewhat different from the ones of $\ovcG^{\xx}[m]_n$, where $m \in \N$.
In fact, $\ovcG^{\mf{c}}[0]_n \cong \cG^{\mf{d}}[0]_n  \cong \mf{so}(2n)$ and $\ovcG^{\mf{d}}[0]_n \cong \cG^{\mf{c}}[0]_n   \cong \mf{sp}(2n)$.

\begin{center}
\hskip -3cm \setlength{\unitlength}{0.16in}
\begin{picture}(24,4)

\put(-.5,1){{\ovalBox(3,2,2){$\ovcG^{\mf c}[0]_n$}}}

\put(8,2){\makebox(0,0)[c]{$\bigcirc$}}
\put(10.4,2){\makebox(0,0)[c]{$\bigcirc$}}
\put(14.85,2){\makebox(0,0)[c]{$\bigcirc$}}
\put(17.25,2){\makebox(0,0)[c]{$\bigcirc$}}
\put(19.4,2){\makebox(0,0)[c]{$\bigcirc$}}
\put(6,3.8){\makebox(0,0)[c]{$\bigcirc$}}
\put(6,.3){\makebox(0,0)[c]{$\bigcirc$}}

\put(8.4,2){\line(1,0){1.55}} \put(10.82,2){\line(1,0){0.8}}
\put(13.2,2){\line(1,0){1.2}} \put(15.28,2){\line(1,0){1.45}}
\put(17.7,2){\line(1,0){1.25}}
\put(7.6,2.2){\line(-1,1){1.3}}
\put(7.6,1.8){\line(-1,-1){1.3}}

\put(12.5,1.95){\makebox(0,0)[c]{$\cdots$}}

\put(4.5,3.8){\makebox(0,0)[c]{\tiny $\beta_{1/2}$}}
\put(3.5,0.3){\makebox(0,0)[c]{\tiny $-\epsilon_{1/2}-\epsilon_{3/2}$}}

\put(8.1,1){\makebox(0,0)[c]{\tiny $\beta_{3/2}$}}
\put(10.7,1){\makebox(0,0)[c]{\tiny $\beta_{5/2}$}}
\put(14.8,1){\makebox(0,0)[c]{\tiny $\beta_{n-7/2}$}}
\put(17.15,1){\makebox(0,0)[c]{\tiny $\beta_{n-5/2}$}}
\put(19.8,1){\makebox(0,0)[c]{\tiny $\beta_{n-3/2}$}}

\end{picture}
\end{center}

\begin{center}
\hskip -3cm \setlength{\unitlength}{0.16in}
\begin{picture}(24,3)

\put(-.5,1){{\ovalBox(3,2,2){$\ovcG^{\mf d}[0]_n$}}}
\put(8,2){\makebox(0,0)[c]{$\bigcirc$}}
\put(10.4,2){\makebox(0,0)[c]{$\bigcirc$}}
\put(14.85,2){\makebox(0,0)[c]{$\bigcirc$}}
\put(17.25,2){\makebox(0,0)[c]{$\bigcirc$}}
\put(19.4,2){\makebox(0,0)[c]{$\bigcirc$}}
\put(5.6,2){\makebox(0,0)[c]{$\bigcirc$}}
\put(8.4,2){\line(1,0){1.55}} \put(10.82,2){\line(1,0){0.8}}
\put(13.2,2){\line(1,0){1.2}} \put(15.28,2){\line(1,0){1.45}}
\put(17.7,2){\line(1,0){1.25}}
\put(6,1.8){$\Longrightarrow$}
\put(12.5,1.95){\makebox(0,0)[c]{$\cdots$}}

\put(5.4,1){\makebox(0,0)[c]{\tiny $-2\epsilon_{1/2}$}}
\put(8,1){\makebox(0,0)[c]{\tiny $\beta_{1/2}$}}
\put(10.5,1){\makebox(0,0)[c]{\tiny $\beta_{3/2}$}}
\put(14.8,1){\makebox(0,0)[c]{\tiny $\beta_{n-7/2}$}}
\put(17.15,1){\makebox(0,0)[c]{\tiny $\beta_{n-5/2}$}}
\put(19.8,1){\makebox(0,0)[c]{\tiny $\beta_{n-3/2}$}}

\end{picture}
\end{center}

From here on, let $\w{\Phi}^{\xx\, +}_n$ (resp., $\Phi^{\xx}[m]_n^+$ and $\ov{\Phi}^{\xx}[m]^+_n$) denote the set of positive roots of $\wcG^{\xx}_n$ (resp., $\cG^{\xx}[m]_n$ and $\ovcG^{\xx}[m]_n$).

\subsection{Central extensions}\label{Central Ext}
For $\xx=\mf{a, c, d}$, consider the central extension $\DG^{\xx}$ (resp., $\G^{\xx}[m]_n$ and $\SG^{\xx}[m]_n$) of $\wcG^{\xx}$ (resp., $\cG^{\xx}[m]_n$ and $\ovcG^{\xx}[m]_n$) by the one-dimensional center $\C K$, which is inherited from the central extension $\widehat\gl(\w{V})$ of ${\gl}(\w{V})$ determined by the $2$-cocycle given in \eqref{central extension}.
The restriction of the isomorphism $\iota$ to $\wcG^\xx \oplus \C K$ (resp., $\cG^{\xx}[m]_n \oplus \C K$ and $\ovcG^{\xx}[m]_n \oplus \C K$) is an isomorphism $\iota: \wcG^\xx \oplus \C K \to \DG^{\xx}$ (resp., $\iota:\cG^{\xx}[m]_n \oplus \C K \to \G^{\xx}[m]_n$ and $\iota: \ovcG^{\xx}[m]_n \oplus \C K \to \SG^{\xx}[m]_n$) given by
\begin{equation}\label{iso-e}
\iota(A)=A+\mbox{Str}(JA)K \quad {\rm and} \quad \iota(K)=K
\end{equation}
for $A \in \wcG^\xx$ (resp., $\cG^{\xx}[m]_n$ and $\ovcG^{\xx}[m]_n$).

Note that $[J, A]=0$ for all $A\in \wcG^{\mf a}$. Thus the $2$-cocycle is zero when it restricts to $\wcG^{\mf a} \times \wcG^{\mf a}$, and hence
$$
[A, B]_{\DG^{\mf a}}=[A,B],\qquad \hbox{for $A, B\in \wcG^{\mf a}$,}
$$
where $[\cdot, \cdot]_{\DG^{\mf a}}$ denotes the Lie bracket on $\DG^{\mf a}$. For notational unity, we still consider $\DG^{\mf a}$, $\G^{\mf a}[m]_n$ and $\SG^{\mf a}[m]_n$.

\begin{rem}
  Every $\DG^{\xx}$(resp., $\G^{\xx}[m]_n$ and $\SG^{\xx}[m]_n$)-module can be regarded as a $\wcG^{\xx}$(resp., $\cG^{\xx}[m]_n$ and $\ovcG^{\xx}[m]_n$)-module via the isomorphism \eqnref{iso-e}.
\end{rem}

We let
$\w{\mf b}^{\xx}:=\w{\bf b}^{\xx}\oplus \C K$, ${\mf b}^{\xx}[m]_n:={\bf b}^{\xx}[m]_n\oplus \C K$ and $\ov{\mf b}^{\xx}[m]_n:=\ov{\bf b}^{\xx}[m]_n \oplus \C K$ stand for the standard Borel subalgebras of $\DG^{\xx}$, $\G^{\xx}[m]_n$ and $\SG^{\xx}[m]_n$, respectively, and let $\wfh^{\xx}$, $\fh^{\xx}[m]_n$ and $\ovfh^{\xx}[m]_n$ denote the Cartan subalgebras of $\DG^{\xx}$, $\G^{\xx}[m]_n$ and $\SG^{\xx}[m]_n$ spanned by bases $\{K, E^\xx_r \}$ with dual bases $\{\Lambda_0,\ep_r\}$ in the restricted dual $\wfh^{\xx \, *}$, $\fh^{\xx}[m]_n^*$ and $\ovfh^{\xx}[m]_n^*$, where $r$ runs over the index sets $\hf \N$, ${\J}^+_m(n)$ and $\ov{\J}^+_m(n)$, respectively.
Here $\La_0$ is the element of $\wfh^{\xx \, *}$ (resp., $\fh^{\xx}[m]_n^*$ and $\ovfh^{\xx}[m]_n^*$) defined by
 $$
 \La_0(K)=1 \,\,\, \textrm{ and }   \,\,\,\La_0(E^\xx_r)=0
 $$
  for all $r \in \hf \N$ (resp., ${\J}^+_m(n)$ and $\ov{\J}^+_m(n)$).

It is easy to see that the automorphism $\widehat\varphi$ on $\gl(\w V^\times)$ induced by $\varphi$ defined in \eqref{phi} can be extended to the central extension $\widehat\gl(\w V^\times)$ of $\gl(\w V^\times)$. By \lemref{lem:phi}, we have the following lemma.

\begin{lem}\label{lem:phi_ext}
The isomorphism $\widehat{\varphi}: \wcG^{\mf{c}}\longrightarrow\wcG^{\mf{d}}$ extends to an isomorphism from
$\DG^{\mf{c}}$ to $\DG^{\mf{d}}$, which is also denoted by $\widehat{\varphi}$ and is given by \eqref{phi_cd} together with
$$
\widehat{\varphi}(K)=-K.
$$
The restrictions of $\widehat{\varphi}$ to $\G^{\mf{c}}[m]_n$ and $\ov\G^{\mf{c}}[m]_n$ give isomorphisms
$\widehat{\varphi}: \G^{\mf{c}}[m]_n\longrightarrow \ov\G^{\mf{d}}[m]_n$
and $\widehat{\varphi}: \ov\G^{\mf{c}}[m]_n \longrightarrow\G^{\mf{d}}[m]_n$, respectively.
\end{lem}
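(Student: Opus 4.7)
The plan is to define
$$
\widehat\varphi(A+xK):=\widehat\varphi(A)-xK,\qquad A\in\wcG^{\mf c},\ x\in\C,
$$
and verify that this assignment is a Lie superalgebra isomorphism $\DG^{\mf c}\to\DG^{\mf d}$. Because \lemref{lem:phi} already yields a Lie superalgebra isomorphism $\widehat\varphi:\wcG^{\mf c}\to\wcG^{\mf d}$, and because the bracket in the central extension reads $[A+xK,B+yK]=[A,B]+\tau(A,B)K$, the compatibility of $\widehat\varphi$ with brackets reduces to the single identity
$$
\tau(\widehat\varphi(A),\widehat\varphi(B))=-\tau(A,B),\qquad A,B\in\wcG^{\mf c}.
$$
Once this sign flip is proven, the Lie map property follows by a one-line check comparing $\widehat\varphi([A,B]_{\DG^{\mf c}})=\widehat\varphi([A,B])-\tau(A,B)K$ with $[\widehat\varphi(A),\widehat\varphi(B)]_{\DG^{\mf d}}=[\widehat\varphi(A),\widehat\varphi(B)]+\tau(\widehat\varphi(A),\widehat\varphi(B))K$.

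The key calculation goes as follows. First I would apply the operator identity $[J,\varphi A\varphi^{-1}]=\varphi[\varphi^{-1}J\varphi,A]\varphi^{-1}$ (valid because $J$ is even) to obtain
$$
\tau(\widehat\varphi(A),\widehat\varphi(B))=\text{Str}\bigl(\varphi\,[\varphi^{-1}J\varphi,A]\,B\,\varphi^{-1}\bigr).
$$
Next, I would use the explicit shift rule \eqref{phi}: $\varphi$ permutes the basis of $\w V^\times$ while preserving the sign of each index, so it stabilizes the positive part $\{v_r:r\ge\hf\}$. Since $J=-\sum_{r\ge\hf}E_{r,r}$ is minus the projection onto this part, one obtains $\varphi^{-1}J\varphi=J$ as operators on $\w V^\times$. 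Setting $Z:=[J,A]B$ (which is even and of finite rank on $\w V^\times$), the super-cyclicity of the supertrace combined with $|\varphi|=|\varphi^{-1}|=\bar 1$ then yields
$$
\text{Str}(\varphi Z\varphi^{-1})=(-1)^{|\varphi|(|Z|+|\varphi^{-1}|)}\text{Str}(\varphi^{-1}\varphi Z)=-\text{Str}(Z)=-\tau(A,B),
$$
which is the required sign flip. The main obstacle is precisely this super-sign bookkeeping; the factor $-1$ produced by the odd parity of $\varphi$ is what forces $\widehat\varphi(K)=-K$.

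Finally, the restrictions $\widehat\varphi:\G^{\mf c}[m]_n\to\ov\G^{\mf d}[m]_n$ and $\widehat\varphi:\ov\G^{\mf c}[m]_n\to\G^{\mf d}[m]_n$ follow immediately: at the un-extended level they are precisely the content of \lemref{lem:phi}, since the shift rule \eqref{phi} identifies ${\J}^\times_m(n)$ bijectively with $\ov{\J}^\times_m(n)$; the prescription $K\mapsto-K$ on the central summand is inherited from the ambient isomorphism $\DG^{\mf c}\to\DG^{\mf d}$, and the same calculation above restricts verbatim.
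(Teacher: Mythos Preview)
Your argument is correct and follows exactly the route the paper sketches (it merely asserts before the lemma that ``it is easy to see'' that $\widehat\varphi$ on $\gl(\w V^\times)$ extends to the central extension, and then invokes \lemref{lem:phi}); you have supplied the omitted verification that $\tau(\widehat\varphi(A),\widehat\varphi(B))=-\tau(A,B)$, which is precisely the content forcing $\widehat\varphi(K)=-K$. One small remark: your assertion that $Z=[J,A]B$ is even holds only when $|A|+|B|=\bar 0$, but this is harmless since both sides of the cocycle identity vanish when $|A|+|B|=\bar 1$.
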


\subsection{Parabolic BGG categories and super duality} \label{category-O}
For details on the materials in this subsection, we refer the readers to \cite[Sections 2 and 3]{CL10} for type $\mf{a}$ and \cite[Sections 2 and 3]{CLW11} for types $\mf{c,d}$ (see also \cite[Sections 6.1 and 6.2]{CW} and \cite[Section 2.4]{CaL}).

Let $\w{\mf l}^\xx$, $\mf{l}^\xx[m]_n$ and $\ov{\mf{l}}^\xx[m]_n$ be the Levi subalgebras of $\DG^\xx$, $\G^\xx[m]_n$ and $\SG^\xx[m]_n$ defined by
$$
\w{\mf l}^\xx:=\DG^\xx \cap \widehat{\bf l},
\quad \mf{l}^\xx[m]_n:=\G^\xx[m]_n \cap \widehat{\bf l},
\quad\hbox{and}
\quad \ov{\mf l}^\xx[m]_n:=\SG^\xx[m]_n\cap\widehat{\bf l},
$$
respectively, and let $\w{\mf p}^\xx=\tilde{\mf l}^\xx+\w{\mf b}^\xx$, $\mf{p}^\xx[m]_n=\mf{l}^\xx[m]_n+{\mf b}^\xx[m]_n$ and $\ov{\mf{p}}^\xx[m]_n=\ov{\mf{l}}^\xx[m]_n+\ov{\mf b}^\xx[m]_n$ be the corresponding parabolic subalgebras, where $\widehat{\bf l} :=\bigoplus_{\substack{rs>0,\\ r,s\in \hf\Z^*}}\C E_{r, s}\oplus\C K$. Observe that
$$
{\mf l}^\xx[m]_n\cong \G^{\mf a}[m]_n\cong\gl(m|n)\oplus\C K\qquad\hbox{and }\qquad \ov{\mf l}^\xx[m]_n\cong\SG^{\mf a}[m]_n\cong\gl(m|n)\oplus\C K.
$$
For $\xx={\mf a}$, we have $\w{\mf l}^{\mf a}=\DG^{\mf a}$, ${\mf l}^{\mf a}[m]_n=\G^{\mf a}[m]_n$ and $\ov{\mf l}^{\mf a}[m]_n=\SG^{\mf a}[m]_n$.

{\bf In the remainder of the paper, we will drop the superscript $\xx$ and the subscript $\infty$ if there is no ambiguity.} For example, we write $\wcG$, $\cG[m]_n$ and $\ovcG[m]_n$ for $\wcG^\xx$, $\cG^\xx[m]_n$ and $\ovcG^\xx[m]_n$, and $\wt{\G}$, $\G[m]_n$ and $\ov\G[m]_n$ for $\wt{\G}^\xx$, $\G^\xx[m]_n$ and $\ov\G^\xx[m]_n$, respectively, where $\xx$ denotes a fixed type among $\mf{a, c,d}$. Also, we write $\G[m]$ and $\ov\G[m]$ for $\G^\xx[m]_\infty$ and $\ov\G^\xx[m]_\infty$, respectively.

Given a partition $\mu=(\mu_1,\mu_2,\ldots)$, let $\mu^\prime$ denote the conjugate partition of $\mu$. We denote by $\theta(\mu)$ the modified Frobenius
coordinates of $\mu$:
\begin{equation*}
\theta(\mu)
:=(\theta(\mu)_{1/2},\theta(\mu)_1,\theta(\mu)_{3/2},\theta(\mu)_2,\ldots),
\end{equation*}
where
$$\theta(\mu)_{i-1/2}:=\max\{\mu^\prime_i-i+1,0\}, \quad
\theta(\mu)_i:=\max\{\mu_i-i, 0\}, \quad i\in\N.
$$

Given a partition $\la=(\la_1, \la_2,\ldots)$ and $d\in\C$, we define
\begin{align}
\w{\la} &:= \sum_{r\in\hf\N}\theta(\la)_r\ep_r + d \La_0\in \wfh^{*},\label{weight:wtIm}\\
\la[m] &:=\sum_{i=1}^{m} \la^\prime_i \ep_{i-\hf}+ \sum_{j \in \N}  \left\langle \la_j-m \right\rangle  \ep_{j}
+ d\La_0\in \fh[m]^*, \label{weight:Im}\\
\ov\la[m] &:=\sum_{i=1}^{m} \la_{i}\ep_{i}+ \sum_{j \in \N} \left\langle\la^\prime_j -m \right\rangle \ep_{j-\hf}
+ d \La_0 \in \ovfh[m]^*.\label{weight:ovIm}
\end{align}
Here $\langle r  \rangle:=\max\{r, 0\}$.

Let $\w{\cP}^+(d) \subseteq \wfh^*$, $\cP[m]^+(d) \subseteq \fh[m]^*$ and $\ov{\cP}[m]^+(d) \subseteq \ovfh[m]^*$ denote the sets
of all weights of the forms \eqnref{weight:wtIm}, \eqnref{weight:Im} and \eqnref{weight:ovIm}, respectively. Let $\w{\cP}^+=\cup_{d \in \C} \w{\cP}^+(d)$, $\cP[m]^+=\cup_{d \in \C} \cP[m]^+(d)$ and $\ov{\cP}[m]^+=\cup_{d \in \C} \ov{\cP}[m]^+(d)$. By definition we have bijective maps
\begin{equation}\label{cP}
\begin{array}{l}
\w{\cP}^+ \longrightarrow \cP[m]^+\\
 \ \     \w\la\ \mapsto \ \la[m]
\end{array} \ \ \ \ \  \ \ \mbox{and}\ \ \ \ \ \
\begin{array}{l}
\w{\cP}^+ \longrightarrow \ov{\cP}[m]^+\\
 \ \     \w\la\ \mapsto \ \ovla[m]
\end{array}
\end{equation}

Recall that a partition $\la=(\la_1,\la_2,\ldots)$ is called an $(m|n)$-{\em hook partition}
if $\la_{m+1}\le n$. Let $\mathcal{P}$ and $\mathcal{P}(m|n)$ denote the set of partitions and the set of $(m|n)$-hook partitions,
respectively. Clearly, $\mathcal{P}(m|\infty)=\mathcal{P}$.
Given $d \in \C$ and $\la\in \mathcal{P}$ with $\la'\in \mathcal{P}(m|n)$ (resp., $\la\in \mathcal{P}(m|n)$), we may regard
$\la[m]  \in \fh[m]_n^*$ (resp., $\ovla[m] \in \ovfh[m]_n^*$) in a natural way.
The set of all such weights is denoted by $\cP[m]^+_n$ (resp., $\ov{\cP}[m]^+_n$).

For $\mu\in \w{\mf h}^*$, let $L(\tilde{\mf l}, \mu)$ be the irreducible highest weight $\tilde{\mf l}$-module with highest weight $\mu$.
We denote by ${\Delta}(\DG,\mu)=\mbox{Ind}^{\w\G}_{\w{\mf p}} L(\tilde{\mf l}, \mu)$ the parabolic Verma $\w\G$-module
and by $L(\DG, \mu)$ the unique irreducible quotient $\w\G$-module of $\Delta(\DG, \mu)$.
The modules $L(\mf l[m]_n, \mu)$ and $\Delta(\G[m]_n, \mu)$ (for $\mu \in \fh[m]_n^*$)
as well as $L(\ov{\mf l}[m]_n, \mu)$ and $\Delta(\SG[m]_n, \mu)$ (for $\mu \in \ovfh[m]_n^*$)
are defined analogously. 
We denote by $L(\G[m]_n, \mu)$ (resp., $L(\SG[m]_n, \mu)$) the unique irreducible quotient $\G[m]_n$(resp., $\SG[m]_n$)-module of $\Delta(\G[m]_n, \mu)$ (resp., $\Delta(\SG[m]_n, \mu)$) for $\mu \in \fh[m]_n^*$ (resp., $\ov{\fh}[m]_n^*$). 
Note that for $\xx={\mf a}$, ${\Delta}(\DG^{\mf a},\mu)=L(\DG^{\mf a},\mu)=L(\tilde{\mf l}^{\mf a}, \mu)$, ${\Delta}(\G^{\mf a}[m],\mu)=L(\G^{\mf a}[m],\mu)=L({\mf l}^{\mf a}[m], \mu)$ and ${\Delta}(\SG^{\mf a}[m],\mu)=L(\SG^{\mf a}[m],\mu)=L(\ov{\mf l}^{\mf a}[m], \mu)$.

Similar to \cite{CL10, CLW11,CLW12}, let $\w\OO$ (resp., $\OO[m]_n$ and $\ov\OO[m]_n$) be the category of
$\DG$(resp., $\G[m]_n$ and $\SG[m]_n$)-modules $M$ such that $M$ is a semisimple
$\w{\mf h}$(resp., ${\mf h}[m]_n$ and $\ov{\mf h}[m]_n$)-module with finite-dimensional weight subspaces
$M_{\gamma}$ for $\gamma \in \w{\mf h}^*$  (resp., ${\mf h}[m]_n^*$ and $\ov{\mf h}[m]_n^*$), satisfying the conditions:
   \begin{enumerate}[\normalfont(i)]
    \item $M$ decomposes over $\tilde{\mf l}$ (resp., $\mf l[m]_n$ and $\overline{\mf l}[m]_n$) as a direct sum of
$L(\tilde{\mf l}, \mu)$ (resp., $L(\mf l[m]_n, \mu)$ and $L(\ov{\mf l}[m]_n, \mu)$) for $\mu\in \w{\cP}^+$
(resp., $\cP[m]^+_n$ and $\ov{\cP}[m]^+_n$).
    \item There exist finitely many weights $\lambda^1,\lambda^2,\ldots,\lambda^k\in \w{\cP}^+$
(resp., $\cP[m]^+_n$ and $\ov{\cP}[m]^+_n$) (depending on $M$) such that if $\gamma$ is a weight of $M$, then
$\la^i-\gamma$ is a linear combination of simple roots with coefficients in $\Z_+$ for some $i$.
   \end{enumerate}
The morphisms in the categories are even homomorphisms of modules, and the categories are abelian. There is a natural $\Z_2$-gradation on each module in the categories with a compatible action of the corresponding Lie (super)algebra to be defined below.
Set
 \begin{align}\label{weight}
\w{\Xi}_n&=\sum_{r\in \w\J^+(n)} \Z_+ \ep_r  +\C\Lambda_0,\nonumber\\
 {\Xi}[m]_n&=\sum_{r\in {\J}^+_m(n)} \Z_+ \ep_r  +\C\Lambda_0,\\
 \ov{\Xi}[m]_n&=\sum_{r\in \ov{\J}^+_m(n)} \Z_+ \ep_r  +\C\Lambda_0.\nonumber
 \end{align}
For $\varepsilon=0$ or $1$ and $\Theta=\w{\Xi}$, ${\Xi}[m]_n$ or ${\ov\Xi}[m]_n$, we define
\begin{equation*}
{\Theta}(\ov \varepsilon):=
\setc[\Big]{\mu\in {\Theta} }{ \sum_{r\in \hf+\Z_+}\mu(E_{r})\equiv \varepsilon \,\,(\text{mod }2)},
\end{equation*}
where the summation is over all $r\in \hf+\Z_+$ whenever $\mu(E_{r})$ are defined.
Recall that both ${\mf l}[m]_n$ and $\ov{\mf l}[m]_n$ are isomorphic to $\gl(m|n)\oplus\C K$.
For $M\in \mathcal O[m]_n$ (resp., $\ov{\mathcal O}[m]_n$), each weight of $M$ is  a weight of a highest weight module over ${\mf l}[m]_n$  (resp., $\ov{\mf l}[m]_n$) with highest weight $\mu\in \cP[m]^+_n$ (resp., $\ov{\cP}[m]^+_n$) which is contained in $ {\Xi}[m]_n$ (resp., $\ov{\Xi}[m]_n$) (see, e.g., \cite[Proposition 3.26]{CW}). By the paragraph before Theorem 6.4 in \cite{CW}, the weights of $M$ are contained in $\w{\Xi}$ for $M\in \w{\mathcal O}$. For $M\in \w{\mathcal O}$, ${M}={M}_{\ov{0}}\bigoplus {M}_{\ov{1}}$  is a $\Z_2$-graded vector space such that
 \begin{equation}\label{wt-Z2-gradation}
{M}_{\ov{0}}:=\bigoplus_{\mu\in\w{\Xi}(\ov 0)}{M}_{\mu}\qquad\hbox{and}\qquad
{M}_{\ov{1}}:=\bigoplus_{\mu\in\w{\Xi}(\ov 1)}{M}_{\mu}.
 \end{equation}
It is clear that the $\Z_2$-gradation on $M$ is compatible with the action of $\DG$.
Similarly, we may define a $\Z_2$-gradation with a compatible action of ${\mf g}[m]_n$ (resp., $\ov{\mf g}[m]_n$) on ${M}$ for $M\in\mathcal O[m]_n$ (resp., $\ov{\mathcal O}[m]_n$).  By \cite[Theorem 3.27 and Theorem 6.4]{CW} (see also the proof of \cite[Theorem 6.2.2]{Lus}), $\w\OO$, $\OO[m]_n$ and $\ov\OO[m]_n$ are tensor categories. In particular, $\w\OO^{\mf a}$, $\OO^{\mf a}[m]_n$ and $\ov\OO^{\mf a}[m]_n$ are semisimple tensor categories. Note that the $\Z_2$-gradation on $M\otimes N$ given by \eqnref{wt-Z2-gradation} and the $\Z_2$-gradation on $M\otimes N$ induced from the $\Z_2$-gradations on $M$ and $N$ given by
\eqref{wt-Z2-gradation} are the same for $M, N\in \w\OO$ (resp., $\OO[m]_n$ and $\ov\OO[m]_n$). We summarize the results in the following proposition.

\begin{prop} Let $n\in\N\cup\{\infty\}$. Then:
\begin{enumerate} [\normalfont(i)]
\item The weights of modules in $\w\OO$ (resp., $\mathcal O[m]_n$ and $\ov{\mathcal O}[m]_n$) are contained in $\w{\Xi}$  (resp., $ {\Xi}[m]_n$ and $\ov{\Xi}[m]_n$).
\item The abelian categories $\w\OO$, $\OO[m]_n$ and $\ov\OO[m]_n$ are tensor categories.
\item $\w\OO^{\mf a}$, $\OO^{\mf a}[m]_n$ and $\ov\OO^{\mf a}[m]_n$ are semisimple tensor categories.
\end{enumerate}
\end{prop}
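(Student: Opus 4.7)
The plan is to assemble the three parts of the proposition from results already discussed in the paragraph immediately preceding it, citing \cite[Proposition 3.26, Theorem 3.27, Theorem 6.4]{CW} and \cite[Theorem 6.2.2]{Lus}, together with a small amount of direct verification for the weight supports, the finiteness of weight spaces in tensor products, and the $\Z_2$-gradation compatibility.

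For part (i), I would argue case by case. Any $M \in \OO[m]_n$ decomposes over $\mf l[m]_n \cong \gl(m|n) \oplus \C K$ as a direct sum of $L(\mf l[m]_n,\mu)$ with $\mu \in \cP[m]_n^+ \subset \Xi[m]_n$. By \cite[Proposition 3.26]{CW}, every weight of such a highest weight $\gl(m|n)$-module is obtained from $\mu$ by subtracting non-negative integer combinations of positive roots; written in the $\ep_r$-coordinates indexed by $\J_m^+(n)$, this keeps each coordinate a non-negative integer, so the weight lies in $\Xi[m]_n$. The same argument works verbatim for $\ov\OO[m]_n$ using the ordering on $\ov\J_m^+(n)$, and the infinite-rank case $\w\OO$ is precisely the statement spelled out in the paragraph preceding \cite[Theorem 6.4]{CW}.

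For part (ii), given $M, N$ in $\w\OO$ (or the finite-rank analogues), I would verify the two defining conditions. The tensor product is clearly a semisimple $\wfh$-module, and its $\gamma$-weight space is $\bigoplus_{\alpha+\beta=\gamma} M_\alpha \otimes N_\beta$; by part (i), both $\alpha,\beta$ lie in $\w\Xi$, so their $\ep_r$-coordinates are non-negative integers with only finitely many nonzero entries, hence only finitely many pairs $(\alpha,\beta)$ contribute and each summand is finite-dimensional. The decomposition of $M \otimes N$ into $L(\tilde{\mf l},\mu)$'s is exactly \cite[Theorem 3.27]{CW} (finite rank) and \cite[Theorem 6.4]{CW} (infinite rank), whose proof follows the strategy of \cite[Theorem 6.2.2]{Lus}. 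Condition (ii) passes to $M \otimes N$ because a maximal weight of the tensor product is a sum of maximal weights of the factors. Finally the two $\Z_2$-gradations on $M \otimes N$, the intrinsic one from \eqref{wt-Z2-gradation} and the one induced from $M$ and $N$, coincide because the parity assignment $|\ep_r| = \bar 1$ for $r \in \hf + \Z_+$ is additive under the formula $(\alpha+\beta)(E_r) = \alpha(E_r) + \beta(E_r)$.

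For part (iii), the type $\mf a$ statement is where the parabolic Verma modules coincide with their irreducible quotients, as explicitly recorded in the paragraph defining $\Delta(\DG^{\mf a},\mu) = L(\DG^{\mf a},\mu) = L(\tilde{\mf l}^{\mf a},\mu)$ and its finite-rank analogues; combined with the complete reducibility results of \cite[Theorem 3.27]{CW} and \cite[Theorem 6.4]{CW} on tensor products of polynomial-type $\gl(m|n)$-modules, every object of $\w\OO^{\mf a}$, $\OO^{\mf a}[m]_n$ or $\ov\OO^{\mf a}[m]_n$ is a direct sum of irreducible highest weight modules, giving semisimplicity. The main obstacle is not any single deep step but rather the bookkeeping needed to translate the statements of \cite{CW}, which are phrased for $\gl(m|n)$ with its standard indexing, into the $\hf\Z$-indexed conventions (involving $\J_m^+(n)$, $\ov\J_m^+(n)$ and $\hf\N$) used in this paper; once those identifications are fixed all three assertions follow directly.
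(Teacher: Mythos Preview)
Your approach is essentially that of the paper: the proposition is stated there as a summary of the preceding paragraph, which invokes exactly the references \cite[Proposition 3.26, Theorem 3.27, Theorem 6.4]{CW} and \cite[Theorem 6.2.2]{Lus} that you cite, and the paper's treatment of the $\Z_2$-gradation compatibility and the weight-support argument matches yours.

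One point in your part (i) argument needs correcting. You write that subtracting non-negative integer combinations of positive roots from $\mu \in \cP[m]_n^+$ ``keeps each coordinate a non-negative integer.'' That implication is false as stated: subtracting the positive root $\ep_i - \ep_j$ (with $i <_{\J_m} j$) decreases the $\ep_i$-coordinate, and nothing in highest weight theory alone prevents it from going negative. The actual content of \cite[Proposition 3.26]{CW} is that $L(\mf l[m]_n,\mu)$ with $\mu \in \cP[m]_n^+$ is a \emph{polynomial} $\gl(m|n)$-module, i.e.\ a direct summand of some tensor power of the natural module, and it is this realization that forces every weight to have non-negative integer coordinates. So your citation is right but your stated mechanism is not; the paper simply cites the proposition without attempting to unpack it, which avoids this pitfall.
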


We also have the following proposition.

\begin{prop} \label{tensor-cat}
Let $n\in\N\cup\{\infty\}$. Then:
\begin{enumerate}[\normalfont(i)]
  \item The modules $\Delta(\DG, \la)$ and $L(\DG, \la)$ lie in $\w{\mathcal O}$ for all $\la \in \w \cP^+$.
  \item The modules $\Delta(\G[m]_n, \la)$ and $L(\G[m]_n, \la)$ lie in $\OO[m]_n$ for all $\la \in \cP[m]_n^+$.
  \item The modules $\Delta(\ov\G[m]_n, \la)$ and $L(\ov\G[m]_n, \la)$ lie in $\ov\OO[m]_n$ for all $\la \in \ov \cP[m]_n^+$.
\end{enumerate}
\end{prop}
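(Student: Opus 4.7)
The plan is to verify, in each of the three settings (i)--(iii), the two defining conditions of the parabolic BGG category separately for the parabolic Verma module and for its irreducible quotient. Since $L(\DG,\la)$ (resp.\ $L(\G[m]_n,\la)$ and $L(\SG[m]_n,\la)$) is a quotient of $\Delta(\DG,\la)$ (resp.\ $\Delta(\G[m]_n,\la)$ and $\Delta(\SG[m]_n,\la)$), once the axioms are established for the Verma modules they propagate to the irreducible quotients: the weight-boundedness clause is inherited by any quotient, while the Levi-semisimplicity clause is inherited because the modules $L(\w{\mf l},\mu)$ with $\mu\in\w\cP^+$ are unitarizable, so the class they span is closed under taking subquotients.

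For the Verma modules, I would first invoke the PBW theorem to write $\Delta(\DG,\la)\cong U(\w{\mf u}^-)\otimes L(\w{\mf l},\la)$ as $\w{\mf l}$-modules, where $\w{\mf u}^-$ is the nilradical opposite to $\w{\mf p}$. The next step is to identify $\w{\mf u}^-$ as an $\w{\mf l}$-module with an explicit tensor product built from natural representations of the two blocks of $\w{\mf l}$, so that $U(\w{\mf u}^-)$ is a (super) symmetric/exterior algebra on an $\w{\mf l}$-module that decomposes semisimply into $L(\w{\mf l},\nu)$'s with $\nu\in\w\cP^+$; tensoring with $L(\w{\mf l},\la)$ and appealing to a Pieri/Littlewood--Richardson type rule (as used in \cite[Sections 2--3]{CL10} and \cite[Sections 2--3]{CLW11}) shows that every Levi composition factor is of the form $L(\w{\mf l},\mu)$ with $\mu\in\w\cP^+$, settling axiom (i). For axiom (ii) I would observe that every weight $\gamma$ of $\Delta(\DG,\la)$ satisfies $\la-\gamma\in\Z_+\w\Phi^+$, so the single highest weight $\la$ dominates and $k=1$ suffices; finite-dimensionality of each weight space follows from the corresponding property of each $L(\w{\mf l},\mu)$ together with the observation that only finitely many $\mu$'s contribute to a fixed total weight. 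The arguments for $\G[m]_n$ and $\SG[m]_n$ are parallel, using the bijections of \eqref{cP} to transport $\w\cP^+$ onto $\cP[m]^+$ and $\ov\cP[m]^+$ respectively.

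The main obstacle will be verifying that the $\w{\mf l}$-decomposition of $U(\w{\mf u}^-)\otimes L(\w{\mf l},\la)$ yields only highest weights in $\w\cP^+$ (the partition-type weights parametrized by the modified Frobenius coordinates~\eqref{weight:wtIm}), rather than arbitrary $\w{\mf l}$-dominant integrals. For types $\mf c$ and $\mf d$ this is more delicate than in type $\mf a$ because $\w{\mf u}^-$ carries a symmetric pairing component in addition to the bilinear pairing block, so the semisimple decomposition of the symmetric/exterior algebra requires the ortho-symplectic analogue of Howe duality; fortunately, this is precisely the content of the decomposition results in the relevant sections of \cite{CLW11, CLW12}, which I would quote directly rather than reprove. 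Once that combinatorial input is in hand, the remaining steps reduce to routine weight bookkeeping.
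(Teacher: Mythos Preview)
Your proposal is correct and follows essentially the same argument as the reference the paper invokes: the paper's own proof is merely a citation to \cite[Proposition~6.7]{CW} (with the remark that for $m\neq 0$ one substitutes \cite[Theorem~3.27]{CW} for \cite[Theorem~6.4]{CW}), and the PBW decomposition $\Delta\cong U(\mf u^-)\otimes L(\mf l,\la)$ together with the explicit Levi-decomposition of $S(\mf u^-)$ and the closure of polynomial $\gl$-modules under tensor product is precisely what that proposition does. Two minor clarifications: for $\xx=\mf a$ the parabolic equals the whole algebra, so $\mf u^-=0$ and the argument collapses to the semisimplicity of polynomial $\gl$-modules; and the ``closure under subquotients'' step is more cleanly phrased as semisimplicity of the category of polynomial $\gl(m|n)$-modules (this is what \cite[Theorem~3.27]{CW} provides) rather than via unitarizability, though your route also works.
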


\begin{proof}
 (i) follows from \cite[Proposition 6.7(3)]{CW}. We show (ii). For $m=0$, it follows from \cite[Proposition 6.7(1)]{CW}. For $m\not=0$, the proof is similar to that of \cite[Proposition 6.7(3)]{CW}, but here we use \cite[Theorem 3.27]{CW} (valid also for $n=\infty$) instead of \cite[Theorem 6.4]{CW}. The proof of (iii) is similar.
\end{proof}

We record here an easy but useful observation, which can be seen by the description of the weights of modules in \eqref{weight} and is essentially \cite[Lemma 2.3]{CaL}.

\begin{lem} \label{weight-decomposition}
Let $M, N\in \w\OO$ (resp., $\OO[m]_n$ and $\ov\OO[m]_n$). Suppose that $\mu$ and $\gamma$ are weights of $M$ and $N$, respectively. Then
\[
(\mu+\gamma)(E_r)=0\quad\hbox{if and only if }\quad \mu(E_r)=0\,\, \hbox{and}\,\, \gamma(E_r)=0,
 \]
for $r\in \hf \N$ (resp., ${\J}^+_m(n)$ and $\ov{\J}^+_m(n)$).
\end{lem}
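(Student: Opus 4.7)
The plan is to reduce the statement to the elementary observation that a sum of non-negative integers vanishes if and only if each summand vanishes; the real content is just that evaluating any weight of a module in one of the three categories on the relevant $E_r$ produces a non-negative integer.

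First I would invoke the preceding proposition (the one stating that weights of modules in $\w\OO$, $\OO[m]_n$, and $\ov\OO[m]_n$ are contained in $\w\Xi$, $\Xi[m]_n$, and $\ov\Xi[m]_n$, respectively). By the explicit description \eqref{weight}, any $\mu \in \w\Xi$ has the form $\mu = \sum_{s \in \w\J^+} a_s \epsilon_s + c\Lambda_0$ with $a_s \in \Z_+$ and $c \in \C$, and similarly for $\Xi[m]_n$ and $\ov\Xi[m]_n$ with the appropriate index sets. Since $\{E_r\}$ and $\{\epsilon_s\}$ are dual bases of the relevant Cartan subalgebra and $\La_0(E_r) = 0$, evaluation gives $\mu(E_r) = a_r \in \Z_+$ for every $r$ in the index set $\hf\N$, ${\J}^+_m(n)$, or $\ov{\J}^+_m(n)$ as the case may be. The same is true for $\gamma$.

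With this in hand the biconditional becomes immediate. The ``if'' direction follows from linearity: $(\mu+\gamma)(E_r) = \mu(E_r) + \gamma(E_r) = 0 + 0 = 0$. For the ``only if'' direction, the identity $(\mu+\gamma)(E_r) = \mu(E_r) + \gamma(E_r)$ expresses $0$ as a sum of two non-negative integers, which forces $\mu(E_r) = 0$ and $\gamma(E_r) = 0$ individually.

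There is no genuine obstacle; the lemma is essentially a restatement of the positivity encoded in the cones \eqref{weight}. If anything requires care it is simply verifying, via \eqref{weight:wtIm}--\eqref{weight:ovIm}, that the coefficients appearing in the Frobenius-type expressions $\theta(\la)_r$, $\la'_i$, and $\langle \la_j - m \rangle$ are indeed in $\Z_+$ so that the inclusions $\w\cP^+ \subseteq \w\Xi$, $\cP[m]_n^+ \subseteq \Xi[m]_n$, and $\ov\cP[m]_n^+ \subseteq \ov\Xi[m]_n$ hold. Once that is noted the proof concludes in one line.
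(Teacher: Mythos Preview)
Your proof is correct and matches the paper's approach exactly: the paper does not give a separate proof but simply remarks that the lemma ``can be seen by the description of the weights of modules in \eqref{weight},'' which is precisely the positivity argument you spell out. You have written out in full what the paper leaves as a one-line observation.
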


Recall that we drop the subscript $n$ if $n=\infty$. For instance, we denote $\Xi[m]=\Xi[m]_\infty$ and $\OO[m]=\OO[m]_\infty$.
In view of \eqref{weight}, we immediately obtain the following (cf. \cite[Lemma 2.5]{CaL}).

\begin{lem} \label{weight-decomposition2}
Let $\mu, \gamma\in \w\Xi$. Then:
\begin{enumerate}[\normalfont(i)]
\item $\mu+\gamma \in  \Xi[m]$ if and only if $\mu \in  \Xi[m]$ and $\gamma \in  \Xi[m]$.
\item $\mu+\gamma \in  \ov{\Xi}[m]$ if and only if $\mu \in \ov{\Xi}[m]$ and $\gamma \in \ov{\Xi}[m]$.
\end{enumerate}
\end{lem}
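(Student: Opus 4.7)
The plan is to reduce both statements to the elementary observation that a sum of non-negative integers vanishes only if each summand does---essentially the content of \lemref{weight-decomposition} restricted to weights in $\w\Xi$.

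First I would unpack the index sets. Since $\w\J^+=\hf\N$, $\J^+_m=\{\hf,3/2,\ldots,m-\hf\}\cup\N$ and $\ov\J^+_m=\{1,2,\ldots,m\}\cup(\hf+\Z_+)$, a direct comparison yields
\[
\w\J^+\setminus\J^+_m=\setc*{r\in\hf+\Z_+}{r>m},\qquad \w\J^+\setminus\ov\J^+_m=\setc*{r\in\N}{r>m}.
\]
Via the natural inclusions $\Xi[m],\ov\Xi[m]\hookrightarrow\w\Xi$ (obtained by extending a functional by zero on the missing indices), an element $\mu=\sum_{r\in\w\J^+}c_r\ep_r+c\La_0\in\w\Xi$ lies in $\Xi[m]$ iff $c_r=0$ for every $r\in\hf+\Z_+$ with $r>m$, and lies in $\ov\Xi[m]$ iff $c_r=0$ for every $r\in\N$ with $r>m$.

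Given this, both (i) and (ii) become immediate. For $\mu,\gamma\in\w\Xi$ and any $r\in\w\J^+$, the coefficient $(\mu+\gamma)(E_r)=\mu(E_r)+\gamma(E_r)$ is a sum of two non-negative integers, and therefore vanishes iff both $\mu(E_r)$ and $\gamma(E_r)$ vanish. Applying this to each $r$ in the complement $\w\J^+\setminus\J^+_m$ (for (i)) or $\w\J^+\setminus\ov\J^+_m$ (for (ii)) gives the stated equivalences. There is no real obstacle here; the lemma is purely a bookkeeping consequence of the $\Z_+$-integrality built into the definition of $\w\Xi$ in \eqref{weight}, and the only care needed is in matching up the index sets correctly.
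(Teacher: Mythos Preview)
Your argument is correct and is precisely the elaboration the paper has in mind: the paper simply says the lemma follows ``in view of \eqref{weight}'', i.e., from the $\Z_+$-integrality of the coefficients in the definitions of $\w\Xi$, $\Xi[m]$, $\ov\Xi[m]$. Your identification of the complementary index sets and the coefficient-by-coefficient comparison is exactly the intended (and only natural) route.
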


Let $0\le k< n\leq\infty$. For $M \in \OO[m]_n$, we write $M=\bigoplus_{\mu\in {\Xi}[m]_n}M_{\mu}$. The truncation functor $\mf{tr}^n_k: \OO[m]_n \to \OO[m]_k$ is defined by
\begin{equation}\label{trun-functor}
{\mf{tr}}^n_k (M)=\bigoplus_{\nu\in {\Xi}[m]_k}M_{\nu}.
\end{equation}
For  every $f \in {\rm Hom}_{\OO[m]_n}(M, N)$, ${\mf{tr}}^n_k(f)$ is defined to be the restriction of $f$ to ${\mf{tr}}^n_k(M)$.
The truncation functor $\ov{\mf{tr}}^n_k: \ov\OO[m]_n \to \ov\OO[m]_k$ can be defined in a similar way. It is clear that ${\mf{tr}}^n_k$ and $\ov{\mf{tr}}^n_k$ are exact functors. By \lemref{weight-decomposition}, we immediately have the following lemma.

\begin{lem} \label{trun-tensor}
For $0\le k<n \leq\infty$, ${\mf{tr}}^n_k$ and $\ov{\mf{tr}}^n_k$ are tensor functors.
\end{lem}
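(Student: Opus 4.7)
The plan is to reduce everything to the weight-space level and invoke \lemref{weight-decomposition2}. Recall that a tensor functor must intertwine the tensor product up to a natural isomorphism and preserve the unit. Preservation of the unit is immediate, since the trivial module $\C$ consists of the single weight $0 \in \Xi[m]_k$. So the real content is constructing a natural isomorphism
\[
\mathfrak{tr}^n_k(M\otimes N)\ \stackrel{\sim}{\longrightarrow}\ \mathfrak{tr}^n_k(M)\otimes \mathfrak{tr}^n_k(N)
\]
for $M,N\in \OO[m]_n$ (and similarly in the barred setting).

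First I would unravel the weight decomposition. For each $\nu\in \Xi[m]_k\subseteq \Xi[m]_n$,
\[
(M\otimes N)_{\nu}=\bigoplus_{\substack{\mu,\gamma\in \Xi[m]_n\\ \mu+\gamma=\nu}}M_{\mu}\otimes N_{\gamma}.
\]
The point is to determine which pairs $(\mu,\gamma)$ actually contribute. Since $\nu\in\Xi[m]_k$ means precisely that $\nu(E_r)=0$ for every $r\in \J^+_m(n)\setminus \J^+_m(k)$, \lemref{weight-decomposition2}(i) (applied componentwise in the relevant indices, which is exactly what \lemref{weight-decomposition} gives in the general setting) implies $\mu,\gamma\in \Xi[m]_k$ as well. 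Conversely, any pair with $\mu,\gamma\in \Xi[m]_k$ produces $\mu+\gamma\in\Xi[m]_k$.

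Consequently, the obvious inclusion
\[
\bigoplus_{\mu,\gamma\in \Xi[m]_k} M_{\mu}\otimes N_{\gamma}\ \hookrightarrow\ M\otimes N
\]
identifies $\mathfrak{tr}^n_k(M)\otimes \mathfrak{tr}^n_k(N)$ with $\mathfrak{tr}^n_k(M\otimes N)$ as vector spaces (and as $\Z_2$-graded spaces, by the compatibility of the $\Z_2$-gradings noted after \eqnref{wt-Z2-gradation}). This identification is natural in $M$ and $N$, since any morphism in $\OO[m]_n$ preserves weight spaces.

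Finally, one must verify this identification is an isomorphism of $\G[m]_k$-modules. This is forced by the definitions: $\G[m]_k\subseteq \G[m]_n$ acts on $M\otimes N$ by the standard coproduct, and the restriction of this coproduct to $\G[m]_k$ coincides with the coproduct defining the tensor product in $\OO[m]_k$. The barred case $\ov{\mathfrak{tr}}^n_k$ is handled verbatim using part (ii) of \lemref{weight-decomposition2}. There is no serious obstacle here; the whole argument is a bookkeeping exercise whose only nontrivial input is precisely \lemref{weight-decomposition2}, which is why that lemma was recorded just before the statement.
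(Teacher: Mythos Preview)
Your argument is correct and matches the paper's approach: the paper simply records ``By \lemref{weight-decomposition}, we immediately have the following lemma,'' and your write-up is just the unpacking of that sentence. One small imprecision worth cleaning up: you cite \lemref{weight-decomposition2}, but that lemma is stated for the passage $\w\Xi\to\Xi[m]$ (resp.\ $\ov\Xi[m]$), not for $\Xi[m]_n\to\Xi[m]_k$; as you yourself note in the parenthetical, the statement you actually need is \lemref{weight-decomposition}, which applies directly in $\OO[m]_n$ and $\ov\OO[m]_n$ and gives $\mu,\gamma\in\Xi[m]_k$ whenever $\mu+\gamma\in\Xi[m]_k$.
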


The following proposition will be useful later on. It may be proved in a similar way to the proof of \cite[Lemma 3.2]{CLW11}.
\begin{prop} \label{trun}
Let $0\le k<n \leq\infty$ and ${\mu}\in {\cP}[m]_n^+$. Suppose $V_i=\Delta(\G[m]_i,\mu)$ or $L(\G[m]_i,\mu)$ for $i=k, n$. Then
 $$
 {\mf{tr}}^n_k (V_n)=
 \begin{cases}
       V_k & \mbox{if}\ \ {\mu}\in  {\cP}[m]_k^+; \\
       0 & \mbox{otherwise}.
     \end{cases}
 $$
A similar statement holds for $\ov{\mf{tr}}^n_k$.
\end{prop}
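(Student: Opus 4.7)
The plan is to treat the vanishing case $\mu \notin \cP[m]_k^+$ and the non-vanishing case $\mu \in \cP[m]_k^+$ separately, and within the latter to handle the Verma module first and then deduce the irreducible case by exactness. The fundamental observation underlying the argument is that $\mf{tr}^n_k(V_n)$ inherits a natural $\G[m]_k$-module structure from $V_n$: since every root of $\G[m]_k$ involves only $\epsilon_r$ with $r \in \J^+_m(k)$, the $\G[m]_k$-action preserves the $\Xi[m]_k$-weight subspace by \lemref{weight-decomposition}.

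For the vanishing case, fix $r_0 \in \J^+_m(n) \setminus \J^+_m(k)$ with $\mu(E_{r_0}) > 0$. For any weight $\nu = \mu - \beta$ of $V_n$ with $\beta \in \sum_{\alpha \in \Phi[m]_n^+} \Z_+ \alpha$, a direct check of the Borel convention from \secref{dynkin} shows $\sum_{r \in \J^+_m(n) \setminus \J^+_m(k)} \alpha(E_r) \le 0$ for every positive root $\alpha$: Levi roots $\epsilon_{r'} - \epsilon_{s'}$ with $r' <_{\J_m} s'$ contribute $0$ or $-1$, while the non-Levi roots of the form $-\epsilon_p - \epsilon_q$ contribute nonpositively. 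Summing over $\alpha$ (weighted by $c_\alpha$) yields $\sum_{r > k} \beta(E_r) \le 0$, so $\sum_{r > k} \nu(E_r) = \sum_{r > k} \mu(E_r) - \sum_{r > k} \beta(E_r) \ge \mu(E_{r_0}) > 0$, contradicting $\nu \in \Xi[m]_k$. Hence $\mf{tr}^n_k(V_n) = 0$.

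For $\mu \in \cP[m]_k^+$ and $V_n = \Delta(\G[m]_n, \mu)$, the same analysis, refined by imposing $\beta(E_r) = 0$ individually for each $r \in \J^+_m(n) \setminus \J^+_m(k)$, shows inductively (descending from $r_0 = n$) that $c_\alpha = 0$ for every positive root $\alpha$ involving an index outside $\J^+_m(k)$. Combined with the PBW identification $\Delta(\G[m]_n, \mu) \cong U(\mathfrak{n}^-[m]_n) \otimes L(\mf{l}[m]_n, \mu)$ (where $\mathfrak{n}^-[m]_n$ is the nilpotent radical of the opposite parabolic) and the Levi-level version of the same vanishing argument applied to $L(\mf{l}[m]_n, \mu)$, this produces an isomorphism $\mf{tr}^n_k(\Delta(\G[m]_n, \mu)) \cong \Delta(\G[m]_k, \mu)$ of $\G[m]_k$-modules.

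Finally for $V_n = L(\G[m]_n, \mu)$, apply the exact functor $\mf{tr}^n_k$ to $0 \to N_n \to \Delta(\G[m]_n, \mu) \to L(\G[m]_n, \mu) \to 0$ with $N_n$ the maximal proper submodule, obtaining $0 \to \mf{tr}^n_k(N_n) \to \Delta(\G[m]_k, \mu) \to \mf{tr}^n_k(L(\G[m]_n, \mu)) \to 0$ by the Verma case. The inclusion $\mf{tr}^n_k(N_n) \subseteq N_k$ is automatic from nonvanishing. The hard part is the reverse inclusion, where the positive-root analysis pays off again: take a singular vector $v \in N_k$ with weight $\nu \in \Xi[m]_k$ (necessarily $\nu \neq \mu$) and view it inside $\Delta(\G[m]_n, \mu)$ via the PBW embedding. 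For every $\alpha \in \Phi[m]_n^+ \setminus \Phi[m]_k^+$, the weight $\nu + \alpha$ has a negative coordinate on some index $> k$, hence lies outside $\Xi[m]_n$, forcing $E_\alpha v = 0$ since every weight of $\Delta(\G[m]_n, \mu)$ lies in $\Xi[m]_n$. Thus $v$ is singular in $\Delta(\G[m]_n, \mu)$ as well, so $v \in N_n$, and since the singular vectors generate $N_k$ we conclude $N_k \subseteq \mf{tr}^n_k(N_n)$. This yields $\mf{tr}^n_k(L(\G[m]_n, \mu)) \cong L(\G[m]_k, \mu)$, and the analogous statement for $\ov{\mf{tr}}^n_k$ follows by a completely parallel argument with $\SG[m]_n$-modules.
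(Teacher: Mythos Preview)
Your overall plan---vanishing case by a weight inequality, parabolic Verma case via PBW, then the irreducible case by exactness---is the standard route (the paper itself only cites \cite[Lemma~3.2]{CLW11}), and the first two parts are essentially correct.

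The genuine gap is in the irreducible step. To pass from ``every $\G[m]_k$-singular vector of $N_k$ lies in $N_n$'' to $N_k \subseteq \mf{tr}^n_k(N_n)$ you invoke ``since the singular vectors generate $N_k$''. This is not justified, and for maximal submodules of parabolic Verma modules it is not a standard fact: if $N_k$ contains a non-split extension $0\to L(\nu_1)\to E\to L(\nu_2)\to 0$ in which no vector of weight $\nu_2$ is actually singular in $N_k$, then the $\G[m]_k$-submodule generated by the singular vectors misses the top of $E$. Nothing in your argument rules this out, and for Lie superalgebras no BGG-type embedding theorem is available to fall back on.

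The clean fix bypasses $N_k$ entirely: prove directly that $\mf{tr}^n_k\big(L(\G[m]_n,\mu)\big)$ is irreducible. It is a nonzero quotient of $\Delta(\G[m]_k,\mu)$ by your Verma case, hence cyclic on the highest weight vector. Any nonzero $\G[m]_k$-submodule contains a $\G[m]_k$-singular vector $w$ of some weight $\nu\in\Xi[m]_k$; your own weight argument (that $\nu+\alpha\notin\Xi[m]_n$ for every $\alpha\in\Phi[m]_n^+\setminus\Phi[m]_k^+$) shows $w$ is already $\G[m]_n$-singular in $L(\G[m]_n,\mu)$, forcing $w$ to be the highest weight vector and the submodule to be everything. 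The same idea repairs the vague ``Levi-level version'': for $\xx=\mf a$ there is no proper parabolic Verma to lean on, but semisimplicity of $\OO^{\mf a}[m]_k$ gives $\mf{tr}^n_k(L)=\bigoplus_i L(\mf l[m]_k,\nu_i)$, and the singular-vector argument forces each $\nu_i=\mu$ with multiplicity $\dim L_\mu=1$.
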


Given ${M}=\bigoplus_{\gamma\in\wfh^*}{M}_\gamma\in\w\OO$,  we define
\begin{align*}
T_{[m]}({M})= \bigoplus_{\gamma\in{{\fh[m]^*}}}{M}_\gamma \qquad \hbox{and}\qquad
 \ov{T}_{[m]}({M})=\bigoplus_{\gamma\in{\ovfh[m]^*}}{M}_\gamma.
\end{align*}

For $M, N \in \w\OO$ and $f\in {\rm Hom}_{\w\OO}(M,N)$, $T_{[m]}(f)$ and $\ov{T}_{[m]}({f})$ are defined to be the restrictions of $f$ to $T_{[m]}({M})$ and $\ov{T}_{[m]}({M})$, respectively. Note that $T_{[m]}({f}): T_{[m]}({M}) \to T_{[m]}({N})$ and $\ov{T}_{[m]}({f}):   \ov{T}_{[m]}({M}) \to  \ov{T}_{[m]}({N})$ are respectively a $\G[m]$-homomorphism and a $\SG[m]$-homomorphism. Moreover, the functors $T_{[m]}: \w\OO\rightarrow\OO[m]$ and $\ov{T}_{[m]}: \w\OO\rightarrow\ov\OO[m]$ are exact (cf. \cite[Proposition 6.15]{CW}).

By \lemref{weight-decomposition2}, we see that $T_{[m]}(M\otimes N)=T_{[m]}(M) \otimes T_{[m]}(N)$ and $\ov{T}_{[m]}(M\otimes N)= \ov{T}_{[m]}(M)\otimes \ov{T}_{[m]}(N)$ for all $M, N\in \w{\mathcal O}$, and so $T_{[m]}$ and $\ov{T}_{[m]}$ are tensor functors. We have the following result.
\begin{thm} \label{SD}
The following statements hold:
\begin{enumerate}[\normalfont(i)]
\item For each $m\in\Z_+$, $T_{[m]}: \w\OO\rightarrow\OO[m]$ is an equivalence of tensor categories.
\item For each $m\in\Z_+$, $\ov{T}_{[m]}: \w\OO\rightarrow\ov\OO[m]$ is an equivalence of tensor categories.
\end{enumerate}
Moreover, $T_{[m]}$ and $\ov{T}_{[m]}$ send parabolic Verma modules to parabolic Verma modules and irreducible modules to irreducible modules. More precisely, for $\wla \in \w{\cP}^+$, we have
\begin{eqnarray*}
&& T_{[m]}\big{(} {\Delta}(\DG,\wla)  {)}={\Delta}(\G[m], \la[m]), \quad T_{[m]}\big{(} L(\DG,\wla) \big{)}=L(\G[m], \la[m]), \\
&&  \ov{T}_{[m]}\big{(} {\Delta}(\DG,\wla) {)}={\Delta}(\SG[m],\ov{\la}[m]), \quad \ov{T}_{[m]} \big{(} L(\DG,\wla) \big{)}=L(\SG[m], \ov{\la}[m]).
\end{eqnarray*}
\end{thm}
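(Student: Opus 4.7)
My plan is to split the theorem into two logically independent claims: (a) that $T_{[m]}$ and $\ov T_{[m]}$ are equivalences of the underlying abelian categories that send parabolic Verma modules and irreducibles to their counterparts with weights as in \eqref{cP}; and (b) that these functors are in fact tensor functors. Part (a) is precisely super duality, established in \cite{CL10} for type $\mf a$ and in \cite{CLW11, CLW12} for types $\mf{c,d}$, and summarized as \cite[Theorem 6.38]{CW}. I would simply invoke those results. This already covers the ``moreover'' clause of the theorem, since the weight-level correspondences $\wla\mapsto\la[m]$ and $\wla\mapsto\ov\la[m]$ in \eqref{cP} are part of the super duality statements in the references, and the exactness of $T_{[m]}$ and $\ov T_{[m]}$ (noted just before \thmref{SD}) transports parabolic Verma modules to parabolic Verma modules and irreducibles to irreducibles accordingly.

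For part (b), the argument has essentially been set up in the paragraph immediately preceding the theorem. Given $M,N\in\w\OO$, every weight of $M\otimes N$ decomposes as $\mu+\gamma$ with $\mu$ a weight of $M$ and $\gamma$ a weight of $N$; by \lemref{weight-decomposition2}(i), such a sum lies in $\Xi[m]$ if and only if both $\mu$ and $\gamma$ do. Summing the matching weight spaces then yields
\[
T_{[m]}(M\otimes N)=\bigoplus_{\mu+\gamma\in\Xi[m]} M_\mu\otimes N_\gamma=T_{[m]}(M)\otimes T_{[m]}(N).
\]
Restricting $\w\OO$-morphisms to these weight subspaces makes the identification natural in $M$ and $N$, and its compatibility with the associator and unit constraints of $\w\OO$ is immediate because these are defined pointwise on weight spaces. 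An entirely analogous argument using \lemref{weight-decomposition2}(ii) treats $\ov T_{[m]}$.

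The main obstacle is really part (a): super duality requires matching Jordan--H\"older multiplicities of parabolic Verma modules across the three categories and identifying Kazhdan--Lusztig-type data, which is substantial. Fortunately this has been carried out in the references, so the only genuinely new verification in the present theorem is the tensor compatibility, which, as sketched above, is essentially immediate from \lemref{weight-decomposition2}. I would therefore keep the written proof short: a one-line appeal to \cite{CL10, CLW11, CLW12} (or \cite[Theorem 6.38]{CW}) for the equivalence of categories and the images of $\Delta(\DG,\wla)$ and $L(\DG,\wla)$, followed by the above paragraph establishing $T_{[m]}(M\otimes N)\cong T_{[m]}(M)\otimes T_{[m]}(N)$ and its analog for $\ov T_{[m]}$.
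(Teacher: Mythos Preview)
Your approach matches the paper's: the paper likewise omits a proof, pointing to \cite{CL10, CLW11} together with the techniques of \cite{CLW12} for the equivalence, and handles the tensor compatibility via \lemref{weight-decomposition2} in the paragraph immediately preceding the theorem. One small caveat: the paper is careful to say the theorem ``can be proved along the lines of'' those references using the methods of \cite[Section~7]{CLW12}, because the precise statement for general $m$ (as opposed to the degenerate $m=0$ functors $T$, $\ov T$) is not literally in \cite{CL10, CLW11}; you should phrase part~(a) the same way rather than as a direct citation.
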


Theorem \ref{SD} can be proved along the lines of the proof of the super duality in \cite{CL10, CLW11} by making use of the techniques in \cite{CLW12} (see particularly \cite[Section 7]{CLW12}). The proof is omitted here. We also call \thmref{SD} super duality.
We only need that $T_{[m]}$ and $\ov{T}_{[m]}$ are tensor functors in this paper. Note that $T_{[0]}$ and $\ov{T}_{[0]}$ are the functors $T$ and $\ov T$ in the degenerate case defined in \cite{CL10, CLW11}. \thmref{SD} also implies that the tensor categories $\OO[0]$ and $\ov\OO[0]$ for Lie algebras and the tensor categories $\OO[m]$ and $\ov\OO[m]$ for Lie superalgebras are equivalent for $m \in \N$.

\section{Unitarizable $\ovcG^\xx[m]_n$-modules} \label{unitarizable}
The notion of unitarizable modules will play an important role in our study of (super) Gaudin Hamiltonians. In this section, we start by introducing $*$-structures on $\ovcG^\xx[m]_n$ and $\SG^\xx[m]_n$. We then describe the unitarizable $\ovcG^\xx[m]_n$-modules to be studied in this paper.

We first recall some basic facts about $*$-superalgebras and their unitarizable modules. A \emph{$*$-superalgebra} is an associative superalgebra $A$ together with an anti-linear anti-involution $\omega: A \to A$ of degree $\ov 0$. A homomorphism $f: (A,\omega) \to (A^\prime, \omega^\prime)$ of $*$-superalgebras is a homomorphism of superalgebras satisfying $\omega^\prime \circ f = f \circ \omega$.
Let $(A,\omega)$ be a $*$-superalgebra, and let $V$ be a $\Z_2$-graded $A$-module.  A Hermitian form $\langle\cdot|\cdot\rangle$ on $V$ is said to be \emph{contravariant} if $\langle av | v'\rangle=\langle v |\omega(a)v'\rangle$, for all $a\in A$, $v,v'\in V$.
An $A$-module equipped with a positive definite contravariant Hermitian form is called a \emph{unitarizable} $A$-module.

A Lie superalgebra $\G$ is said to admit a \emph{$*$-structure} if $\G$ is equipped with an anti-linear anti-involution $\omega$ of degree $\ov 0$. In this case, $\omega$ is also called a $*$-structure on $\G$. A homomorphism $f: (\G, \omega) \to (\G^\prime, \omega^\prime)$ of Lie superalgebras with $*$-structures is a homomorphism of Lie superalgebras satisfying $\omega^\prime \circ f = f \circ \omega$.
Moreover, it is clear that $\omega$ is a $*$-structure on $\G$ if and only if the natural extension of $\omega$ to the universal enveloping algebra $U(\G)$ of $\G$ is an anti-linear anti-involution.
Let $(\G, \omega)$ be a Lie superalgebra with $*$-structure, and let $V$ be a $\Z_2$-graded $\G$-module. A Hermitian form $\langle\cdot|\cdot\rangle$ on $V$ is said to be \emph{contravariant} if $\langle xv|v'\rangle=\langle v|\omega(x)v'\rangle$, for all $x\in \G$, $v,v'\in V$.
A $\G$-module equipped with a positive definite contravariant Hermitian form is called a \emph{unitarizable} $\G$-module.
Note that a $\G$-module $V$ is a unitarizable $\G$-module if and only if $V$ is a unitarizable $U(\G)$-module.

 \subsection{$*$-structures on $\ovcG^\xx[m]_n$ and $\SG^\xx[m]_n$} \label{star}
 Recall that the Lie superalgebra $\widehat\gl(\w{V})$ is the central extension  of ${\gl}(\w{V})$ with a basis $\{E_{i,j}, K\,|\, i,j\in \hf \Z\}$. It admits a $*$-structure $\omega$ defined by (cf. \cite[p. 421]{LZ06})
$$\sum_{i, j \in \hf \Z} a_{ij} E_{i, j} \mapsto \sum_{i, j \in \hf \Z} (-1)^{[i]+[j]} \ov{a}_{ij} E_{j, i} \qquad\hbox{and}\qquad K \mapsto K.$$
Here $\ov{a}_{ij}$ denotes the complex conjugate of $a_{ij} \in \C$ and
 $$
[i]:=
 \begin{cases}
       1 & \mbox{if}  \ \  -i \in \hf+\Z_+;\\
       0 & \mbox{if}  \ \  -i \in\hf\Z\backslash (\hf+\Z_+).
     \end{cases}
 $$
It is evident from the spanning sets (i.e., the sets of elements described in Sections \ref{type-a}, \ref{type-c} and \ref{type-d} together with $K$) of the Lie superalgebras $\DG^{\mf{\xx}}$, $\G^{\mf{\xx}}[m]_n$ and $\SG^{\mf{\xx}}[m]_n$ that the restrictions of $\omega$ to $\DG^{\mf{\xx}}$, $\G^{\mf{\xx}}[m]_n$ and $\SG^{\mf{\xx}}[m]_n$, denoted also by $\omega$, give $*$-structures on these Lie superalgebras.

Since $\omega$ is a $*$-structure on $\widehat\gl(\w V^\times)$ and $\widehat{\varphi}$ is an involution on $\widehat\gl(\w V^\times)$, the map $\omega':=\widehat{\varphi}\circ\omega\circ\widehat{\varphi}$ is a $*$-structure on $\widehat\gl(\w V^\times)$. More precisely,
$$
\omega'(E_{r,s})=(-1)^{\tau_r+\tau_s}E_{s,r},\quad \hbox{for $r,s\in \hf\Z^*$},\qquad\hbox{and}\qquad \omega'(K)=K,
$$
where
$$
\tau_r:=  \begin{cases}
       1 & \mbox{if}  \ \  -r \in \N;\\
       0 &  \mbox{if}  \ \ -r \in\hf\Z^*\backslash \N.
     \end{cases}
$$
Via the isomorphism $\widehat{\varphi}:  \ov\G^{\mf{c}}[m]_n\longrightarrow \G^{\mf{d}}[m]_n$ given in \lemref{lem:phi_ext}, an anti-linear anti-involution $\omega$ on $\G^{\mf{d}}[m]_n$ pulls back to an anti-linear anti-involution $\omega':=\widehat{\varphi}^{-1}\circ\omega\circ\widehat{\varphi}$ on $\ov\G^{\mf{c}}[m]_n$ while, via the isomorphism $\widehat{\varphi}^{-1}:  \ov\G^{\mf{d}}[m]_n\longrightarrow\G^{\mf{c}}[m]_n$, an anti-linear anti-involution $\omega$ on $\G^{\mf{c}}[m]_n$ pulls back to an anti-linear anti-involution $\omega':=\widehat{\varphi}\circ\omega\circ\widehat{\varphi}^{-1}$ on $\ov\G^{\mf{d}}[m]_n$. In other words, the map $\widehat{\varphi}$  (resp., $\widehat{\varphi}^{-1}$) gives an isomorphism of Lie superalgebras with $*$-structures from $(\ov\G^{\mf{c}}[m]_n, \omega')$ (resp., ($\ov\G^{\mf{d}}[m]_n, \omega')$) to $(\G^{\mf{d}}[m]_n, \omega)$ (resp., $(\G^{\mf{c}}[m]_n, \omega)$). Note that the $*$-structure $\omega'$ on $\ov\G^{\mf{c}}[m]_n$ (resp., $\ov\G^{\mf{d}}[m]_n$) is the restriction of $\omega'$ defined on $\widehat\gl(\w V^\times)$.

 It is clear that the $*$-structure $\omega$ (resp., $\omega'$) induces  a $*$-structure, denoted also by $\omega$ (resp., $\omega'$),  on $\gl(\w V^\times)$. For $\xx=\mf{ c, d}$, the  restriction of $\omega$ (resp., $\omega'$) to $\ovcG^{\xx}[m]_n$ gives  a $*$-structure on $\ovcG^{\xx}[m]_n$, denoted also by $\omega$ (resp., $\omega'$). We have the following proposition.

 \begin{prop}\label{*-homo_cd}
  For $\xx=\mf {c, d}$, the restriction of the isomorphism $\iota: \ovcG^{\xx}[m]_n \oplus \C K \longrightarrow \SG^{\xx}[m]_n$ defined by \eqref{iso-e} to $\ovcG^{\xx}[m]_n$ gives two monomorphisms $\ov\iota: (\ovcG^{\xx}[m]_n, \omega) \longrightarrow (\SG^{\xx}[m]_n, \omega)$ and $\ov\iota': (\ovcG^{\xx}[m]_n, \omega') \longrightarrow (\SG^{\xx}[m]_n,  \omega')$ of Lie superalgebras with $*$-structures.
 \end{prop}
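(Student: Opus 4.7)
The plan is to verify the only non-trivial content of the proposition, namely that $\ov\iota$ intertwines the indicated $*$-structures on source and target. Injectivity is immediate from $\iota$ being an isomorphism, and the Lie superalgebra homomorphism property of $\ov\iota$ is just the restriction of the corresponding property of $\iota$, which in turn rests on the standard identity $\mbox{Str}(J[A,B]) = \mbox{Str}([J,A]B)$ hidden inside the formula \eqref{iso-e}.

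For the $\omega$ case I would use the explicit formula $\ov\iota(A) = A + \mbox{Str}(JA)\, K$ and expand both sides of the desired identity $\omega \circ \ov\iota = \ov\iota \circ \omega$. Since $\omega$ is anti-linear and fixes $K$, one obtains
\[
\omega(\ov\iota(A)) = \omega(A) + \overline{\mbox{Str}(JA)}\, K, \qquad \ov\iota(\omega(A)) = \omega(A) + \mbox{Str}(J\omega(A))\, K,
\]
so the claim reduces to the scalar identity $\overline{\mbox{Str}(JA)} = \mbox{Str}(J\omega(A))$ for every $A\in\ovcG^\xx[m]_n$.

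To verify this reduced identity, I would observe that on diagonal basis elements $\omega(E_{r,r}) = (-1)^{2[r]} E_{r,r} = E_{r,r}$, so if $A = \sum a_{ij}E_{i,j}$ then the $(r,r)$-entry of $\omega(A)$ is precisely $\overline{a_{rr}}$. Since $J = -\sum_{r \in \hf+\Z_+} E_{r,r}$ is diagonal and each $v_r$ for $r\in \hf + \Z_+$ is odd, expanding the two supertraces using the convention $\mbox{Str}(X) = \sum_i (-1)^{|v_i|} X_{ii}$ shows that both are sums indexed by $r\in \hf + \Z_+$ taking the values $a_{rr}$ and $\overline{a_{rr}}$ with identical sign patterns, so the two scalars are indeed complex conjugates of each other. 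The $\omega'$ case follows by the same argument since $\omega'(E_{r,r}) = (-1)^{2\tau_r} E_{r,r} = E_{r,r}$ as well; alternatively one can transport the $\omega$-statement through the isomorphism $\widehat\varphi$ of \lemref{lem:phi_ext}, noting that the sign $\widehat\varphi(K) = -K$ appears on both sides of the identity and cancels.

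The main obstacle is simply the careful bookkeeping of sign conventions: the parity factor in the supertrace, the minus sign in the definition of $J$, and the signs in the defining formulas of $\omega$ and $\omega'$ must be tracked consistently. Because only diagonal entries of $A$ enter the computation, and because both $\omega$ and $\omega'$ act as complex conjugation on each diagonal basis vector, the argument is uniform in $\xx\in\{\mf c, \mf d\}$.
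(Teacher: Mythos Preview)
Your argument is correct in spirit and the paper actually states this proposition without proof, so there is nothing to compare. One small slip: you write $J = -\sum_{r\in\hf+\Z_+} E_{r,r}$, but the paper's $J$ ranges over \emph{all} $r\ge \hf$ in $\hf\Z$, including the positive integers. This does not damage your conclusion, because your key observation that $\omega(E_{r,r}) = (-1)^{2[r]}E_{r,r} = E_{r,r}$ (and likewise $\omega'(E_{r,r}) = E_{r,r}$) holds for every $r$, so the $(r,r)$-entry of $\omega(A)$ is $\overline{a_{rr}}$ regardless of whether $r$ is an integer or a half-integer; the supertrace identity $\overline{\mbox{Str}(JA)} = \mbox{Str}(J\omega(A))$ then follows term by term over the full index set.
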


From now on, we denote by $L(\cG^\xx[m]_n, \mu)$ (resp., $L(\ovcG^\xx[m]_n, \mu)$) the irreducible highest weight $\cG^\xx[m]_n$(resp., $\ovcG^\xx[m]_n$)-module with highest weight $\mu \in {\bf h}^\xx[m]_n^*$ (resp., $\ov{\bf h}^\xx[m]_n^*$). 
Recall the irreducible module $L(\DG^\xx, \mu)$ (resp., $L(\G^\xx[m]_n, \mu)$ and $L(\SG^\xx[m]_n, \mu)$) for $\mu \in \wfh^{\xx *}$  (resp., $\fh^\xx [m]_n^{*}$ and $\ov{\fh}^\xx [m]_n^{*}$) defined in Section \ref{category-O}.

\subsection{Unitarizable $\ovcG^{\mf a}[m]_n$-modules} \label{unitarizable-a}
The $2$-cocycle in \eqref{central extension} is zero when it restricts to $\ovcG^{\mf a}[m]_n \times \ovcG^{\mf a}[m]_n$, and we may identify $\ovcG^{\mf a}[m]_n$ as a subalgebra of $\SG^{\mf a}[m]_n$. Thus, the restriction of $\omega$ on $\SG^{\mf a}[m]_n$ to $\ovcG^{\mf a}[m]_n$ is a $*$-structure on $\ovcG^{\mf a}[m]_n$, which we also denote by $\omega$. More precisely,
$$\omega (E_{i, j} )=E_{j, i}  \qquad \hbox{for} \quad i,j\in \ov{\J}^+_m(n).$$
Recall that $\mathcal{P}(m|n)$ denotes the set of $(m|n)$-hook partitions.
Define 
\begin{align*}
 {\mc Q}^{\mf a, I}(m|n)&:=\setc*{\sum_{i=1}^{m} \la^\prime_i \ep_{i-\hf}+\sum_{j=1}^n \left\langle \la_j-m \right\rangle  \ep_{j} \in {{\bf h}}^{\mf a}[m]_n^{*}}{ \la' \in \cP(m|n)}, \\
 \ov{\mc Q}^{\mf a, I}(m|n)&:=\setc*{\sum_{i=1}^m \la_i \ep_i+\sum_{j=1}^n \langle \la^\prime_j -m \rangle\ep_{j-\hf}\in {\ov{\bf h}}^{\mf a}[m]_n^{*}}{ \la \in \cP(m|n)}.
\end{align*}

The following proposition is well known (see, e.g., \cite[Theorems 3.2 and 3.3]{CLZ} with $p=q=0$).

\begin{prop}
For $\xi \in \ov{\mc Q}^{\mf a, I}(m|n)$, $L(\ovcG^{\mf a}[m]_n,\xi)$ is a unitarizable $\ovcG^{\mf a}[m]_n$-module with respect to the $*$-structure $\omega$.
\end{prop}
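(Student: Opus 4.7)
The plan is to realize each $L(\ovcG^{\mf a}[m]_n, \xi)$ for $\xi \in \ov{\mc Q}^{\mf a, I}(m|n)$ as an irreducible summand of a tensor power of the natural module over $\ovcG^{\mf a}[m]_n \cong \gl(m|n)$, and then to transport a positive definite contravariant Hermitian form from the ambient tensor power to the summand.

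First I would endow the natural module $\ov{V}^+_{\! \! m}(n)$ with the standard positive definite Hermitian form $\langle v_r | v_s \rangle = \delta_{r,s}$ on the ordered basis $\{v_r\,|\, r \in \ov{\J}^+_m(n)\}$. Since $\omega(E_{i,j}) = E_{j,i}$ and $E_{i,j} v_r = \delta_{j,r} v_i$, a direct check gives
\[
\langle E_{i,j} v_r | v_s \rangle = \delta_{i,s}\delta_{j,r} = \langle v_r | E_{j,i} v_s \rangle,
\]
so $\ov{V}^+_{\! \! m}(n)$ is a unitarizable $\ovcG^{\mf a}[m]_n$-module. Then for each $k \in \N$, the $k$-fold tensor product $(\ov{V}^+_{\! \! m}(n))^{\otimes k}$ carries the tensor-product Hermitian form, which is again positive definite and contravariant with respect to the diagonal action; hence all such tensor powers are unitarizable.

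Next I would invoke the Berele--Regev (Sergeev) decomposition of $(\C^{m|n})^{\otimes k}$ into irreducible $\gl(m|n)$-modules: the summands are parameterized exactly by $(m|n)$-hook partitions $\la$ with $|\la| = k$, and the $\gl(m|n)$-highest weight of the summand labeled by $\la$ is
\[
\sum_{i=1}^{m} \la_i \ep_i + \sum_{j=1}^{n} \langle \la^\prime_j - m\rangle \ep_{j-\hf},
\]
which is exactly the general element of $\ov{\mc Q}^{\mf a, I}(m|n)$. Thus every $L(\ovcG^{\mf a}[m]_n, \xi)$ with $\xi \in \ov{\mc Q}^{\mf a, I}(m|n)$ embeds as a $\omega$-stable submodule of some unitarizable tensor power. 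The restriction of the ambient positive definite Hermitian form to this submodule is still $\omega$-contravariant and nonzero, hence (being an invariant form on an irreducible module) a positive scalar multiple of any contravariant form there, and in particular positive definite.

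The only nontrivial ingredient is the Berele--Regev classification identifying the highest weights of the irreducible summands with the elements of $\ov{\mc Q}^{\mf a, I}(m|n)$; this, together with the short computation above, is precisely the content of \cite[Theorems 3.2 and 3.3]{CLZ} at $p=q=0$, which I would cite directly rather than rederive.
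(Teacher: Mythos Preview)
Your proposal is correct and is essentially the argument behind the reference the paper cites: the paper gives no proof of its own but simply appeals to \cite[Theorems 3.2 and 3.3]{CLZ} with $p=q=0$, and your sketch (natural module is unitarizable for $\omega$, tensor powers remain unitarizable, Sergeev/Berele--Regev duality realizes every $L(\ovcG^{\mf a}[m]_n,\xi)$ with $\xi\in\ov{\mc Q}^{\mf a,I}(m|n)$ inside such a tensor power) is exactly that argument unpacked. One small wording quibble: ``$\omega$-stable submodule'' is not quite the right phrase, since $\omega$ acts on the Lie superalgebra rather than on the module; what you mean (and use) is simply that the restriction of a contravariant positive definite form to any submodule is again contravariant and positive definite, which is immediate.
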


\begin{rem}
The modules appearing in the above proposition are exactly the irreducible polynomial modules over $\ovcG^{\mf a}[m]_n$ (see, e.g., \cite[Proposition 3.26]{CW}).
\end{rem}

Recall that $\la[m]$ and $\ov\la[m]$ are defined in \eqref{weight:Im} and \eqref{weight:ovIm}, respectively.
For $n\in\N\cup\{\infty\}$, let
\begin{align*}
 Q^{\mf a}(m|n)&:=\{\la[m]\in {\mf h}^{\mf a}[m]_n^{*}\,|\,  \la'\in \cP(m|n), \, d=0\}, \\
 \ov Q^{\mf a}(m|n)&:=\{\ov\la[m]\in \ov{\mf h}^{\mf a}[m]_n^{*}\,|\,  \la\in \cP(m|n), \, d=0\}.
\end{align*}
The conditions for $\la, \la'$ in $\cP(m|n)$ are unnecessary when $n=\infty$.
The restrictions of the bijections given in \eqnref{cP} give  the following bijections:
\begin{equation}\label{Qa}
   Q^{\mf a}(m'|\infty)\leftrightarrow  \wt Q^{\mf a}:=\w{\cP}^+(0) \leftrightarrow \ov Q^{\mf a}(m|\infty)  \qquad \qquad \hbox{for $m, m'\in \Z_+$.}
\end{equation}
These sets will be used in Section \ref{diag}.

\subsection{Unitarizable modules over $\ovcG^{\mf c}[m]_n$ and $\ovcG^{\mf d}[m]_n$} \label{unitarizable-cd}
In this subsection, we will restrict our attention to $\xx=\mf {c, d}$. There are two types of unitarizable highest weight modules over $\ovcG^{\xx}[m]_n$ corresponding to the $*$-structures $\omega$ and $\omega'$ defined above.

The Lie superalgebra ${\mc C}^f$ (resp., ${\mc D}^f$) defined in \cite{LZ06} is our $\wcG^{\mf c}$ (resp., $\wcG^{\mf d}$) while $\widehat{\mc C}^f$ (resp., $\widehat{\mc D}^f$) is our $\DG^{\mf c}$ (resp., $\DG^{\mf d}$). Also, the set of the unitarizable quasi-finite irreducible highest weight modules over $\widehat{\mc C}$ (resp., $\widehat{\mc D}$) described in [Proposition 5.8]\cite{LZ06}(resp., [Proposition 5.9]\cite{LZ06}) are the set of unitarizable irreducible highest weight modules over $\widehat{\mc C}^f$ (resp., $\widehat{\mc D}^f$). Recall that $\w\la$ is defined in \eqref{weight:wtIm}.
Let
\begin{align*}
\w Q^{\mf c}&:=\setc*{\w\la\in \w{\mf h}^{\mf c\, *}}{ \la_1\le d,\,\, \la\in \cP, \, d\in \Z_+ }, \\
\w Q^{\mf d}&:=\setc[\Big]{\w\la\in \w{\mf h}^{\mf d\, *}}{ \la_1+ \la_2\le 2d,\,\, \la\in \cP, \, d\in \hf \Z_+ }.
\end{align*}
Reformulating the results in \cite{LZ06} in terms of our notations, we obtain the following proposition.

\begin{prop}\label{prop:wUweight}
  \begin{enumerate} [\normalfont(i)]
  \item An irreducible highest weight $\DG^{\mf c}$-module $M$ is unitarizable with respect to $\omega$ if and only if $M\cong L(\DG^{\mf c}, \xi)$ for some $\xi\in \w Q^{\mf c}$.
     \item An irreducible highest weight $\DG^{\mf d}$-module $M$ is unitarizable with respect to $\omega$ if and only if $M\cong L(\DG^{\mf d}, \xi)$ for some $\xi\in \w Q^{\mf d}$.
    \end{enumerate}
\end{prop}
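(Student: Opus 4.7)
The plan is to deduce the proposition as a direct reformulation of the classification of unitarizable quasi-finite irreducible highest weight modules established in \cite{LZ06} (specifically, Proposition~5.8 there for type $\mf c$ and Proposition~5.9 for type $\mf d$). Under the identifications $\DG^{\mf c}\cong\widehat{\mc C}^f$ and $\DG^{\mf d}\cong\widehat{\mc D}^f$, and with the Cartan and Borel subalgebras matched as in Section~\ref{pre}, the $*$-structure $\omega$ defined in Section~\ref{star} coincides with the one used in \cite{LZ06} (cf.\ \cite[p.~421]{LZ06}). A preliminary remark is that every unitarizable irreducible highest weight module in this setting is automatically quasi-finite, since the contravariant Hermitian form is positive definite and orthogonally decomposes along Cartan eigenspaces; hence the \cite{LZ06} list exhausts the irreducible unitarizable highest weight modules we are to classify.

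The second step is to translate the admissibility conditions of \cite{LZ06} into our parametrization $\w\la = \sum_{r\in\hf\N}\theta(\la)_r\ep_r + d\La_0$, where $\theta(\la)$ denotes the modified Frobenius coordinates of a partition $\la$ and $d = \w\la(K)$ is the central charge. The numerical sequences labelling admissible unitarizable highest weights in \cite[Prop.~5.8]{LZ06} correspond, via the modified Frobenius decomposition, exactly to pairs $(\la,d)$ with $\la\in\cP$, $d\in\Z_+$, and $\la_1\le d$; likewise \cite[Prop.~5.9]{LZ06} translates into $\la\in\cP$, $d\in\hf\Z_+$, and $\la_1+\la_2\le 2d$. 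These are precisely the membership conditions defining $\w Q^{\mf c}$ and $\w Q^{\mf d}$, which yields both directions of the equivalence.

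The principal technical obstacle is the bookkeeping required in this translation. One must track the shift induced by the isomorphism $\iota$ of \eqref{iso-e}, which twists the Cartan action by $\mathrm{Str}(JA)K$ with $J=-\sum_{r\ge\hf}E_{r,r}$, and reconcile the labelling conventions of \cite{LZ06} with the $\ep_r$-indexing employed here by way of the modified Frobenius coordinates $\theta(\la)$. Once these identifications are applied consistently, the positivity inequalities appearing in \cite{LZ06} collapse termwise into the stated bounds on $\la_1$ for type $\mf c$ and on $\la_1+\la_2$ for type $\mf d$, completing the proof.
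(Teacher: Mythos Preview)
Your proposal is correct and matches the paper's own treatment: the paper does not give a proof for this proposition but simply states that it is obtained by ``reformulating the results in \cite{LZ06} in terms of our notations,'' citing \cite[Propositions~5.8 and~5.9]{LZ06} exactly as you do. Your additional remarks on quasi-finiteness and on the bookkeeping via modified Frobenius coordinates and the shift from $\iota$ flesh out what the paper leaves implicit, but the approach is the same.
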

Recall $\la[m]$ and $\ov\la[m]$ defined in \eqref{weight:Im} and \eqref{weight:ovIm}, respectively.
For $n\in\N\cup\{\infty\}$, let
\begin{align*}
 Q^{\mf c}(m|n)&:=\setc*{\la[m]\in {\mf h}^{\mf c}[m]_n^{*} }{ \la_1\le d,\,\, \la'\in \cP(m|n), \, d\in \Z_+ }, \\
 Q^{\mf d}(m|n)&:=\setc[\Big]{\la[m]\in {\mf h}^{\mf d}[m]_n^{*} }{  \la_1+ \la_2\le 2d,\,\, \la'\in \cP(m|n), \, d\in \hf \Z_+ },\\
\ov Q^{\mf c}(m|n)&:=\setc[\Big]{\ov\la[m]\in \ov{\mf h}^{\mf c}[m]_n^{*} }{  \la_1\le d,\,\, \la\in \cP(m|n), \, d\in \Z_+ }, \\
\ov Q^{\mf d}(m|n)&:=\setc[\Big]{\ov\la[m]\in \ov{\mf h}^{\mf d}[m]_n^{*} }{  \la_1+ \la_2\le 2d,\,\, \la\in \cP(m|n), \, d\in \hf \Z_+ }.
\end{align*}
The conditions for $\la, \la'$ in $\cP(m|n)$ are unnecessary when $n=\infty$.
The restrictions of the bijections given in \eqnref{cP} give  the following bijections:
\begin{equation}\label{Qx}
   Q^{\xx}(m'|\infty)\leftrightarrow  \wt Q^{\xx} \leftrightarrow \ov Q^{\xx}(m|\infty) \qquad \qquad \hbox{for $\xx={\mf{c, d}}$ and $m, m'\in \Z_+$.}
\end{equation}
The proof of the following is straightforward.
\begin{lem}\label{subU}
Let $\G$ be a Lie superalgebra with $*$-structure $\sigma$. Assume that $\mf u$ is a subalgebra of $\G$ such that the restriction $\sigma|_{\mf u}$ of $\sigma$ to $\mf u$ is a $*$-structure on $\mf u$. Let $V$ be a unitarizable $\G$-module with respect to $\sigma$. If $W$ is a $\mf u$-submodule of $V$, then $W$ is a unitarizable $\mf u$-module with respect to $\sigma|_{\mf u}$.
\end{lem}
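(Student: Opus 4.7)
The plan is to take the positive definite contravariant Hermitian form supplied by the unitarizability of $V$ and restrict it to $W$, then verify that the three required properties (Hermitian, positive definite, contravariant with respect to $\sigma|_{\mf u}$) are inherited.

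Concretely, I would let $\langle \cdot | \cdot \rangle$ denote the positive definite contravariant Hermitian form on $V$ making it unitarizable over $(\G,\sigma)$, and define $\langle \cdot | \cdot \rangle_W$ on $W$ to be the restriction of $\langle \cdot | \cdot \rangle$ to $W \times W$. That this restriction is sesquilinear and Hermitian is immediate from the corresponding properties on $V$. Positive definiteness on $W$ is also immediate: if $w \in W$ is nonzero then it is nonzero in $V$, so $\langle w | w \rangle_W = \langle w | w \rangle > 0$, and conversely $\langle 0 | 0 \rangle_W = 0$.

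The only point that uses the hypotheses on $\mf u$ is the contravariance. For any $x \in \mf u$ and any $w, w' \in W$, the element $\sigma|_{\mf u}(x)$ coincides with $\sigma(x)$ as an element of $\G$, and since $\mf u$ acts on $W$ we have $\sigma|_{\mf u}(x) w' \in W$. Thus
\begin{equation*}
\langle x w | w' \rangle_W = \langle x w | w' \rangle = \langle w | \sigma(x) w' \rangle = \langle w | \sigma|_{\mf u}(x) w' \rangle_W,
\end{equation*}
where the middle equality uses contravariance of $\langle \cdot | \cdot \rangle$ with respect to $\sigma$ on $V$. This shows $\langle \cdot | \cdot \rangle_W$ is contravariant with respect to $\sigma|_{\mf u}$, and hence $W$ is a unitarizable $\mf u$-module. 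There is no real obstacle here; the argument is essentially the observation that unitarizability passes to submodules whenever the $*$-structure restricts to a $*$-structure on the acting subalgebra, which is built into the hypothesis on $\sigma|_{\mf u}$.
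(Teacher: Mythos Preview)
Your argument is correct and is exactly the straightforward verification the paper has in mind; the paper in fact omits the proof entirely, noting only that it is straightforward. The one point worth making explicit is that the hypothesis that $\sigma|_{\mf u}$ is a $*$-structure on $\mf u$ ensures $\sigma(x)\in\mf u$ for $x\in\mf u$, which is what makes $\sigma|_{\mf u}(x)w'$ land in $W$ and the restricted pairing well defined in your contravariance chain---you use this implicitly, and it is the only place the hypothesis enters.
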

The following proposition is a direct consequence of \propref{prop:wUweight} and \lemref{subU}.
\begin{prop}\label{prop:ovUweight}
  \begin{enumerate} [\normalfont(i)]
  \item For $\xi\in  Q^{\mf c}(m|n)$, $L(\G^{\mf c}[m]_n, \xi)$ is a unitarizable $\G^{\mf c}[m]_n$-module with respect to $\omega$.
  \item For $\xi\in  Q^{\mf d}(m|n)$, $L(\G^{\mf d}[m]_n, \xi)$ is a unitarizable $\G^{\mf d}[m]_n$-module with respect to $\omega$.
    \item For $\xi\in \ov Q^{\mf c}(m|n)$, $L(\ov\G^{\mf c}[m]_n, \xi)$ is a unitarizable $\SG^{\mf c}[m]_n$-module with respect to $\omega$.
  \item For $\xi\in \ov Q^{\mf d}(m|n)$, $L(\ov\G^{\mf d}[m]_n, \xi)$ is a unitarizable $\SG^{\mf d}[m]_n$-module with respect to $\omega$.
    \end{enumerate}
\end{prop}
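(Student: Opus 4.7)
The plan is to reduce each assertion to \propref{prop:wUweight} via \lemref{subU}. The key structural observation is that there are chains of Lie subalgebra inclusions
\[
\G^\xx[m]_n \subset \G^\xx[m] \subset \DG^\xx \qquad\text{and}\qquad \SG^\xx[m]_n \subset \SG^\xx[m] \subset \DG^\xx,
\]
sitting inside $\widehat{\gl}(\widetilde{V})$, and the $*$-structure $\omega$ on $\widehat{\gl}(\widetilde{V})$ restricts along each inclusion to the $*$-structure specified in \secref{star}. The descent from the infinite-rank setting down to the finite-rank one is carried out by the tensor functors $T_{[m]}$ and $\ov{T}_{[m]}$ of \thmref{SD}, together with the truncation functors $\mf{tr}^\infty_n$ and $\ov{\mf{tr}}^\infty_n$ of \propref{trun}, whose images are realized concretely as direct sums of weight spaces.

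I would detail part (i); parts (ii)--(iv) are structurally identical. Fix $\xi \in Q^{\mf c}(m|n)$ and write $\xi = \la[m]$ with $\la$ a partition satisfying $\la_1 \le d$ and $\la' \in \cP(m|n)$. The bijection \eqnref{Qx}, together with the inclusion $Q^{\mf c}(m|n) \subset Q^{\mf c}(m|\infty)$, supplies $\widetilde{\la} \in \widetilde{Q}^{\mf c}$ corresponding to $\la[m]$. By \propref{prop:wUweight}(i), $L(\DG^{\mf c}, \widetilde{\la})$ is unitarizable with respect to $\omega$ on $\DG^{\mf c}$. Applying \thmref{SD} gives $T_{[m]}(L(\DG^{\mf c}, \widetilde{\la})) = L(\G^{\mf c}[m], \la[m])$; by the very definition of the tensor functor, this image is the direct sum of those weight spaces of $L(\DG^{\mf c}, \widetilde{\la})$ whose weights lie in $\fh^{\mf c}[m]^*$, and since the root vectors of $\G^{\mf c}[m]$ shift weights only within $\fh^{\mf c}[m]^*$, this subspace is in fact a $\G^{\mf c}[m]$-submodule of $L(\DG^{\mf c}, \widetilde{\la})$ viewed by restriction. \lemref{subU} now yields unitarizability of $L(\G^{\mf c}[m], \xi)$ with respect to $\omega$. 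When $n \in \N$, \propref{trun} realizes $L(\G^{\mf c}[m]_n, \xi)$ as the $\G^{\mf c}[m]_n$-submodule $\mf{tr}^\infty_n(L(\G^{\mf c}[m], \xi))$ of $L(\G^{\mf c}[m], \xi)$, and a second application of \lemref{subU} completes part (i).

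Parts (ii), (iii), (iv) follow from the same argument with evident substitutions: replace $\mf c$ by $\mf d$ in (ii); replace $T_{[m]}$ and $\mf{tr}^\infty_n$ by $\ov{T}_{[m]}$ and $\ov{\mf{tr}}^\infty_n$, and use the chain $\SG^\xx[m]_n \subset \SG^\xx[m] \subset \DG^\xx$, in (iii) and (iv). The only technical points requiring care are (a) the compatibility of the $*$-structure under restriction to subalgebras, which is immediate from the matrix-unit formula for $\omega$ on $\widehat{\gl}(\widetilde{V})$, and (b) the identification of the images of the tensor and truncation functors as honest submodules rather than subquotients, which follows from their explicit realization as weight-space direct sums. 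No step presents a genuine obstacle; the proposition is essentially a bookkeeping exercise once the inclusions and functors are in place.
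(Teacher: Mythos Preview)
Your proposal is correct and follows essentially the same approach as the paper: realize $L(\G^\xx[m]_n,\xi)$ (resp., $L(\SG^\xx[m]_n,\xi)$) as a submodule of the unitarizable $\DG^\xx$-module $L(\DG^\xx,\wla)$ via the functors $T_{[m]}$ and $\mf{tr}^\infty_n$ (resp., $\ov T_{[m]}$ and $\ov{\mf{tr}}^\infty_n$), then invoke \lemref{subU}. The only cosmetic differences are that the paper details part (iii) rather than (i), and applies \lemref{subU} once to the composite $\ov{\mf{tr}}^\infty_n\circ\ov T_{[m]}$ rather than in two separate steps.
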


\begin{proof}
  To show (iii), let $\la\in \cP(m|n)$ and $d\in \Z_+$ be such that $\la_1\le d$. Then $\ov\la[m]\in \ov Q^{\mf c}(m|n)$ and
 $  L(\SG^{\mf c}[m]_n, \ov\la[m])=\ov{\mf {tr}}^\infty_n \big(\ov T_{[m]}( L(\w\G^{\mf c}, \w\la)) \big)$.
  By \lemref{subU}, $  L(\SG^{\mf c}[m]_n, \ov\la[m])$ is a unitarizable module with respect to $\omega$. The other parts can be proved by a similar argument.
\end{proof}
\begin{defn}
For $\xx=\mf{c, d}$, a $\ovcG^{\xx}[m]_n$-module $M$ is said to be a {\em unitarizable module of type I (resp., II)} if $M$ is unitarizable with respect to the $*$-structure $\omega$ (resp., $\omega'$).
\end{defn}
Let
$$
\mathbbm{1}_{m|n}=\sum_{i=1}^m \ep_i-\sum_{j=1}^n \ep_{j-\hf}.
$$
For $\la \in \mathcal{P}(m|n)$, we define
$$
\ovla=\sum_{i=1}^m \la_i \ep_i+\sum_{j=1}^n \langle \la^\prime_j -m \rangle\ep_{j-\hf}.
$$
Let
\begin{align*}
\ov{\mc Q}^{\mf c, I}(m|n)&:=\setc[\Big]{\ov\la-d\mathbbm{1}_{m|n}\in \ov{\bf h}^{\mf c}[m]_n^{*} }{  \la_1\le d,\,\, \la\in \cP(m|n), \, d\in \Z_+ }, \\
\ov{\mc Q}^{\mf d, I}(m|n)&:=\setc[\Big]{\ov\la-d\mathbbm{1}_{m|n}\in \ov{\bf h}^{\mf d}[m]_n^{*} }{  \la_1+ \la_2\le 2d,\,\, \la\in \cP(m|n), \, d\in \hf \Z_+ },\\
\ov{\mc Q}^{\mf c, II}(m|n)&:=\setc[\Big]{\ov\la+d\mathbbm{1}_{m|n}\in \ov{\bf h}^{\mf c}[m]_n^{*} }{  \la'_1+ \la'_2\le 2d,\,\, \la \in \cP(m|n), \, d\in \hf \Z_+ },\\
 \ov{\mc Q}^{\mf d, II}(m|n)&:=\setc[\Big]{\ov\la+d\mathbbm{1}_{m|n}\in \ov{\bf h}^{\mf d}[m]_n^{*} }{  \la'_1\le d,\,\, \la \in \cP(m|n), \, d\in \Z_+ }.
\end{align*}

\begin{prop}\label{prop: U_typeI_II}
  \begin{enumerate} [\normalfont(i)]
  \item For $\xi\in  \ov{\mc Q}^{\mf c, I}(m|n)$, $L(\ovcG^{\mf c}[m]_n, \xi)$ is a unitarizable $\ovcG^{\mf c}[m]_n$-module of type I.
  \item For $\xi\in  \ov{\mc Q}^{\mf c, II}(m|n)$, $L(\ovcG^{\mf c}[m]_n, \xi)$ is a unitarizable $\ovcG^{\mf c}[m]_n$-module of type II.
   \item For $\xi\in  \ov{\mc Q}^{\mf d, I}(m|n)$, $L(\ovcG^{\mf d}[m]_n, \xi)$ is a unitarizable $\ovcG^{\mf d}[m]_n$-module of type I.
  \item For $\xi\in  \ov{\mc Q}^{\mf d, II}(m|n)$, $L(\ovcG^{\mf d}[m]_n, \xi)$ is a unitarizable $\ovcG^{\mf d}[m]_n$-module of type II.
    \end{enumerate}
\end{prop}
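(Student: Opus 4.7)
My plan is to derive each of (i)--(iv) from \propref{prop:ovUweight} by restricting along the $*$-monomorphisms $\ov\iota$ and $\ov\iota'$ of \propref{*-homo_cd}, combined in the type-II cases with pullback along the $*$-isomorphisms $\widehat{\varphi}$ of \lemref{lem:phi_ext}. All of the genuine representation-theoretic input is already contained in \propref{prop:ovUweight} and \lemref{subU}; what remains is purely tracking how the Cartan weight transforms under these embeddings.

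For (i), I would start from $\xi=\ov{\la}-d\mathbbm{1}_{m|n}\in \ov{\mc Q}^{\mf c,I}(m|n)$, observe that $\ov{\la}[m]=\ov{\la}+d\La_0\in \ov Q^{\mf c}(m|n)$, and use \propref{prop:ovUweight}(iii) to conclude that $L(\SG^{\mf c}[m]_n,\ov{\la}[m])$ is unitarizable with respect to $\omega$. Restricting along $\ov\iota:(\ovcG^{\mf c}[m]_n,\omega)\hookrightarrow(\SG^{\mf c}[m]_n,\omega)$ and invoking \lemref{subU} gives unitarizability as an $\ovcG^{\mf c}[m]_n$-module. Irreducibility survives because $\iota(\ovcG^{\mf c}[m]_n)+\C K=\SG^{\mf c}[m]_n$ and $K$ acts as the scalar $d$; and since $\iota(E^{\mf c}_r)=E^{\mf c}_r+\operatorname{Str}(JE^{\mf c}_r)K$, the elementary calculation $\operatorname{Str}(JE^{\mf c}_r)=-1$ for $r\in\N$ and $\operatorname{Str}(JE^{\mf c}_r)=+1$ for $r\in\hf+\Z_+$ identifies the restricted highest weight as $\ov{\la}-d\mathbbm{1}_{m|n}=\xi$. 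Part (iii) would then follow verbatim from \propref{prop:ovUweight}(iv).

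For (ii), given $\xi=\ov{\la}+d\mathbbm{1}_{m|n}\in \ov{\mc Q}^{\mf c,II}(m|n)$, I would set $\nu=\la'$ so that $\nu\in\cP$, $\nu'=\la\in\cP(m|n)$ and $\nu_1+\nu_2\le 2d$; this places $\nu[m]\in Q^{\mf d}(m|n)$, and \propref{prop:ovUweight}(ii) makes $L(\G^{\mf d}[m]_n,\nu[m])$ unitarizable with respect to $\omega$. The point is that by the very definition of $\omega'$, the map $\widehat{\varphi}:(\ov\G^{\mf c}[m]_n,\omega')\xrightarrow{\sim}(\G^{\mf d}[m]_n,\omega)$ is a $*$-isomorphism, so the pullback is unitarizable with respect to $\omega'$ on $\ov\G^{\mf c}[m]_n$; a further restriction along $\ov\iota':(\ovcG^{\mf c}[m]_n,\omega')\hookrightarrow(\ov\G^{\mf c}[m]_n,\omega')$ together with \lemref{subU} produces a type-II unitarizable $\ovcG^{\mf c}[m]_n$-module. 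Part (iv) is analogous, using $\widehat{\varphi}^{-1}:(\ov\G^{\mf d}[m]_n,\omega')\xrightarrow{\sim}(\G^{\mf c}[m]_n,\omega)$ and \propref{prop:ovUweight}(i). The main obstacle, and the only real work in the proof, is the bookkeeping: one must combine the shift from $\widehat{\varphi}(K)=-K$ with the shift built into $\ov\iota'$ via $\operatorname{Str}(J\,\cdot\,)K$, and then use the explicit formulas $\widehat{\varphi}(E^{\mf c}_r)=E^{\mf d}_{r-\hf}$ for $r\in\{1,\ldots,m\}$ and $\widehat{\varphi}(E^{\mf c}_r)=E^{\mf d}_{r+\hf}$ for $r\in\hf+\Z_+$ coming from \eqref{phi_cd}, in order to verify that the net $\ovcG^{\mf c}[m]_n$-highest weight is
\begin{equation*}
\sum_{i=1}^m(\nu'_i+d)\ep_i+\sum_{j=1}^n(\langle\nu_j-m\rangle-d)\ep_{j-\hf}=\ov{\la}+d\mathbbm{1}_{m|n}=\xi,
\end{equation*}
and likewise in case (iv). Once this matching is checked, all four parts of the proposition follow without further argument.
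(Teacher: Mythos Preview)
Your proposal is correct and follows essentially the same route as the paper: for (i) and (iii) you restrict $L(\SG^{\xx}[m]_n,\ov\la[m])$ along $\ov\iota$ using \propref{prop:ovUweight}(iii)/(iv), and for (ii) and (iv) you pull $L(\G^{\mf d}[m]_n,\la'[m])$ (respectively $L(\G^{\mf c}[m]_n,\la'[m])$) back along $\widehat{\varphi}\circ\ov\iota'$ using \propref{prop:ovUweight}(ii)/(i), exactly as the paper does. Your explicit computation of $\operatorname{Str}(JE^{\xx}_r)$ and the resulting weight shift simply unpacks what the paper records as the statement that the induced $\ovcG^{\xx}[m]_n$-module is isomorphic to $L(\ovcG^{\xx}[m]_n,\ov\la\mp d\mathbbm{1}_{m|n})$.
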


\begin{proof}
  To show (i), let $\la\in \cP(m|n)$ and $d\in \Z_+$ be such that $\la_1\le d$. The $\ovcG^{\mf c}[m]_n$-module structure, induced by the monomorphism $\ov\iota: (\ovcG^{\mf c}[m]_n, \omega) \longrightarrow (\SG^{\mf c}[m]_n, \omega)$ of Lie superalgebras with $*$-structures given in \propref{*-homo_cd}, on $L(\SG^{\mf c}[m]_n, \ov\la[m])$ is isomorphic to $L(\ovcG^{\mf c}[m]_n, \ov\la-d\mathbbm{1}_{m|n})$. By \propref{prop:ovUweight}(iii), part (i) follows.
  To show (ii), let $\la \in \cP(m|n)$ and $d\in \hf\Z_+$ be such that $\la'_1+ \la'_2\le 2d$. The map $\ov\iota': (\ovcG^{\mf c}[m]_n , \omega') \longrightarrow (\SG^{\mf c}[m]_n,  \omega')$ given in \propref{*-homo_cd} is a monomorphism of Lie superalgebras with $*$-structures, and we have shown that the map $\widehat{\varphi}$ given in \lemref{lem:phi_ext} is an isomorphism of Lie superalgebras with $*$-structures from $(\ov\G^{\mf{c}}[m]_n, \omega')$ to $(\G^{\mf{d}}[m]_n, \omega)$. The $\ovcG^{\mf c}[m]_n$-module structure induced by $\widehat{\varphi}\circ\ov\iota' $ on $L(\G^{\mf  d}[m]_n, \la'[m])$ is isomorphic to $L(\ovcG^{\mf c}[m]_n, \ov\la+d\mathbbm{1}_{m|n})$. By \propref{prop:ovUweight}(ii), part (ii) follows. The other parts can be proved by a similar argument.
\end{proof}

\begin{rem}\label{inf_U}
  For $n\in \N$, $\ovcG^{\mf c}[0]_n$ and $\ovcG^{\mf d}[0]_n$ are Lie algebras, and the highest weights in (ii) (resp., (iv)) of the above proposition are exactly the highest weights appearing in the classification of infinite-dimensional unitarizable highest weight modules over $\ovcG^{\mf c}[0]_n$ (resp., $\ovcG^{\mf d}[0]_n$) with integral (resp., half integral and integral) values  given in \cite[Sections 8 and 9]{EHW} (see also \cite[Theorem 2.5]{HLT}).
\end{rem}

\section{Gaudin Hamiltonians on modules over $\w\G$, $\G[m]_n$ and $\ov\G[m]_n$} \label{GH}
In this section, we define the Casimir symmetric tensors for the Lie (super)algebras $\w\G$, $\G[m]_n$ and $\SG[m]_n$ of infinite and finite ranks, and introduce the (super) Gaudin Hamiltonians associated to these Lie (super)algebras. Our main goal is to show that the set of singular eigenvectors of each Gaudin Hamiltonian for $\DG$ is in one-to-one correspondence with the set of singular eigenvectors of the corresponding Gaudin Hamiltonian for $\G[m]$ (resp., $\SG[m]$). Besides, each eigenvector and its corresponding eigenvector, under the one-to-one correspondence, have the same eigenvalue. We also show that the singular eigenvectors of the Gaudin Hamiltonians for $\G[m]$ (resp., $\SG[m]$) and those of $\G[m]_n$ (resp., $\SG[m]_n$), for $n \in \N$, are related by truncation functors.

Recall that $\w{\Phi}^+_n$ (resp., ${\Phi[m]}^+_n$ and $\ov{\Phi}[m]^+_n$) denote the set of positive roots of $\DG_n$ (resp., $\G[m]_n$ and $\SG[m]_n$).
First of all, we have the following lemma. It is analogous to  \cite[Lemma 3.1]{CaL} and can be proved similarly.

 \begin{lem} \label{finitesum}
 Let $0\le k<n \leq\infty$. If $v$ is a weight vector of weight $\mu$ in $M \in\w\OO$ (resp., $\OO[m]_n$ and $\ov\OO[m]_n$) such that $\mu\in \w\Xi_k$ (resp., $\Xi[m]_k$ and $\ov\Xi[m]_k$), then $E_\beta v= 0$ for all $\beta\in \w{\Phi}^{+}\backslash\w{\Phi}_k^{+}$ (resp., ${\Phi}[m]_n^{+}\backslash{\Phi}[m]_k^{+}$ and $\ov{\Phi}[m]_n^{+}\backslash\ov{\Phi}[m]_k^{+}$) and $E_i v= 0$ for $i>k$.
In particular, for each $v\in M$ and $M \in \w{\mathcal {O}}$, ${\mathcal {O}}[m]$ or $\ov{\mathcal {O}}[m]$, there are only finitely many $E_\beta$ and $E_i$ such that $E_\beta v\not= 0$ and $E_i v\not= 0$.
 \end{lem}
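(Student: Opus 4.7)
The strategy is to exploit the fact, taken from the immediately preceding proposition, that every weight of a module in $\w\OO$ (respectively $\OO[m]_n$, $\ov\OO[m]_n$) lies in $\w\Xi$ (respectively $\Xi[m]_n$, $\ov\Xi[m]_n$), so all coefficients in the $\ep$-basis are non-negative integers. The Cartan statement of the lemma is then immediate: if $\mu \in \w\Xi_k$ and $i > k$, then by the very definition of $\w\Xi_k$ we have $\mu(E_i) = 0$, so $E_i v = \mu(E_i) v = 0$; the finite-rank versions are identical.

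For the root-vector part I would argue weight-theoretically. The vector $E_\beta v$ is either $0$ or a weight vector of weight $\mu + \beta$, so it suffices to show $\mu + \beta \notin \w\Xi$ for every $\beta \in \w\Phi^+ \setminus \w\Phi_k^+$. The key combinatorial claim is that each such $\beta$ has a strictly negative coefficient at some $\ep_r$ with $r > k$; granting this, $\mu(\ep_r) = 0$ forces $(\mu+\beta)(\ep_r) < 0$, so $\mu + \beta$ cannot be a weight of $M$ and $E_\beta v = 0$. The claim is verified by inspecting the explicit root data from Section~\ref{dynkin}: in type $\mf{a}$ positive roots are $\ep_r - \ep_s$ with $r<s$ in the prescribed ordering, and $\beta \notin \w\Phi_k^+$ forces $s>k$, giving coefficient $-1$ at $\ep_s$; in types $\mf{c,d}$ the positive roots are of the forms $\ep_r - \ep_s$, $-\ep_r - \ep_s$, and $-2\ep_r$ (plus their analogues), and in every case falling outside $\w\Phi_k^+$ requires an index $> k$ to appear, at which a $-1$ or $-2$ coefficient is present. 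The argument for the finite-rank categories is parallel, the only subtlety being the nonstandard orderings on $\J_m$ and $\ov\J_m$: the indices of $\J^+_m(n)\setminus \J^+_m(k)$ are exactly the ``tail'' integers $\{k+1,\ldots,n\}$, and a positive root leaving $\Phi[m]_k^+$ must involve such an index with a negative coefficient, exactly as before.

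For the ``in particular'' clause (where $n=\infty$), decompose an arbitrary $v \in M$ as a finite sum of weight vectors $v = \sum_\mu v_\mu$. Each $\mu$ has finite support in the $\ep$-basis, so $\mu \in \w\Xi_{k_\mu}$ for some $k_\mu \in \N$; taking $k := \max_\mu k_\mu$ over the finite set of weights appearing and applying the part already proved shows that $E_\beta v = 0$ for all $\beta \in \w\Phi^+ \setminus \w\Phi_k^+$ and $E_i v = 0$ for all $i > k$. Both excluded sets are cofinite, so only finitely many $E_\beta$ and $E_i$ can act non-trivially on $v$.

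The main obstacle is really just bookkeeping in Step~2 for types $\mf{c,d}$: because the standard Borel there contains root vectors whose roots are $-\ep_r - \ep_s$ (negative in both entries) alongside the familiar $\ep_r - \ep_s$ (mixed sign) and $-2\ep_r$, one has to confirm uniformly that leaving $\w\Phi_k^+$ really does place a negatively-signed entry at an index $> k$. Once that uniform statement is recorded, the weight argument sketched above applies verbatim to all three types and to both the infinite-rank and finite-rank settings, and the lemma follows in close analogy with \cite[Lemma~3.1]{CaL}.
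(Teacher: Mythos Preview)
Your proposal is correct and follows precisely the standard weight-theoretic argument that the paper has in mind: the paper does not give an independent proof here but simply states that the lemma ``is analogous to \cite[Lemma~3.1]{CaL} and can be proved similarly,'' and your argument is exactly that analogous proof. Your handling of the type~$\mf{c},\mf{d}$ root-system bookkeeping and of the nonstandard orderings on $\J_m$, $\ov\J_m$ is accurate, so nothing further is needed.
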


Let $(\cdot, \cdot)$ denote the bilinear form  on $\gl(\w{V})$ defined by
 \begin{equation*}\label{str}
 ( A,B):=  \mbox{Str}(AB)\quad \hbox{for $A,B\in \gl(\w{V}).$}
\end{equation*}
It is non-degenerate invariant even supersymmetric. The restriction of the above bilinear form to ${\wcG}$ (resp., $\cG[m]_n$ and $\ovcG[m]_n$) is also a non-degenerate invariant even supersymmetric bilinear form. We denote
 $$
 \hbox{$\langle\cdot, \cdot\rangle:=(\cdot, \cdot)\,\,$ on ${\wcG}^{\mf{a}}$, $\cG^{\mf{a}}[m]_n$ and $\ovcG^{\mf{a}}[m]_n$}, \quad \hbox{and} \quad  \hbox{$\langle\cdot, \cdot\rangle:=\hf(\cdot, \cdot)\,\,\,$ on the other cases.}
 $$

It is clear that $\langle E_i,E_i\rangle=(-1)^{2i}$ for any $i\in \hf \N$. For each root $\beta \in \w{\Phi}^+$ (resp., ${\Phi[m]}^+_n$ and $\ov{\Phi}[m]^+_n$), we choose root vectors $E_\beta$ and $E^\beta$ of weights $\beta$ and $-\beta$, respectively, such that
 \begin{equation*}
   \langle E_{\beta},E^{\beta}\rangle=1.
 \end{equation*}
Also, $\langle E^{\beta}, E_{\beta} \rangle=(-1)^{| E_\beta |}$, where $|E_{\beta}|$ is the parity of $E_{\beta}$.

By identifying $\w{\mathcal {G}}$ (resp., $\mathcal {G}_n$ and $\ov{\mathcal {G}}_n$) as a subspace of $\DG$ (resp., $\G_n$ and $\SG_n$), the Casimir symmetric tensors for $\DG$, $\G[m]_n$ and $\SG[m]_n$ are defined by (cf. \cite[Section 3.1]{CaL})
  \begin{align*}
 {\w{\Omega}}:=&\sum_{\beta\in\w{\Phi}^+}(E^\beta\otimes E_{\beta}+(-1)^{|E_\beta|} E_\beta \otimes E^{\beta})  \\
 &+\sum_{j \in \hf \N} \left( (-1)^{2j} E_j\otimes E_j-(K\otimes E_j+E_j\otimes K) \right),\\
{\Omega}[m]_n:=&\sum_{\beta \in  {\Phi[m]}^+_n} \big(E^\beta\otimes E_{\beta}+(-1)^{|E_\beta|} E_\beta\otimes E^{\beta}) \\
 &+\sum_{j\in{\J}^+_m(n)} \left( (-1)^{2j} E_j\otimes E_j-(K\otimes E_j+E_j\otimes K) \right),\\
 {\ov{\Omega}}[m]_n:=&\sum_{\beta \in \ov{\Phi}[m]^+_n}(E^\beta\otimes E_{\beta}+(-1)^{|E_\beta|} E_\beta \otimes E^{\beta})  \\
 &+\sum_{j\in\ov{\J}^+_m(n)} \left( (-1)^{2j} E_j\otimes E_j-(K\otimes E_j+E_j\otimes K) \right).
  \end{align*}

\begin{rem} \label{O[m]}
For $\beta \in {\Phi[m]}^+_n$, either $\beta \in \w{\Phi}^+$ or $-\beta \in \w{\Phi}^+$. Assume that $-\beta \in \w{\Phi}^+$. Then $\beta$ is an odd root. We readily see that
$E_{-\beta}=aE^{\beta}$ and $E^{-\beta}=-a^{-1} E_{\beta}$ for some nonzero scalar $a$. It follows that $E^{-\beta} \otimes E_{-\beta}- E_{-\beta}\otimes E^{-\beta}=E^\beta\otimes E_{\beta}- E_\beta\otimes E^{\beta}$. In other words, ${\Omega}[m]_n$ is a partial sum of ${\w{\Omega}}$. Similarly, $\ov{\Omega}[m]_n$ is a partial sum of ${\w{\Omega}}$ as well.
\end{rem}

 By \lemref{finitesum}, the Casimir symmetric tensors $\w{\Omega}$, ${\Omega}[m]_n$ and $\ov{\Omega}[m]_n$ are well defined operators on $M\otimes N$, for $M, N\in\w\OO$, $\OO[m]_n$ and $\ov\OO[m]_n$, respectively.

Fix $\ell \in \N$ with $\ell \geq 2$. For $M_1, \ldots, M_\ell \in \w\OO$, let
$$
M:=M_1\otimes\cdots\otimes  M_\ell.
$$
Let us introduce some more notation.
For $x \in \DG$  (resp., $\G[m]_n$ and $\ov{\G}[m]_n$)  and $i=1, \ldots, \ell$, let
$$
x^{(i)}=\underbrace{1\otimes\cdots\otimes1\otimes \stackrel{i}{x}\otimes1\otimes\cdots\otimes1}_{\ell}.
$$
For any operator $A=\sum_{r\in I} x_r\otimes y_r$, where $x_r, y_r \in \w{\G}$  (resp., $\G[m]_n$ and $\ov{\G}[m]_n$), and for any distinct $i,j \in \{ 1,\ldots, \ell \}$, we define
\begin{equation}\label{(ij)}
A^{(ij)}=\sum_{r\in I}x_r^{(i)}y_r^{(j)}.
\end{equation}

For any $i=1, \ldots, \ell$ and any distinct complex numbers $z_1, \ldots, z_\ell$, the (quadratic) Gaudin Hamiltonian $\wcH^i$ is defined by
\begin{equation}\label{Hamiltonian-C}
\wcH^i=\sum_{\substack{j=1 \\ j\not=i}}^\ell \frac{\w{\Omega}^{(ij)}}{z_i-z_j}.
\end{equation}
Note that it is a linear endomorphism on $M$. The Gaudin Hamiltonians $\cH^i[m]_n$ and $\ovcH^i[m]_n$ are defined by replacing $\w{\Omega}$ with $\Omega[m]_n$ and $\ov{\Omega}[m]_n$, respectively, and they are well-defined linear endomorphisms on $M_1\otimes\cdots\otimes  M_\ell$ for $M_1, \ldots, M_\ell$ in $\OO[m]_n$ and in $\ov\OO[m]_n$, respectively.

For any $N \in \w\OO$  (resp., $\OO[m]_n$ and $\ov\OO[m]_n$), let
$$N^{\rm sing}:=\{ v \in N \, | \,E_\beta v=0 \hbox{ for all\,\, $\beta \in \w{\Phi}^+$ (resp., $\Phi[m]^+_n$ and $\ov{\Phi}[m]^+_n$)}\}$$
stand for the subspace spanned by singular vectors in $N$. For any weight $\mu$ of $N$, the weight space $N_\mu$ is finite dimensional by definition of $\w\OO$  (resp., $\OO[m]_n$ and $\ov\OO[m]_n$). We denote $N_\mu^{\rm sing}:=N_\mu \cap N^{\rm sing}$.

By an argument similar to the proof of  \cite[Propositions 3.5 and 3.7]{CaL}, one can show that $\wcH^i$ (resp., $\cH^i[m]_n$ and $\ovcH^i[m]_n$) mutually commute with each other, and they are $\DG$(resp., $\G[m]_n$ and $\SG[m]_n$)-homomorphisms for $i=1, \ldots, \ell$. We immediately see that $M^{\rm sing}$ and the finite-dimensional subspace $M_\mu^{\rm sing}$ are $\wcH^i$-invariant for any weight $\mu$ of $M$. Thus, we may view $\wcH^i$ as a linear endomorphism on $M_\mu^{\rm sing}$. Similarly, $\cH^i[m]_n$ (resp., $\ovcH^i[m]_n$) may be viewed as a linear endomorphism on $(M_1\otimes\cdots\otimes  M_\ell)_\mu^{\rm sing}$ for $M_1, \ldots, M_\ell \in \OO[m]_n$ (resp., $\ov\OO[m]_n$) and any weight $\mu$ of $M_1\otimes\cdots\otimes  M_\ell$.

\begin{lem} \label{wOm-Om}
Let $N_1, N_2 \in \w{\mathcal {O}}$, and let $v\in N_1 \otimes N_2$ be a weight vector of weight $\mu$.
\begin{enumerate}[\normalfont(i)]
\item If $\mu\in \Xi[m]$, then $\w\Omega(v)=\Omega[m](v)$.
\item If $\mu\in \ov\Xi[m]$, then $\w\Omega(v)=\ov\Omega[m](v)$.
\end{enumerate}
\end{lem}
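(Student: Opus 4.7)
The plan is to use \remref{O[m]}, which already tells us that $\Omega[m]$ and $\ov\Omega[m]$ are partial sums of $\w\Omega$, and then show that the unshared terms annihilate $v$ by weight considerations.

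First, I would decompose $v = \sum_k v_k \otimes w_k$ into weight-homogeneous pieces with $v_k \in (N_1)_{\mu_k}$ and $w_k \in (N_2)_{\gamma_k}$ satisfying $\mu_k + \gamma_k = \mu$. Since all weights of modules in $\w\OO$ lie in $\w\Xi$ and $\mu \in \Xi[m]$ (resp.\ $\ov\Xi[m]$), \lemref{weight-decomposition2} yields $\mu_k, \gamma_k \in \Xi[m]$ (resp.\ $\ov\Xi[m]$) as well. In particular, $\mu_k(E_r) = \gamma_k(E_r) = 0$ for every $r \in \w\J^+ \setminus \J_m^+$ (resp.\ $r \in \w\J^+ \setminus \ov\J_m^+$).

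By \remref{O[m]}, the difference $\w\Omega - \Omega[m]$ is a sum of Cartan-type terms indexed by $r \in \w\J^+ \setminus \J_m^+$ of the form $(-1)^{2r} E_r \otimes E_r - K\otimes E_r - E_r \otimes K$, together with root-vector terms $E^\gamma \otimes E_\gamma + (-1)^{|E_\gamma|} E_\gamma \otimes E^\gamma$ indexed by those $\gamma \in \w{\Phi}^+$ whose root vectors do not lie in $\cG[m]$. The Cartan part vanishes on each $v_k \otimes w_k$ because $E_r$ annihilates both $v_k$ and $w_k$ whenever $r \in \w\J^+ \setminus \J_m^+$. For the root part, I would observe that the missing $\gamma$'s are precisely those positive roots whose root vector $E^\xx_{i,j}$ has $|i| \notin \J_m^+$ or $|j| \notin \J_m^+$, so $\gamma(E_r) \neq 0$ for some $r \in \w\J^+ \setminus \J_m^+$. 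If $\gamma(E_r) < 0$, then the candidate weight $\mu_k + \gamma$ of $E_\gamma v_k$ has negative $\ep_r$-coefficient, which is forbidden in $\w\Xi$, so $E_\gamma v_k = 0$, and likewise $E_\gamma w_k = 0$; if $\gamma(E_r) > 0$, the same argument applied to $E^\gamma$ (of weight $-\gamma$) gives $E^\gamma v_k = E^\gamma w_k = 0$. Either way, both tensor summands for $\gamma$ annihilate $v_k \otimes w_k$, proving (i).

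Part (ii) follows by the identical argument with $\Xi[m]$, $\J_m^+$, $\Omega[m]$ replaced by $\ov\Xi[m]$, $\ov\J_m^+$, $\ov\Omega[m]$. The main technical step is the type-by-type verification that every missing $\gamma$ genuinely satisfies $\gamma(E_r) \neq 0$ for some $r$ outside $\J_m^+$ (resp.\ $\ov\J_m^+$); but this follows immediately from the explicit descriptions of the root vectors $E^\xx_{i,j}$ in Sections \ref{type-a}--\ref{type-d}, since $E^\xx_{i,j}$ belongs to $\cG^\xx[m]$ (resp.\ $\ovcG^\xx[m]$) precisely when $|i|$ and $|j|$ both lie in $\J_m^+$ (resp.\ $\ov\J_m^+$), while its weight $\mathrm{sgn}(i)\ep_{|i|} - \mathrm{sgn}(j)\ep_{|j|}$ already exhibits $|i|$ and $|j|$ as the supporting indices.
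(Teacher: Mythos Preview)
Your proof is correct and follows essentially the same approach as the paper's: reduce to simple tensors via \lemref{weight-decomposition2}, kill the extra Cartan terms because $E_r$ annihilates both factors for $r\notin\J_m^+$, and kill each extra root term $E^\gamma\otimes E_\gamma + (-1)^{|E_\gamma|}E_\gamma\otimes E^\gamma$ by observing that $\gamma(E_r)\neq0$ for some such $r$ forces one tensor factor to leave $\w\Xi$ and hence vanish. The only cosmetic difference is that the paper argues ``either $E_\beta v_1=0$ or $E^\beta v_2=0$'' per term, whereas you note (slightly more strongly, but equivalently for the conclusion) that whichever of $E_\gamma$, $E^\gamma$ lowers the $\ep_r$-coefficient annihilates \emph{both} tensor factors.
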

\begin{proof}
The proof is similar to that of  \cite[Lemma 3.11]{CaL} with a slight modification. For completeness, we include it here. We will only prove (i). The proof of (ii) is similar.
We may assume that $v=v_1 \otimes v_2$, where $v_i \in N_i$ is a weight vector of weight $\mu_i$ for $i=1, 2$, and $\mu_1+\mu_2=\mu$.
For $i=1, 2$, $\mu_i \in \Xi[m]$ by \lemref{weight-decomposition2}.
For all $k \in \hf \N \backslash \J_m^+$ and $i=1, 2$, $\mu_i(E_k)=0$, and so $E_k(v_i)=0$.
By virtue of \remref{O[m]}, it remains to consider $\beta \in \w{\Phi}^+$ with $\pm \beta \notin {\Phi}[m]^+$. For such $\beta$, we have $\beta(E_i)\not= 0$ for some $i\in \hf \N \backslash \J_m^+$. It follows that either the weight of $E_\beta v_1$ or $E^\beta v_2$ does not lie in $\wt{\Xi}$. Thus either $E_\beta v_1=0$ or $E^\beta v_2=0$, and hence $E_\beta\otimes E^\beta (v_1\otimes v_2)=0$. Similarly, $E^\beta\otimes E_\beta (v_1\otimes v_2)=0$. Therefore, $\w\Omega(v_1\otimes v_2)=\Omega[m] (v_1\otimes v_2)$.
\end{proof}

As a consequence, we obtain the following lemma.
\begin{lem} \label{corresp}
Let $M_1, \ldots, M_\ell \in \w\OO$, and let ${v} \in {M_1\otimes\cdots\otimes  M_\ell}$ be a weight vector of weight $\mu$.
\begin{enumerate}[\normalfont(i)]
\item If $\mu \in \Xi[m]$, then $\wcH^i {v}= \cH^i[m](v)$ for all $i=1, \ldots, \ell$.
\item If $\mu \in \ov{\Xi}[m]$, then $\wcH^i {v}=\ovcH^i[m](v)$ for all $i=1, \ldots, \ell$.
\end{enumerate}
\end{lem}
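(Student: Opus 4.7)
The plan is to reduce the identity $\wcH^i v = \cH^i[m] v$ to the pairwise statement already proved in \lemref{wOm-Om}, one pair of tensor factors at a time. By linearity we may assume $v=v_1\otimes\cdots\otimes v_\ell$, where each $v_k\in M_k$ is a weight vector of weight $\mu_k$, and $\mu=\mu_1+\cdots+\mu_\ell$. Applying \lemref{weight-decomposition2} repeatedly to this sum shows that if $\mu\in\Xi[m]$, then every $\mu_k\in\Xi[m]$; in particular $\mu_i+\mu_j\in\Xi[m]$ for every pair $i\ne j$. The analogous statement holds for $\ov\Xi[m]$ in case (ii).

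Next I would unravel the definition \eqref{(ij)}. Writing $\w\Omega=\sum_r x_r\otimes y_r$, the operator $\w\Omega^{(ij)}$ acts on $v_1\otimes\cdots\otimes v_\ell$ by applying $x_r$ in slot $i$ and $y_r$ in slot $j$ with the usual Koszul sign, and as the identity on all other slots. Since these signs depend only on the parities of $x_r,y_r$ and of the untouched factors $v_k$ ($k\ne i,j$), they are the same for every term of $\w\Omega$ and of $\Omega[m]$. Consequently, showing $\w\Omega^{(ij)}v=\Omega[m]^{(ij)}v$ reduces to showing that the difference operator $(\w\Omega-\Omega[m])$, applied to the two-fold tensor $v_i\otimes v_j\in M_i\otimes M_j$, is zero. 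But $v_i\otimes v_j$ is a weight vector of weight $\mu_i+\mu_j\in\Xi[m]$, so \lemref{wOm-Om}(i) yields exactly $\w\Omega(v_i\otimes v_j)=\Omega[m](v_i\otimes v_j)$, hence $\w\Omega^{(ij)}v=\Omega[m]^{(ij)}v$.

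Finally I would sum these identities over $j\ne i$ with the prescribed coefficients $(z_i-z_j)^{-1}$ to obtain $\wcH^i v=\cH^i[m] v$, proving (i). Part (ii) is identical, using \lemref{wOm-Om}(ii) and the $\ov\Xi[m]$-clause of \lemref{weight-decomposition2} in place of the $\Xi[m]$-clauses. The only point that needs even a moment of care is the bookkeeping of signs in the super tensor structure; once one observes that the signs are governed solely by parities of factors outside positions $i$ and $j$, the argument is a direct application of the preceding lemma.
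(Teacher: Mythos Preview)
Your proof is correct and is exactly the argument the paper has in mind: the paper simply records this lemma with the phrase ``As a consequence, we obtain the following lemma'' immediately after \lemref{wOm-Om}, leaving the reduction to the two-factor case implicit. Your write-up just makes that reduction explicit---decomposing $v$ into pure tensors, using \lemref{weight-decomposition2} to push the condition $\mu\in\Xi[m]$ down to each pair $\mu_i+\mu_j$, and then invoking \lemref{wOm-Om} pairwise---so there is nothing to compare.
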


We would like to ask whether the eigenvectors of $\cH^i$, $\cH^i[m]$ and $\ovcH^i[m]$ are related. To answer the question, we need the following proposition. Recall the bijections $\w\la\leftrightarrow \la[m]\leftrightarrow\ov\la[m]$ in \eqref{cP}.

\begin{prop} \label{isom}
Let $\wt M \in \w\OO$, and let $\w\mu \in \w{\cP}^+$ be a weight of $\wt M$. Then:
\begin{enumerate}[\normalfont(i)]
\item There exists $A \in U (\tilde{\mf l})$ such that the map $\mathfrak{t}^{\w\mu}_{[m]}: \wt M_{\w\mu}^{\rm sing} \to T_{[m]}(\wt {M})_{\mu[m]}^{\rm sing}$, defined by
$\mathfrak{t}^{\w\mu}_{[m]}({v})=A {v}$ for $ v \in \wt M_{\w\mu}^{\rm sing}$, is a linear isomorphism.
\item  There exists $\bar{A} \in U (\tilde{\mf l})$ such that the map $\ov{\mathfrak{t}}^{\w\mu}_{[m]}: \wt M_{\w\mu}^{\rm sing} \to \ov{T}_{[m]}(\wt M)_{\ovmu[m]}^{\rm sing}$, defined by
$\ov{\mathfrak{t}}^{\w\mu}_{[m]}({v})=\bar{A} {v}$ for $ v \in \wt M_{\w\mu}^{\rm sing}$, is a linear isomorphism.
\end{enumerate}
\end{prop}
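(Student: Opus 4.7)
The plan is to construct $A$ as an explicit monomial in the negative odd root vectors of $U(\tilde{\mf l})$ that implements the passage from the Borel $\w{\mf b}$ to $\mf b[m]$ via a sequence of odd reflections, then to verify the isomorphism claim by a dimension count using super duality. I will describe the argument for (i); part (ii) is entirely parallel with $\ov T_{[m]}$, $\ov{\mf b}[m]$, and $\ov\mu[m]$ in place of $T_{[m]}$, $\mf b[m]$, and $\mu[m]$.

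The starting observation is that any singular vector $v \in \w M_{\w\mu}^{\rm sing}$ is in particular a $\tilde{\mf l}$-highest weight vector of weight $\w\mu$, and since modules in $\w\OO$ decompose semisimply over $\tilde{\mf l}$ as direct sums of $L(\tilde{\mf l},\nu)$'s, the vector $v$ generates a $\tilde{\mf l}$-submodule isomorphic to $L(\tilde{\mf l}, \w\mu)$. The Borel $\w{\mf b}$ restricted to $\tilde{\mf l}$ has simple roots $\alpha_r = \epsilon_r - \epsilon_{r+\frac{1}{2}}$, all of which are odd and isotropic. By contrast, $\mf b[m] \cap \tilde{\mf l}$ is the Borel of $\tilde{\mf l}$ corresponding to the ordering in which the half-integer indices $\frac{1}{2}, \frac{3}{2}, \ldots, m-\frac{1}{2}$ precede all positive integers, and is obtained from $\w{\mf b} \cap \tilde{\mf l}$ by a specific sequence of odd reflections through those $\alpha_r$'s.

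Each odd reflection at a simple isotropic root $\alpha$ sends a highest weight vector $v'$ of weight $\nu$ to a highest weight vector for the reflected Borel, equal to $v'$ itself if $(\nu, \alpha) = 0$ and to $E_{-\alpha}\,v'$ otherwise. Composing the chain of reflections that carries $\w{\mf b} \cap \tilde{\mf l}$ to $\mf b[m] \cap \tilde{\mf l}$ yields a monomial $A \in U(\tilde{\mf l})$ in the root vectors $E_{-\alpha}$ such that $A v$ is an $\mf l[m]$-highest weight vector of weight $\mu[m]$. The weight identity is verified by direct comparison of the formulas \eqref{weight:wtIm} and \eqref{weight:Im}, while \lemref{finitesum} guarantees that only finitely many reflections act nontrivially on a given $v$, so $A$ is a well-defined operator on $\w M_{\w\mu}^{\rm sing}$.

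It remains to show that $A v$ is singular with respect to the full positive system of $\w\G[m]$ (not merely that of $\mf l[m]$) and that $\mathfrak{t}^{\w\mu}_{[m]}$ is a linear isomorphism. The positive root vectors of $\G[m]$ outside $\mf l[m]$ lie in the nilradical of $\w{\mf p}$, hence annihilate $v$; commuting any such vector past $A$ can only produce contributions whose weights lie outside $\Xi[m]$, which \lemref{weight-decomposition} forbids. Therefore $A v \in T_{[m]}(\w M)_{\mu[m]}^{\rm sing}$. Injectivity of $\mathfrak{t}^{\w\mu}_{[m]}$ is immediate because each odd reflection is bijective on the relevant highest weight vectors, while surjectivity follows by invoking \thmref{SD}: since $T_{[m]}$ is an equivalence of tensor categories sending $L(\w\G, \w\mu)$ to $L(\G[m], \mu[m])$, the isotypic multiplicities of these irreducibles in $\w M$ and $T_{[m]}(\w M)$ coincide, giving $\dim \w M_{\w\mu}^{\rm sing} = \dim T_{[m]}(\w M)_{\mu[m]}^{\rm sing}$. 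The main obstacle is precisely this verification that $A v$ is annihilated by the full positive root system of $\G[m]$, which requires the weight-space bookkeeping just outlined.
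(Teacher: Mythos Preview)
Your overall strategy---construct $A$ via the sequence of odd reflections carrying $\w{\mf b}\cap\tilde{\mf l}$ to $\mf b[m]\cap\tilde{\mf l}$, then invoke super duality for the dimension count---is essentially the same as the paper's. The paper packages the argument slightly differently: it identifies $\wt M_{\w\mu}^{\rm sing}$ with $\Hom_{\w\OO}\!\bigl(\Delta(\DG,\w\mu),\wt M\bigr)$ and $T_{[m]}(\wt M)_{\mu[m]}^{\rm sing}$ with $\Hom_{\OO[m]}\!\bigl(\Delta(\G[m],\mu[m]),T_{[m]}(\wt M)\bigr)$, applies the equivalence $T_{[m]}$ to these Hom-spaces, and then checks that the resulting isomorphism is exactly $v\mapsto Av$. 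The advantage of this formulation is that the ``full singularity'' of $Av$ (i.e.\ annihilation by the entire positive part of $\G[m]$, not just $\mf l[m]$) is automatic: $Av$ is the image of a highest weight vector under a $\G[m]$-module map.

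Your direct verification of full singularity has a gap. You write that commuting a nilradical root vector $E_\beta$ past $A$ ``can only produce contributions whose weights lie outside $\Xi[m]$.'' This is not true in general: for $\xx=\mf c,\mf d$ the nilradical of $\mf p[m]$ contains roots such as $-\epsilon_i-\epsilon_j$, and $\mu[m]-\epsilon_i-\epsilon_j$ may perfectly well lie in $\Xi[m]$ when $\mu[m](E_i),\mu[m](E_j)\ge 1$. The correct reason $E_\beta Av=0$ is structural, not weight-theoretic: since $[\tilde{\mf l},\mf u]\subset\mf u$ (the nilradical is $\text{ad}\,\tilde{\mf l}$-stable), repeated commutation of $E_\beta\in\mf u$ past the factors of $A\in U(\tilde{\mf l})$ shows $E_\beta A\in U(\tilde{\mf l})\cdot\mf u$, and $\mf u\cdot v=0$ because $v$ is $\w\G$-singular. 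With this fix your proof goes through. A smaller imprecision: the dimension equality follows from equality of the Hom-spaces above under $T_{[m]}$, not from matching ``isotypic multiplicities of irreducibles''---the dimension of $\wt M_{\w\mu}^{\rm sing}$ is $\dim\Hom(\Delta(\DG,\w\mu),\wt M)$, which need not equal the composition multiplicity $[\wt M:L(\DG,\w\mu)]$ in a non-semisimple category.
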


\begin{proof}
We will only prove (i). The proof of (ii) is similar.
Note that there is a linear isomorphism
 \begin{align*}
\Hom_{\w \OO} \left(\Delta(\DG, \w\mu),\wt{M} \right)  & \longrightarrow \wt M_{\w\mu}^{\rm sing}\\
\varphi &\ \mapsto \  \varphi(v_{\w\mu})
  \end{align*}
where $v_{\w\mu}$ is a highest weight vector of $\Delta(\DG, \w\mu)$.

On the other hand, there exists $A \in U(\tilde{\mf l})$ such that $v_{\mu[m]}:=A v_{\w\mu}$ is a highest weight vector in $\Delta(\G[m], \mu[m])=T_{[m]}(\Delta(\DG, \w\mu))$ with weight $\mu[m]$. In fact, $A$ is a product of elements in $\tilde{\mf l}$ corresponding to a sequence of odd reflections (see \cite[Section 3.1]{CL10} and \cite[Section 4]{CLW11} for details). Similarly, the map
 \begin{align*}
\Hom_{\OO[m]} \left(\Delta(\G[m], \mu[m]), T_{[m]}(\wt{M}) \right) &\longrightarrow T_{[m]}(\wt{M})_{\mu[m]}^{\rm sing}\\
\phi  &\ \mapsto \ \ \phi(v_{\mu[m]})
  \end{align*}
is a linear isomorphism. By \thmref{SD}, we have
$$\Hom_{\w \OO} \left(\Delta(\DG, \w\mu),\wt{M} \right) \cong \Hom_{\OO[m]} \left(\Delta(\G[m], \mu[m]), T_{[m]}(\wt {M}) \right),$$
and hence $\wt M_{\wt \mu}^{\rm sing} \cong  T_{[m]}(\wt{M})_{\mu[m]}^{\rm sing}$. We may also see that any vector ${v} \in \wt M_{\wt \mu}^{\rm sing}$ corresponds to $A {v} \in  T_{[m]}(\wt{M})_{\mu[m]}^{\rm sing}$ under the isomorphism, which shows that the isomorphism is indeed the map $\mathfrak{t}^{\w\mu}_{[m]}$ as stated.
\end{proof}

\begin{rem}\label{A-inverse}
\begin{enumerate}[\normalfont(i)]

\item The elements $A$ and $\bar{A}$ in \propref{isom} depend only on the weight $\w\mu$, but not on the module $M$.

\item There exist $B, \bar{B} \in U(\tilde{\mf l})$ such that the inverses of $\mathfrak{t}^{\w\mu}_{[m]}$ and $\ov{\mathfrak{t}}^{\w\mu}_{[m]}$ are given respectively by $(\mathfrak{t}^{\w\mu}_{[m]})^{-1}(v)=B v$ and $(\ov{\mathfrak{t}}^{\w\mu}_{[m]})^{-1}(w)=\bar{B} w$ for any $v \in  T_{[m]}(\wt{M})_{\mu[m]}^{\rm sing}$ and $w \in \ov{T}_{[m]}(\wt M)_{\ovmu[m]}^{\rm sing}$. Again $B$ and $\ov{B}$ are products of elements in $\tilde{\mf l}$ corresponding to  sequences of odd reflections.
\end{enumerate}
\end{rem}

\begin{thm} \label{eigenvector-C1}
For $\wt M_1, \cdots, \wt M_\ell \in \w\OO$, let $\wt M=\wt M_1\otimes\cdots\otimes \wt M_\ell$. Suppose that ${v} \in \wt M_{\w\mu}^{\rm sing}$ with $\w\mu \in \w{\cP}^+$. For any $m, m' \in \Z_+$, let $v_{m'}=\mathfrak{t}^{\w\mu}_{[m']}({v})$ and $\ov v_m=\ov{\mathfrak{t}}^{\w\mu}_{[m]}({v})$. For each $i=1, \ldots, \ell$, we have:
\begin{enumerate}[\normalfont(i)]
\item ${v}$ is an eigenvector of $\wcH^i$ with eigenvalue $c$ if and only if $v_{m'}$ is an eigenvector of $\cH^i[m']$ with eigenvalue $c$.

Moreover, $\wcH^i$ is diagonalizable on $\wt M^{\rm sing}_{\w\mu}$ if and only if $\cH^i[m']$ is diagonalizable on $T_{[m']}(\wt M)^{\rm sing}_{\mu[m']}$. In this case, they have the same spectrum.
\item $v$ is an eigenvector of $\wcH^i$ with eigenvalue $c$ if and only if $\ov v_m$ is an eigenvector of $\ovcH^i[m]$ with eigenvalue $c$.

 Moreover, $\wcH^i$ is diagonalizable on $\wt M^{\rm sing}_{\w\mu}$ if and only if $\ovcH^i[m]$ is diagonalizable on $\ov T_{[m]}(\wt M)^{\rm sing}_{\ov\mu[m]}$. In this case, they have the same spectrum.
 \end{enumerate}
 As a consequence, $\ovcH^i[m]$ is diagonalizable on $\ov T_{[m]}(\wt M)^{\rm sing}_{\ov\mu[m]}$  if and only if $\cH^i[m']$ is diagonalizable on $T_{[m']}(\wt M)^{\rm sing}_{\mu[m']}$. In this case, they have the same spectrum.
\end{thm}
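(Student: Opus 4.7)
The plan is to establish the intertwining relations
\[
\mathfrak{t}^{\w\mu}_{[m']} \circ \wcH^i \;=\; \cH^i[m'] \circ \mathfrak{t}^{\w\mu}_{[m']}
\qquad\text{and}\qquad
\ov{\mathfrak{t}}^{\w\mu}_{[m]} \circ \wcH^i \;=\; \ovcH^i[m] \circ \ov{\mathfrak{t}}^{\w\mu}_{[m]}
\]
on the finite-dimensional space $\wt M^{\rm sing}_{\w\mu}$. Once these are in hand, both (i) and (ii) follow at once: the linear isomorphisms $\mathfrak{t}^{\w\mu}_{[m']}$ and $\ov{\mathfrak{t}}^{\w\mu}_{[m]}$ from \propref{isom} conjugate $\wcH^i$ to $\cH^i[m']$ (resp.\ $\ovcH^i[m]$), so eigenvectors correspond to eigenvectors with the same eigenvalue, and the operators on the two sides share the same spectrum and diagonalizability. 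The final assertion is then obtained by composing the two equivalences: the map $\ov{\mathfrak{t}}^{\w\mu}_{[m]} \circ (\mathfrak{t}^{\w\mu}_{[m']})^{-1}$ intertwines $\cH^i[m']$ on $T_{[m']}(\wt M)^{\rm sing}_{\mu[m']}$ with $\ovcH^i[m]$ on $\ov{T}_{[m]}(\wt M)^{\rm sing}_{\ov\mu[m]}$.

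For the first intertwining, recall from \propref{isom}(i) that $\mathfrak{t}^{\w\mu}_{[m']}(v) = Av$ for some fixed $A \in U(\w{\mf l})\subseteq U(\DG)$ depending only on $\w\mu$ (\remref{A-inverse}). Since $\wcH^i$ is a $\DG$-homomorphism (remarked in Section \ref{GH} following \lemref{corresp}), it commutes with the left action of $A$, giving $\wcH^i(Av) = A\,\wcH^i(v)$ for every $v \in \wt M^{\rm sing}_{\w\mu}$. The vector $v_{m'}=Av$ is a weight vector of weight $\mu[m'] \in \Xi[m']$ (by construction of $T_{[m']}$ and the definition \eqref{weight:Im}), so \lemref{corresp}(i) applies and yields $\wcH^i(v_{m'}) = \cH^i[m'](v_{m'})$. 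Chaining these two identities,
\[
\mathfrak{t}^{\w\mu}_{[m']}\bigl(\wcH^i(v)\bigr) \;=\; A\,\wcH^i(v) \;=\; \wcH^i(Av) \;=\; \cH^i[m']\bigl(\mathfrak{t}^{\w\mu}_{[m']}(v)\bigr),
\]
which is the desired intertwining on $\wt M^{\rm sing}_{\w\mu}$ (note that $\wcH^i$ preserves $\wt M^{\rm sing}_{\w\mu}$ because it is a weight-preserving $\DG$-homomorphism). The argument for $\ov{\mathfrak{t}}^{\w\mu}_{[m]}$ and $\ovcH^i[m]$ is identical, using \propref{isom}(ii) and \lemref{corresp}(ii).

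Given the intertwining, the eigenvalue correspondence in (i) is immediate in one direction and, in the other direction, follows by observing that $A(\wcH^i(v) - cv) = 0$ with $\wcH^i(v) - cv \in \wt M^{\rm sing}_{\w\mu}$ forces $\wcH^i(v) = cv$ by the injectivity of $\mathfrak{t}^{\w\mu}_{[m']}$. The same reasoning yields (ii). Since $\wt M^{\rm sing}_{\w\mu}$ is finite dimensional, diagonalizability transfers across a linear conjugacy, giving the equivalence of diagonalizability statements and the coincidence of spectra. Nothing in the argument is technically delicate; the only place requiring care is verifying that the element $A\in U(\w{\mf l})$ from \propref{isom} genuinely commutes with $\wcH^i$, which is exactly the $\DG$-equivariance of $\wcH^i$, so the main (mild) obstacle is simply combining the two previously-established inputs---\propref{isom} and \lemref{corresp}---in the correct order.
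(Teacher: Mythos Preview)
Your proof is correct and follows essentially the same approach as the paper: both combine \propref{isom} (giving $v_{m'}=Av$ with $A\in U(\w{\mf l})$), the $\DG$-equivariance of $\wcH^i$, and \lemref{corresp} to transfer eigenvectors. You package these as a single intertwining relation $\mathfrak{t}^{\w\mu}_{[m']}\circ\wcH^i=\cH^i[m']\circ\mathfrak{t}^{\w\mu}_{[m']}$, whereas the paper argues the two directions of the biconditional separately (invoking \remref{A-inverse} for the converse), but the content is the same; one small inaccuracy is that the $\DG$-homomorphism property of $\wcH^i$ is stated in Section~\ref{GH} \emph{before} \lemref{corresp}, not after it.
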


\begin{proof} We will only prove (i). The proof of (ii) is similar.
We know that $v_{m'}=A {v}$ for some $A \in U (\tilde{\mf l})$  by \propref{isom}.
Suppose $\wcH^i {v} = c {v}$ for some $c \in \C$. By \lemref{corresp} together with the fact that $A {v}$ is a vector of weight $\mu[m'] \in \Xi[m']$, we have
$$
\cH^i[m'] (A {v})=\wcH^i (A {v}).
$$
As $\wcH^i$ is a $\w\G$-homomorphism, it follows that
$$
\cH^i[m'] (v_{m'})= A \wcH^i (v)= A (c {v})=c v_{m'}.
$$
Conversely, suppose $\cH^i[m'] (v_{m'}) = c v_{m'}$ for some $c \in \C$. Use the property of $\wcH^i$ being a $\w\G$-homomorphism and \lemref{corresp} again, we have
$$A \wcH^i {v}= \wcH^i (A {v})=\cH^i[m'] (A v)=c v_{m'}.$$
By \remref{A-inverse}, $A$ has an inverse, and we deduce that $\wcH^i ({v})= c {v}$. This proves the first part of (i). The second part is a direct consequence of the first part.
\end{proof}

\begin{prop} \label{trun_eigenvector}
\begin{enumerate}[\normalfont(i)]
\item For $M_1, \ldots, M_\ell \in \OO[m]$, let $M=M_1\otimes\cdots\otimes  M_\ell$. Suppose that ${v} \in M_{\mu}^{\rm sing}$ with $\mu \in \cP[m]^+_n$.  Then for each $i=1, \ldots,\ell$, ${v}$ is an eigenvector of $\cH^i[m]$ with eigenvalue $c$ if and only if $v$ is an eigenvector of $\cH^i[m]_n$ with eigenvalue $c$.

Moreover, $\cH^i[m]$ is diagonalizable on $M_{\mu}^{\rm sing}$ if and only if $\cH^i[m]_n$ is diagonalizable on $\mf{tr}^\infty_n(M)^{\rm sing}_{\mu}$. In this case, they have the same spectrum.
\item For $\ov M_1, \ldots, \ov M_\ell \in \ov\OO[m]$, let $\ov M=\ov M_1\otimes\cdots\otimes  \ov M_\ell$. Suppose that ${v} \in \ov M_{\ov\mu}^{\rm sing}$ with $\ov\mu \in \ov\cP[m]^+_n$.  For each $i=1, \ldots, \ell$, ${v}$ is an eigenvector of $\ovcH^i[m]$ with eigenvalue $c$ if and only if $v$ is an eigenvector of $\ovcH^i[m]_n$ with eigenvalue $c$.

Moreover, $\ovcH^i[m]$ is diagonalizable on $\ov M_{\mu}^{\rm sing}$ if and only if $\ovcH^i[m]_n$ is diagonalizable on $\ov{\mf{tr}}^\infty_n(\ov M)^{\rm sing}_{\mu}$. In this case, they have the same spectrum.
\end{enumerate}
\end{prop}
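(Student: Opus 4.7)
The plan is to show that $\cH^i[m]$ and $\cH^i[m]_n$ coincide as operators on the singular weight space $M_\mu^{\mathrm{sing}}$ when $\mu \in \cP[m]_n^+$, and then deduce the diagonalizability statement. I will only treat part (i); part (ii) is completely analogous with $\SG[m]$ and $\ov\Xi[m]$ in place of $\G[m]$ and $\Xi[m]$.

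First I would verify the set-theoretic identification $M_\mu^{\mathrm{sing}}=\mf{tr}^\infty_n(M)_\mu^{\mathrm{sing}}$. Since $\OO[m]$ is a tensor category, $M=M_1\otimes\cdots\otimes M_\ell\in\OO[m]$, so \lemref{finitesum} applies: any weight vector $v\in M_\mu$ with $\mu\in\Xi[m]_n$ satisfies $E_\beta v=0$ for all $\beta\in\Phi[m]^+\setminus\Phi[m]^+_n$. Hence being singular in $M$ is equivalent to being singular in the subspace $\mf{tr}^\infty_n(M)$. Moreover, decomposing $v$ as a sum of pure tensors of weight vectors and iterating \lemref{weight-decomposition2}, each tensor factor has weight in $\Xi[m]_n$, so indeed $v\in\mf{tr}^\infty_n(M_1)\otimes\cdots\otimes\mf{tr}^\infty_n(M_\ell)=\mf{tr}^\infty_n(M)$.

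The key technical step is the analogue of \lemref{wOm-Om}: for any weight vector $w=w_1\otimes w_2$ in $N_1\otimes N_2$ ($N_1,N_2\in\OO[m]$) of total weight $\gamma\in\Xi[m]_n$, one has $\Omega[m](w)=\Omega[m]_n(w)$. I would prove this by weight inspection. By \lemref{weight-decomposition}, the individual weights $\gamma_1,\gamma_2$ lie in $\Xi[m]_n$, so $\gamma_i(E_j)=0$ for $j\in\J^+_m\setminus\J^+_m(n)$. Consequently every Cartan-type summand of $\Omega[m]-\Omega[m]_n$ (those involving $E_j\otimes E_j$, $K\otimes E_j$, $E_j\otimes K$ with $j\notin\J^+_m(n)$) kills $w$ outright. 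For a root $\beta\in\Phi[m]^+\setminus\Phi[m]^+_n$, the root necessarily involves some index outside $\J^+_m(n)$; a direct check shows that in each of $E_\beta\otimes E^\beta$ and $E^\beta\otimes E_\beta$, the factor that produces a strictly negative coefficient on $\epsilon_j$ for some $j\in\N$ with $j>n$ must act as zero, since $\gamma_i$ is $0$ on $E_j$ and weights of modules in $\OO[m]$ have non-negative $\epsilon_j$-coefficients. Hence those cross terms annihilate $w$ as well.

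Applying this pairwise to the tensor factors in positions $i$ and $j$, I get $\Omega[m]^{(ij)}v=\Omega[m]^{(ij)}_n v$ for any weight vector $v\in M_\mu$ with $\mu\in\Xi[m]_n$, whence $\cH^i[m]v=\cH^i[m]_n v$. In particular $\cH^i[m]$ and $\cH^i[m]_n$ agree as endomorphisms of the (finite-dimensional) space $M_\mu^{\mathrm{sing}}=\mf{tr}^\infty_n(M)_\mu^{\mathrm{sing}}$. The eigenvector equivalence, the equivalence of diagonalizability, and the equality of spectra follow immediately. Part (ii) is proved identically, using the analogue of \lemref{wOm-Om}(ii) together with $\ov{\mf{tr}}^\infty_n$ in the role of $\mf{tr}^\infty_n$. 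The only substantive work is the weight-bookkeeping lemma identifying $\Omega[m]$ with $\Omega[m]_n$ on truncation-stable weight vectors; this is routine but must be set up carefully because it is what converts the comparison of Gaudin Hamiltonians into a triviality.
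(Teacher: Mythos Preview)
Your proof is correct and follows essentially the same route as the paper. The paper's argument is more terse: it simply notes $\mf{tr}^\infty_n(M)^{\rm sing}_{\mu}=M_{\mu}^{\rm sing}$ and then invokes \lemref{finitesum} directly to conclude $\cH^i[m](w)=\cH^i[m]_n(w)$ for $w\in M_\mu^{\rm sing}$, whereas you explicitly spell out the $\Omega[m]=\Omega[m]_n$ comparison on weight vectors in $\Xi[m]_n$. In fact your weight-bookkeeping for the root terms could be shortened: since each summand $E^\beta\otimes E_\beta$ or $E_\beta\otimes E^\beta$ with $\beta\in\Phi[m]^+\setminus\Phi[m]^+_n$ contains the positive root vector $E_\beta$ as one of its two tensor factors, \lemref{finitesum} already gives $E_\beta w_k=0$ for each tensor component $w_k$, and the term dies without any further case analysis on which factor produces a negative $\epsilon_j$-coefficient.
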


\begin{proof}
We will only prove (i). The proof of (ii) is similar. Note that $\mf{tr}^\infty_n(M)^{\rm sing}_{\mu}=M_{\mu}^{\rm sing}$ for $\mu \in \cP[m]^+_n$. By \lemref{finitesum}, we have
 $$
\cH^i[m](w)=\cH^i[m]_n(w),\qquad \hbox{ for all $w\in M_{\mu}^{\rm sing}$}.
 $$
The first part of (i) follows. The second part of (i) follows readily from the first part.
\end{proof}

\section{Gaudin Hamiltonians on modules over $\ovcG[m]_n$} \label{diag}
In this section, we consider the (quadratic) Gaudin Hamiltonians for finite-dimensional Lie (super)algebras. We relate the Gaudin Hamiltonians for $\cG[m]_n$ (resp., $\ovcG[m]_n$) to those for $\G[m]_n$ (resp., $\SG[m]_n$) for $n \in \N$. Furthermore, we study the Gaudin Hamiltonians on the tensor product of unitarizable irreducible highest weight modules and give an affirmative answer to the diagonalization of these operators.

Let us fix $\ell \in \N$ with $\ell \geq 2$. For $n \in \N$, the Casimir symmetric tensors for $\cG[m]_n$ and $\ovcG[m]_n$ are defined by (cf. \cite[Section 3.4]{CaL})
 \begin{align*}
\mathring{\Omega}[m]_n&=\sum_{\beta\in {\Phi[m]}^+_n} \big(E^\beta\otimes E_{\beta}+(-1)^{|E_\beta|} E_\beta\otimes E^{\beta}) +\sum_{j\in{\J}^+_m(n)} (-1)^{2j} E_j\otimes E_j,\\
\mathring{\ov{\Omega}}[m]_n&=\sum_{\beta\in \ov{\Phi}[m]^+_n}(E^\beta\otimes E_{\beta}+(-1)^{|E_\beta|} E_\beta \otimes E^{\beta})  +\sum_{j\in\ov{\J}^+_m(n)}(-1)^{2j}E_j\otimes E_j.
  \end{align*}
Clearly, $\mathring{{\Omega}}[m]_n$ and $\mathring{\ov{\Omega}}[m]_n$ lie in $U(\cG[m]_n)\otimes U(\cG[m]_n)$ and $U(\ovcG[m]_n)\otimes U(\ovcG[m]_n)$, respectively.

For any $i=1, \ldots, \ell$ and any  distinct complex numbers $z_1, \ldots, z_n$, the (quadratic) Gaudin Hamiltonians $H^i[m]_n$ and $\xov H^i[m]_n$ are defined by
\begin{equation}\label{Hamiltonian}
H^i[m]_n =\sum_{\substack{j=1 \\ j\not=i}}^\ell \frac{{\mathring{\Omega}[m]_n}^{\! \! \! \!  (ij)}}{z_i-z_j}\qquad\hbox{and} \qquad
\xov H^i[m]_n =\sum_{\substack{j=1 \\ j\not=i}}^\ell \frac{{\mathring{\ov{\Omega}}[m]_n}^{\! \! \! \! \!    (ij)}}{z_i-z_j}.
\end{equation}

Let $M=M_1\otimes \cdots \otimes M_\ell$ (resp., $\ov M=\ov M_1\otimes \cdots  \otimes \ov M_\ell$), where each $M_i$ (resp., $\ov M_i$) is a $\cG[m]_n$(resp., $\ovcG[m]_n$)-module.
Note that the Gaudin Hamiltonians $H^i[m]_n$ on $M$ (resp., $\xov{H}^i[m]_n$ on $\ov M$) mutually commute with each other, and they are $\cG[m]_n$(resp., $\ovcG[m]_n$)-homomorphisms.
It is also evident that for any weight $\mu$ of $M$ (resp., $\ov M$), the subspace $M_{\mu}^{\rm sing}$ (resp., ${\ov M}_{\mu}^{\rm sing}$) is ${H}^i[m]_n$-invariant (resp., $\xov{H}^i[m]_n$-invariant). 
Here and below, $N^{\rm sing}$ stands for the subspace spanned by singular vectors in $N$ with respect to the standard Borel subalgebra
and $N_\mu^{\rm sing}$ for the subspace spanned by singular vectors in the weight space $N_\mu$ for any $\cG[m]_n$(resp., $\ovcG[m]_n$)-module $N$ and any weight $\mu$ of $N$.

Let $\xx=\mf{a, c, d}$. Each $\G^\xx[m]_n$(resp., $\ov\G^\xx[m]_n$)-module can be regarded as a $\cG^\xx[m]_n$(resp., $\ovcG^\xx[m]_n$)-module via the isomorphism $\iota$ defined by \eqref{iso-e}. It is clear that the set of singular vectors in a module $N$ regarded as a $\G^\xx[m]_n$(resp., $\ov\G^\xx[m]_n$)-module equals the set of singular vectors in the module $N$ regarded as a $\cG^\xx[m]_n$(resp., $\ovcG^\xx[m]_n$)-module.

\begin{prop}\label{iota_corresp}
\begin{enumerate}[\normalfont(i)]
\item For $n\in \N$ and $i=1, \ldots, \ell$, let $M_i$ be a $\G^\xx[m]_n$-module, and let $M=M_1\otimes \cdots \otimes M_\ell$ and $v \in M^{\rm sing}$. Then for each $i=1, \ldots, \ell$, $v$ is an eigenvector of $\cH^i[m]_n$ if and only if $v$ is an eigenvector of $H^i[m]_n$. Thus, $\cH^i[m]_n$ is diagonalizable on $M^{\rm sing}$ if and only if $H^i[m]_n$ is diagonalizable on $M^{\rm sing}$.

\item For $n\in \N$ and $i=1, \ldots, \ell$, let $\ov M_i$ be a $\ov\G^\xx[m]_n$-module, and let $\ov M=\ov M_1\otimes \cdots  \otimes \ov M_\ell$ and $v \in \ov M^{\rm sing}$. Then for each $i=1, \ldots, \ell$, $v$ is an eigenvector of $\ovcH^i[m]_n$  if and only if $v$ is an eigenvector of $\xov H^i[m]_n$. Thus, $\ovcH^i[m]_n$ is diagonalizable on $\ov M^{\rm sing}$ if and only if $\xov H^i[m]_n$ is diagonalizable on $\ov M^{\rm sing}$.
\end{enumerate}
\end{prop}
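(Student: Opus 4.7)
The plan is to reduce both parts to computing the operator difference $\cH^i[m]_n - H^i[m]_n$ (respectively $\ovcH^i[m]_n - \xov{H}^i[m]_n$) and observing that it acts as a scalar on each joint $K$-isotypic component of the tensor product, from which the eigenvector equivalence on each such component, and hence the diagonalizability equivalence on the full singular subspace, follow.

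First, I transport $\mathring{\Omega}[m]_n$ into the $\G^\xx[m]_n$-picture via the isomorphism $\iota$ of \eqref{iso-e}, obtaining
$\sum_{\beta}(E^\beta\otimes E_\beta+(-1)^{|E_\beta|}E_\beta\otimes E^\beta)+\sum_{l}(-1)^{2l}(E_l+\alpha_l K)\otimes(E_l+\alpha_l K)$,
where $\alpha_l:=\mathrm{Str}(JE_l)=-(-1)^{|v_l|}$. Only the Cartan elements pick up a correction, since $J$ is diagonal and the root vectors are off-diagonal, so $\mathrm{Str}(JE^\beta)=\mathrm{Str}(JE_\beta)=0$. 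The key arithmetic observation is the uniform identity $(-1)^{2l}\alpha_l=-1$ for every $l\in\hf\N$, independently of whether $|v_l|$ is even or odd. Subtracting $\Omega[m]_n^{(ij)}$, the mixed terms $E_l^{(i)}K^{(j)}+K^{(i)}E_l^{(j)}$ acquire coefficient $(-1)^{2l}\alpha_l+1=0$ after being combined with the explicit $-K\otimes E_l-E_l\otimes K$ summands already present in $\Omega[m]_n$, and only the pure $K\otimes K$ contribution survives, yielding
$$
\mathring{\Omega}[m]_n^{(ij)}-\Omega[m]_n^{(ij)}=\Big(\sum_{l\in\J^+_m(n)}(-1)^{2l}\Big)\,K^{(i)}K^{(j)}=(n-m)\,K^{(i)}K^{(j)}
$$
as operators on $M$. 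The analogous computation with $\ov{\J}^+_m(n)$, which has $m$ integer and $n$ half-integer indices, gives $\mathring{\ov{\Omega}}[m]_n^{(ij)}-\ov{\Omega}[m]_n^{(ij)}=(m-n)\,K^{(i)}K^{(j)}$.

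Since $K$ is central and every $M_i\in\OO[m]_n$ is semisimple over $\fh[m]_n$, $M=M_1\otimes\cdots\otimes M_\ell$ decomposes as a direct sum of joint $K$-isotypic pieces $M_{1,c_1}\otimes\cdots\otimes M_{\ell,c_\ell}$, each stable under every Gaudin Hamiltonian and compatible with $M^{\rm sing}$. On such a piece the above identity gives
$$
H^i[m]_n-\cH^i[m]_n=(n-m)\sum_{\substack{j=1 \\ j\neq i}}^{\ell}\frac{c_i c_j}{z_i-z_j}\cdot\mathrm{id},
$$
a pure scalar, so a vector lying in one component is an eigenvector of $\cH^i[m]_n$ iff it is an eigenvector of $H^i[m]_n$, and diagonalizability of either on $M^{\rm sing}$ is equivalent to diagonalizability of the other by summing over components. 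Part (ii) is proved by the same argument, with $\Omega$, $\mathring{\Omega}$, $\J^+_m(n)$ replaced by their barred counterparts.

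The main technical input is the uniform identity $(-1)^{2l}\alpha_l=-1$ and the ensuing cancellation against the explicit $-K\otimes E_l-E_l\otimes K$ corrections in $\Omega[m]_n$ and $\ov{\Omega}[m]_n$; once this is in place everything else reduces to the centrality of $K$, so I anticipate no further obstacle.
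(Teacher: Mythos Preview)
Your argument is correct and coincides with the paper's: both compute $\iota\otimes\iota(\mathring{\Omega}[m]_n)=\Omega[m]_n+(n-m)\,K\otimes K$ (the paper writes it as $\Omega[m]_n-(m-n)K\otimes K$) and conclude that the two Hamiltonians differ on $M$ by the operator $\sum_{j\neq i}(z_i-z_j)^{-1}K^{(i)}K^{(j)}$ up to a constant. The paper simply asserts that this operator acts as a fixed scalar on a given $v$, whereas you spell this out via the joint $K$-isotypic decomposition (implicitly invoking $M_i\in\OO[m]_n$, which is not literally part of the hypothesis but is satisfied in every application); this is only a difference in level of detail, not in strategy.
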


\begin{proof}
It is easy to see that
$\iota\otimes \iota \big(\mathring{\Omega}[m]_n\big)=\Omega[m]_n-(m-n)K\otimes K$. Therefore
$$
H^i[m]_n(v)=\cH^i[m]_n(v)-(m-n)\sum_{\substack{j=1 \\ j\not=i}}^\ell \frac{K^{(i)}  K^{(j)}}{z_i-z_j}(v).
$$
The last term on the right hand side is a fixed scalar times $v$. This implies (i). The proof of (ii) is similar.
\end{proof}

Let $n\in \N \cup \{\infty\}$. For $\xi_1, \ldots, \xi_\ell \in {\bf h}^\xx[m]_n^*$ (resp., $\ov\xi_1, \ldots, \ov\xi_\ell \in \ov{\bf h}^\xx[m]_n^*$), we set $\un{\xi}:=(\xi_1, \ldots, \xi_\ell)$ (resp., $\un{\ov{\xi}}:=(\ov\xi_1, \ldots, \ov\xi_\ell)$) and
$$
L(\cG^\xx[m]_n, \un{\xi}):=L(\cG^\xx[m]_n, \xi_1) \otimes \cdots \otimes L(\cG^\xx[m]_n, \xi_\ell)
$$
$$
\big(\textrm{resp., }L(\ovcG^\xx[m]_n, \un{\ov{\xi}}):=L(\ovcG^\xx[m]_n, \ov\xi_1) \otimes \cdots \otimes L(\ovcG^\xx[m]_n, \ov\xi_\ell) \big).
$$
Similarly, for $\xi_1, \ldots, \xi_\ell \in \fh^\xx[m]_n^*$ (resp., $\ov\xi_1, \ldots, \ov\xi_\ell \in \ov{\fh}^\xx[m]_n^*$), we set $\un{\xi}:=(\xi_1, \ldots, \xi_\ell)$ (resp., $\un{\ov{\xi}}:=(\ov\xi_1, \ldots, \ov\xi_\ell)$) and
$$
L(\G^\xx[m]_n, \un{\xi}):=L(\G^\xx[m]_n, \xi_1) \otimes \cdots \otimes L(\G^\xx[m]_n, \xi_\ell)
$$
$$
\big(\textrm{resp., }L(\SG^\xx[m]_n, \un{\ov{\xi}}):=L(\SG^\xx[m]_n, \ov\xi_1) \otimes \cdots \otimes L(\SG^\xx[m]_n, \ov\xi_\ell) \big).
$$
Again we will drop the subscript $\infty$. For instance, we denote $L(\G^\xx[m], \un{\xi}):=L(\G^\xx[m]_\infty, \un{\xi})$.

The following theorem follows from \cite[Main Corollary]{Ry}.

\begin{thm} \label{R}
Let $n\in\N$, and let $\xi_1, \ldots, \xi_\ell \in {\bf h}^\xx[0]_n^*$ be dominant integral weights. For each $i=1, \ldots, \ell$ and generic $z_1, \ldots, z_\ell$, the Gaudin Hamiltonian $H^i[0]_n$ is diagonalizable on the space $L(\cG^{\xx}[0]_n, \un{\xi})^{\rm sing}$.
\end{thm}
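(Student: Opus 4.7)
The plan is to deduce the statement as a direct application of Rybnikov's Main Corollary in \cite{Ry}, which asserts that for a finite-dimensional simple Lie algebra $\mf{g}$ and generic $z_1,\ldots,z_\ell$, the entire Gaudin algebra (in particular each quadratic Gaudin Hamiltonian) acts by a commuting family of diagonalizable operators on the subspace of singular vectors in any tensor product of finite-dimensional irreducible $\mf{g}$-modules. The assumption that each $\xi_i$ is dominant integral guarantees that $L(\cG^\xx[0]_n,\xi_i)$ is such a finite-dimensional irreducible module over the finite-dimensional (reductive) Lie algebra $\cG^\xx[0]_n$, so after identifying our Casimir tensor with the one Rybnikov uses, the conclusion follows.

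I would then treat the cases separately. For $\xx=\mf{c},\mf{d}$ the algebras $\cG^{\mf{c}}[0]_n\cong\mf{sp}(2n)$ and $\cG^{\mf{d}}[0]_n\cong\mf{so}(2n)$ are simple, and by the definition in Section~\ref{diag} the symmetric tensor $\mathring{\Omega}[0]_n$ is the sum $\sum E^\beta\otimes E_\beta+\sum E_\beta\otimes E^\beta+\sum E_j\otimes E_j$ built from a root basis and its dual with respect to the non-degenerate invariant form $\langle\cdot,\cdot\rangle=\tfrac12(\cdot,\cdot)$ inherited from the supertrace. This is exactly the Casimir symmetric tensor in Rybnikov's sense (for the form $\langle\cdot,\cdot\rangle$), so Rybnikov's theorem applies verbatim and yields diagonalizability of $H^i[0]_n$ on $L(\cG^\xx[0]_n,\un\xi)^{\rm sing}$ for generic $z_1,\ldots,z_\ell$.

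For $\xx=\mf{a}$ one has $\cG^{\mf{a}}[0]_n\cong\gl(n)$, which is reductive but not simple, so a small reduction is required. I would write $\gl(n)=\mf{sl}(n)\oplus\C I_n$, where $I_n=\sum_{i=1}^nE_{ii}$ spans the center, and observe that $\mathring{\Omega}[0]_n$ equals the standard Casimir symmetric tensor of $\mf{sl}(n)$ plus a scalar multiple of $I_n\otimes I_n$. Because $I_n$ acts as a scalar on each finite-dimensional irreducible $\gl(n)$-module $L(\cG^{\mf{a}}[0]_n,\xi_i)$, the correction term contributes only a (global) scalar to $H^i[0]_n$ on $L(\cG^{\mf{a}}[0]_n,\un\xi)$. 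Consequently $H^i[0]_n$ is diagonalizable on $L(\cG^{\mf{a}}[0]_n,\un\xi)^{\rm sing}$ if and only if the analogous $\mf{sl}(n)$-Gaudin Hamiltonian is, and the latter is again Rybnikov's theorem.

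Since the argument is essentially a citation, I do not expect a genuine obstacle: the only care required is the scalar bookkeeping for $\xx=\mf{a}$ above, which reduces to the well-known fact that the center of $\gl(n)$ acts by scalars on finite-dimensional irreducibles. Any normalization discrepancy between $\langle\cdot,\cdot\rangle$ and Rybnikov's preferred invariant form merely rescales $\mathring{\Omega}[0]_n$ by a constant and does not affect diagonalizability.
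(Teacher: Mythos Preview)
Your proposal is correct and matches the paper's approach exactly: the paper simply cites \cite[Main Corollary]{Ry} for the simple cases $\xx=\mf{c},\mf{d}$, and handles $\xx=\mf{a}$ via the same reduction you describe, noting that the $\gl(n)$ Gaudin Hamiltonian differs from the $\mf{sl}(n)$ one by a scalar on $L(\cG^{\mf a}[0]_n,\un\xi)^{\rm sing}$.
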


\begin{rem}
The results in \cite{Ry} involve only simple Lie algebras. The Lie algebra $\cG^{\mf{a}}[0]_n \cong \mathfrak{gl}(n)$ is, however, not semisimple. The case where $\xx=\mf{a}$ is still true since the irreducible highest weight modules over $\mathfrak{gl}(n)$ coincide with those over $\mathfrak{sl}(n)$, and each Gaudin Hamiltonian for $\mathfrak{gl}(n)$ minus that for $\mathfrak{sl}(n)$ is a scalar multiple of the identity operator on the space $L(\cG^{\mf{a}}[0]_n, \un{\xi})^{\rm sing}$.
\end{rem}

Recall $Q^\xx(m|n)$, $\ov Q^\xx(m|n)$, $\ov{\mc Q}^{\xx, I}(m|n)$ and $\ov{\mc Q}^{\xx, II}(m|n)$ defined in Sections \ref{unitarizable-a} and \ref{unitarizable-cd}.

\begin{cor}\label{cor:R}
Let $\xi_1, \ldots, \xi_\ell \in Q^\xx(0|\infty)$. For each $i=1, \ldots, \ell$ and generic $z_1, \ldots, z_\ell$, the Gaudin Hamiltonian $\cH^i[0]$ is diagonalizable on the space $L(\G^{\xx}[0],\un{\xi})^{\rm sing}$.
\end{cor}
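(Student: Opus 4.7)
The plan is to reduce, weight subspace by weight subspace, to Rybnikov's finite-rank result \thmref{R}, passing through super duality to the $\ov{T}_{[0]}$-image on the Lie-algebra side $\SG^\xx[0]$. Each $\xi_j \in Q^\xx(0|\infty)$ is parametrized by a partition $\la^{(j)}$ with finitely many nonzero parts (together with an integer or half-integer $d_j$ for $\xx=\mf c, \mf d$); hence for $n$ sufficiently large, the corresponding weight $\ov\xi_j$ (the image under the composition \eqref{Qa} for $\xx=\mf a$ or \eqref{Qx} for $\xx=\mf c, \mf d$) lies in $\ov Q^\xx(0|n)\subseteq\ov\cP[0]^+_n$ for all $j$.

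I would argue weight subspace by weight subspace. Fix a weight $\mu[0]$ of $L(\G^\xx[0], \un\xi)^{\rm sing}$; since $L(\G^\xx[0], \un\xi) \in \OO[0]$, $\mu$ has finite support, and I may take $n$ large enough so that $\ov\mu[0] \in \ov\cP[0]^+_n$. Applying the ``consequence'' clause in \thmref{eigenvector-C1} to the preimage of $L(\G^\xx[0], \un\xi)$ under $T_{[0]}$, diagonalizability of $\cH^i[0]$ on $L(\G^\xx[0], \un\xi)^{\rm sing}_{\mu[0]}$ is equivalent to diagonalizability of $\ovcH^i[0]$ on $L(\SG^\xx[0], \un{\ov\xi})^{\rm sing}_{\ov\mu[0]}$. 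By \propref{trun_eigenvector}(ii), this is equivalent to diagonalizability of $\ovcH^i[0]_n$ on the same finite-dimensional singular weight space; and by \propref{iota_corresp}(ii), further equivalent to diagonalizability of $\xov H^i[0]_n$ on $L(\ovcG^\xx[0]_n, \un{\ov\xi})^{\rm sing}$, viewed as a $\ovcG^\xx[0]_n$-module via $\iota$.

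The conditions defining $Q^\xx(0|\infty)$ are arranged so that, after $\iota$, each $\ov\xi_j$ restricts to a dominant integral weight of the finite-dimensional classical Lie algebra $\ovcG^\xx[0]_n$ (which is $\gl(n)$ for $\xx=\mf a$, $\mf{so}(2n)$ for $\xx=\mf c$, and $\mf{sp}(2n)$ for $\xx=\mf d$); for instance, for $\xx=\mf c$ the shift sends the $\mf{so}(2n)$-weight vector $(\la'_1,\ldots,\la'_n)$ to $(\la'_1+d,\ldots,\la'_n+d)$, which is dominant integral because $\la'$ is a partition and $d\in\Z_+$. Hence \thmref{R} yields diagonalizability of $\xov H^i[0]_n$ on $L(\ovcG^\xx[0]_n, \un{\ov\xi})^{\rm sing}$ for generic $z_1,\ldots,z_\ell$, which unwinds to diagonalizability of $\cH^i[0]$ on each $L(\G^\xx[0], \un\xi)^{\rm sing}_{\mu[0]}$ and therefore on $L(\G^\xx[0], \un\xi)^{\rm sing}=\bigoplus_\mu L(\G^\xx[0], \un\xi)^{\rm sing}_{\mu[0]}$. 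The main subtlety I foresee is that ``generic $z$'' in \thmref{R} depends on $n$ (hence on $\mu$), so the set of admissible $z$'s across all weight subspaces is a countable intersection of Zariski-open subsets of $\C^\ell$---dense but not Zariski-open---and ``generic'' in the corollary must be interpreted in this weaker sense.
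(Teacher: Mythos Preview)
Your detour through super duality to the $\ov T_{[0]}$-side introduces a genuine error for $\xx=\mf c,\mf d$. After truncating and applying $\iota$, the $\ovcG^{\xx}[0]_n$-weight you obtain is indeed $(\la'_1+d,\ldots,\la'_n+d)$, but this is \emph{not} dominant integral for the Borel $\ov{\bf b}^{\xx}[0]_n$ used in the paper. For $\xx=\mf c$ the Dynkin diagram of $\ovcG^{\mf c}[0]_n\cong\mf{so}(2n)$ has $-\ep_{1/2}-\ep_{3/2}$ as a simple root, so dominance requires $\mu_1+\mu_2\le 0$; for $\xx=\mf d$ the diagram of $\ovcG^{\mf d}[0]_n\cong\mf{sp}(2n)$ has $-2\ep_{1/2}$ as a simple root, forcing $\mu_1\le 0$. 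Your weights, having all coordinates $\ge 0$, fail these conditions except in the trivial case. In fact the modules $L(\ovcG^{\xx}[0]_n,\ov\xi_j)$ you land on are precisely the \emph{infinite-dimensional} unitarizable highest weight modules of Remark~\ref{inf_U} and the remark after \thmref{dom-uni-1}, so \thmref{R} does not apply. (For $\xx=\mf a$ your route happens to work since $\ovcG^{\mf a}[0]_n\cong\gl(n)$ with the standard Borel, but this is accidental.)

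The paper's proof avoids super duality entirely: it truncates directly on the $\OO[0]$ side via $\mf{tr}^\infty_n$, observes that $\mf{tr}^\infty_n(L(\G^{\xx}[0],\xi_i))$ viewed through $\iota$ as a $\cG^{\xx}[0]_n$-module is irreducible with dominant integral highest weight (for $\xx=\mf c$ the weight is $(\la_1-d,\ldots,\la_n-d)$ with $\la_1\le d$, dominant for $\mf{sp}(2n)$ with simple root $-2\ep_1$; for $\xx=\mf d$ one gets $(\la_1-d,\ldots,\la_n-d)$ with $\la_1+\la_2\le 2d$, dominant for $\mf{so}(2n)$ with simple root $-\ep_1-\ep_2$), and then invokes \propref{trun_eigenvector}(i), \propref{iota_corresp}(i), and \thmref{R}. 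The moral is that the bijection \eqref{Qx} sends the finite-dimensional side $Q^{\xx}(0|\infty)$ to the infinite-dimensional side $\ov Q^{\xx}(0|\infty)$; staying on the $\cG^{\xx}[0]$ side is essential here. Your closing remark about the meaning of ``generic $z$'' across varying $n$ is a fair observation that the paper leaves implicit.
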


\begin{proof} Note that $\mf{tr}^\infty_n(L(\G^{\xx}[0],{\xi}_i))$, regarded as a $\cG^\xx[0]_n$-module via the isomorphism $\iota$ defined by \eqref{iso-e}, is an irreducible module with dominant integral highest weight. Given a weight space of $L(\G^{\xx}[0], \un{\xi})$, we can choose $n$ large enough that $\mf{tr}^\infty_n(L(\G^{\xx}[0], \un{\xi}))$ contains the given weight space. Thus the corollary follows from \propref{trun_eigenvector}, \propref{iota_corresp} and \thmref{R}.
\end{proof}

Fix $m\in \Z_+$. According to the bijections \eqnref{Qa} and \eqnref{Qx}, any weight $\xi \in  Q^\xx(0|\infty)$ corresponds to a weight $\ov\xi\in \ov Q^\xx(m|\infty)$, and vice versa.

\begin{thm}\label{thm:R-D}
Let $\xi_1, \ldots, \xi_\ell \in  Q^\xx(0|\infty)$, and let $\ov\xi_1, \ldots, \ov\xi_\ell  \in \ov Q^\xx(m|\infty)$ be the corresponding weights. For each $i=1, \ldots, \ell$ and generic $z_1, \ldots, z_\ell$, the Gaudin Hamiltonian $\ovcH^i[m]$ is diagonalizable on the space $L(\SG^{\xx}[m], \un{\ov{\xi}})^{\rm sing}$, and any eigenbasis of $\ovcH^i[m]$ on $L(\SG^{\xx}[m], \un{\ov{\xi}})^{\rm sing}$ can be obtained from some eigenbasis of $\cH^i[0]$ on $L(\G^{\xx}[0], \un{\xi})^{\rm sing}$, and vice versa. Moreover, the actions of $\ovcH^i[m]$ on $L(\SG^{\xx}[m], \un{\ov{\xi}})^{\rm sing}$ and $\cH^i[0]$ on $L(\G^{\xx}[0], \un{\xi})^{\rm sing}$ have the same spectrum.
\end{thm}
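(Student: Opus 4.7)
The plan is to descend to a common preimage in $\w\OO$ via super duality, pull Rybnikov's diagonalization from the Lie algebra side across to $\w\OO$ using \thmref{eigenvector-C1}, and then push it forward to $\ov\OO[m]$. For each $i=1,\ldots,\ell$, the bijections \eqref{Qa} and \eqref{Qx} attach to the pair $(\xi_i,\ov\xi_i)$ a weight $\w\xi_i\in\wt Q^\xx$; writing $\xi_i=\la_i[0]$ and $\ov\xi_i=\ov\la_i[m]$, one has $\w\xi_i=\w\la_i$. Set
\[
\wt M := L(\DG,\w\xi_1)\otimes\cdots\otimes L(\DG,\w\xi_\ell)\in\w\OO
\]
(using \propref{tensor-cat} and the fact that $\w\OO$ is a tensor category). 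Since $T_{[0]}$ and $\ov T_{[m]}$ are tensor functors sending irreducibles to irreducibles (\thmref{SD}), one has $T_{[0]}(\wt M)=L(\G[0],\un\xi)$ and $\ov T_{[m]}(\wt M)=L(\SG[m],\un{\ov\xi})$.

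By \corref{cor:R}, for generic $z_1,\ldots,z_\ell$ the operator $\cH^i[0]$ is diagonalizable on $L(\G[0],\un\xi)^{\rm sing}$, hence on each finite-dimensional weight subspace $T_{[0]}(\wt M)^{\rm sing}_{\mu[0]}$ indexed by $\w\mu\in\w{\cP}^+$. I now invoke \thmref{eigenvector-C1}: for every such $\w\mu$, part (i) supplies a linear isomorphism $\mathfrak{t}^{\w\mu}_{[0]}: \wt M^{\rm sing}_{\w\mu}\to T_{[0]}(\wt M)^{\rm sing}_{\mu[0]}$ intertwining $\wcH^i$ with $\cH^i[0]$ and preserving eigenvalues, while part (ii) supplies $\ov{\mathfrak{t}}^{\w\mu}_{[m]}: \wt M^{\rm sing}_{\w\mu}\to \ov T_{[m]}(\wt M)^{\rm sing}_{\ov\mu[m]}$ intertwining $\wcH^i$ with $\ovcH^i[m]$. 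The composition
\[
\ov{\mathfrak{t}}^{\w\mu}_{[m]}\circ(\mathfrak{t}^{\w\mu}_{[0]})^{-1}\colon T_{[0]}(\wt M)^{\rm sing}_{\mu[0]}\longrightarrow\ov T_{[m]}(\wt M)^{\rm sing}_{\ov\mu[m]}
\]
is therefore a linear isomorphism turning any eigenbasis of $\cH^i[0]$ on the left into an eigenbasis of $\ovcH^i[m]$ on the right with identical eigenvalues, and conversely.

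Assembling these weight-space isomorphisms across the decompositions
\[
L(\G[0],\un\xi)^{\rm sing}=\bigoplus_{\w\mu\in\w{\cP}^+} T_{[0]}(\wt M)^{\rm sing}_{\mu[0]},\qquad L(\SG[m],\un{\ov\xi})^{\rm sing}=\bigoplus_{\w\mu\in\w{\cP}^+} \ov T_{[m]}(\wt M)^{\rm sing}_{\ov\mu[m]}
\]
then yields the diagonalizability of $\ovcH^i[m]$ on $L(\SG[m],\un{\ov\xi})^{\rm sing}$, the bijective eigenbasis-transfer claimed in the theorem, and the equality of spectra. Since the heavy input has been fully packaged into \corref{cor:R}, \thmref{SD} and \thmref{eigenvector-C1}, there is no serious obstacle; the only points requiring verification are that the genericity locus in \corref{cor:R} is independent of $\w\mu$ (automatic, as it constrains only the $z_i$) and that the family $\{\mathfrak{t}^{\w\mu}_{[0]},\ov{\mathfrak{t}}^{\w\mu}_{[m]}\}_{\w\mu}$ assembles coherently into a bijection of the full singular spaces, which is immediate from \propref{isom} and \remref{A-inverse} since each map is realized by multiplication by a fixed element of $U(\tilde{\mf l})$ depending only on $\w\mu$.
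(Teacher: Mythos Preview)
Your proposal is correct and follows essentially the same approach as the paper: invoke \corref{cor:R}, lift to $\w\OO$ via the tensor functor $T_{[0]}$ and \thmref{eigenvector-C1}(i) with $m'=0$, then descend via $\ov T_{[m]}$ and \thmref{eigenvector-C1}(ii), using the bijections \eqref{Qa} and \eqref{Qx} to match the highest weights. The paper's own proof is a one-line summary of exactly this; you have simply made explicit the construction of $\wt M$, the weight-by-weight application of \propref{isom}, and the assembly over $\w\mu\in\w{\cP}^+$.
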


\begin{proof}
As noted in Section \ref{category-O}, $T_{[0]}$ and $\ov T_{[m]}$ are tensor functors. For $n=\infty$, the theorem follows from \corref{cor:R} and \thmref{eigenvector-C1} with $m'=0$ together with \eqnref{Qa} and \eqnref{Qx}.
\end{proof}

\begin{thm} \label{dom-uni-1}
Let $n\in \N$ and $\xi_1, \ldots, \xi_\ell \in \ov{\mc Q}^{\xx, I}(m|n)$. For each $i=1, \ldots, \ell$ and generic $z_1, \ldots, z_\ell$, the Gaudin Hamiltonian $\xov H^i[m]_n$ is diagonalizable on the space $L(\ovcG^{\xx}[m]_n, \un{\xi})^{\rm sing}$, and any of its eigenbasis can be obtained from the singular eigenvectors of the Gaudin Hamiltonians $H^i[0]_k$'s on some tensor products of finite-dimensional irreducible $\cG^\xx[0]_k$-modules.
\end{thm}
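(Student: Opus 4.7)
The plan is to chain together the reductions already developed in the paper, ultimately arriving at Rybnikov's theorem (\thmref{R}) for the Lie algebra side. The backbone is the sequence of eigenvalue-preserving reductions
\[
\xov H^i[m]_n \rightsquigarrow \ovcH^i[m]_n \rightsquigarrow \ovcH^i[m] \rightsquigarrow \cH^i[0] \rightsquigarrow \cH^i[0]_k \rightsquigarrow H^i[0]_k,
\]
obtained from \propref{iota_corresp}, \propref{trun_eigenvector}, \thmref{thm:R-D} and \propref{iota_corresp} again.

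First, given $\xi_i = \ov{\la^{(i)}} - d_i \mathbbm{1}_{m|n} \in \ov{\mc Q}^{\xx, I}(m|n)$, the proof of \propref{prop: U_typeI_II} identifies the $\ovcG^{\xx}[m]_n$-module $L(\ovcG^{\xx}[m]_n, \xi_i)$ with the $\SG^{\xx}[m]_n$-module $L(\SG^{\xx}[m]_n, \ov{\la^{(i)}}[m])$ via the $*$-monomorphism $\ov\iota$ of \propref{*-homo_cd}, where $\ov{\la^{(i)}}[m] \in \ov Q^{\xx}(m|n)$. \propref{iota_corresp}(ii) says that $\xov H^i[m]_n$ and $\ovcH^i[m]_n$ agree up to a common scalar on each singular weight space, so it suffices to diagonalize $\ovcH^i[m]_n$ on $L(\SG^{\xx}[m]_n, \un{\ov\la[m]})^{\rm sing}$.

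Next, I would fix a weight $\mu$ and work on the finite-dimensional $\ovcH^i[m]_n$-invariant subspace $L(\SG^{\xx}[m]_n, \un{\ov\la[m]})_\mu^{\rm sing}$. Choose $k\in \N$ large enough that $\mu \in \ov\Xi[m]_k$ and each $\ov{\la^{(i)}}[m] \in \ov\cP[m]_k^+$. By \propref{trun} together with \lemref{trun-tensor} and \propref{trun_eigenvector}(ii), the eigenvectors of $\ovcH^i[m]_n$ in this weight space correspond, with matching eigenvalues, to those of $\ovcH^i[m]$ on the corresponding weight space of $L(\SG^{\xx}[m], \un{\ov\la[m]})^{\rm sing}$. \thmref{thm:R-D} then transfers the problem via super duality and the isomorphisms $\ov{\mathfrak t}^{\w\mu}_{[m]}$ and $(\mathfrak t^{\w\mu}_{[0]})^{-1}$ of \propref{isom} to diagonalizing $\cH^i[0]$ on $L(\G^{\xx}[0], \un{\la[0]})^{\rm sing}$, where $\la^{(i)}[0] \in Q^{\xx}(0|\infty)$ is the weight matched with $\ov{\la^{(i)}}[m]$ under the bijections \eqnref{Qa} or \eqnref{Qx}. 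Applying \propref{trun_eigenvector}(i) and \propref{iota_corresp}(i) descends this in turn to diagonalizing $H^i[0]_k$ on $L(\cG^{\xx}[0]_k, \un{\la[0]})^{\rm sing}$; each $L(\cG^{\xx}[0]_k, \la^{(i)}[0])$ is a finite-dimensional irreducible module with dominant integral highest weight, so \thmref{R} produces an eigenbasis for generic $z_1,\ldots,z_\ell$. Pulling it back through the explicit chain of maps yields the claimed eigenbasis of $\xov H^i[m]_n$ on $L(\ovcG^{\xx}[m]_n, \un{\xi})_\mu^{\rm sing}$.

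The main obstacle is coordinating the data across weight spaces: $L(\ovcG^{\xx}[m]_n, \un{\xi})^{\rm sing}$ is typically infinite-dimensional, and both the truncation level $k$ and the Zariski-open set of admissible $z$ provided by Rybnikov depend on the weight $\mu$. This is precisely the reason the theorem is phrased in terms of the family $\{H^i[0]_k\}$, with an eigenbasis of $\xov H^i[m]_n$ assembled weight space by weight space from finitely many Lie-algebra eigenbases, each on a finite-dimensional tensor product of irreducibles over a suitable $\cG^{\xx}[0]_k$.
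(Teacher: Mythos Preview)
Your proof is correct and follows essentially the same route as the paper, which identifies $L(\ovcG^{\xx}[m]_n,\un\xi)$ with $L(\SG^{\xx}[m]_n,\un{\ov\la[m]})$ via $\iota$ and then cites \propref{trun_eigenvector}, \propref{iota_corresp}, and \thmref{thm:R-D}. One minor quibble: the truncation level $k$ should be chosen on the $\G^{\xx}[0]$ side so that the corresponding weight $\mu[0]$ lies in $\Xi[0]_k$ (as in the proof of \corref{cor:R}); the lift from rank $n$ to rank $\infty$ on the superalgebra side already uses the given $n$, not $k$.
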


\begin{proof}
We prove the theorem for $\xx=\mf{c}$. The other cases can be proved similarly.

For any $i=1, \ldots, \ell$, $\xi_i=\ov{\la}_i-d_i \mathbbm{1}_{m|n}$ for some $\la_i \in \cP(m|n)$ and $d_i \in \Z_+$ such that $(\la_{i})_1 \leq d_i$. Regarded as a $\SG^{\mf{c}}[m]_n$-module via the isomorphism $\iota$ given in \eqnref{iso-e}, $L(\ovcG^{\mf{c}}[m]_n, \un{\xi})$ is isomorphic to $L(\SG^{\mf{c}}[m]_n, \ovla_1[m])\otimes\cdots \otimes L(\SG^{\mf{c}}[m]_n, \ovla_\ell[m])$, where
$$
\ovla_i[m] :=\sum_{j=1}^{m} (\la_{i})_j\ep_{j}+ \sum_{j=1}^n \left\langle \left(\la_i\right)^\prime_j -m \right\rangle \ep_{j-\hf}+ d_i \La_0  \qquad \quad \hbox{for $i=1, \ldots, \ell$.}
$$
Therefore $\ovla_1[m], \ldots, \ovla_\ell[m]\in \ov Q^{\mf{c}}(m|\infty)$. Now the theorem follows from \propref{trun_eigenvector}, \propref{iota_corresp} and \thmref{thm:R-D}.
\end{proof}

 \begin{rem}
 \begin{enumerate}[\normalfont(i)]
 
 \item The module $L(\ovcG^\xx[m]_n, \un{\xi})$ is a direct sum of irreducible highest weight modules since it is a tensor product of unitarizable irreducible highest weight modules. As each $\xov H^i[m]_n$ commutes with $\ovcG^{\xx}[m]_n$, we see that the Gaudin Hamiltonian $\xov H^i[m]_n$ is diagonalizable on the space $L(\ovcG^{\xx}[m]_n, \un{\xi})$.

 \item  For $\xx=\mf{a}$ and each $L(\ovcG^{\mf{a}}[m]_n, {\xi}_i)$ being the natural module $\C^{m|n}$, the corresponding result in \thmref{dom-uni-1} has been obtained by Mukhin, Vicedo and Young  \cite{MVY}.

\item We have $\ovcG^{\mf{c}}[0]_n \cong \mf{so}(2n)$ and $\ovcG^{\mf{d}}[0]_n \cong \mf{sp}(2n)$.  The weights $\xi_1, \ldots, \xi_\ell$ in \thmref{dom-uni-1} are highest weights of infinite-dimensional unitarizable irreducible highest weight modules (see \remref{inf_U}).

\item \thmref{dom-uni-1} is also valid for the ortho-symplectic Lie superalgebra $\osp(2m+1|2n)$ if $\xi_1, \ldots, \xi_\ell$ are the highest weights such that for each $i=1, \ldots, \ell$, $\xi_i=\ov{\la}_i-d_i \mathbbm{1}_{m|n}$ for some $\la_i \in \cP(m|n)$ and $d_i \in \Z_+$ satisfying $(\la_{i})_1 \leq d_i$. The proof is identical and is omitted here.

\end{enumerate}
\end{rem}

\begin{thm}  \label{dom-uni-2}
Let $\xx=\mf{c, d}$, $n\in \N$ and $\xi_1, \ldots, \xi_\ell \in \ov{\mc Q}^{\xx, II}(m|n)$. For each $i=1, \ldots, \ell$ and generic $z_1, \ldots, z_\ell$, the Gaudin Hamiltonian $\xov H^i[m]_n$ is diagonalizable on the space $L(\ovcG^{\xx}[m]_n, \un{\xi})^{\rm sing}$, and any of its eigenbasis can be obtained from the singular eigenvectors of the Gaudin Hamiltonians $H^i[0]_k$'s on some tensor products of finite-dimensional irreducible $\cG^\xx[0]_k$-modules.

\end{thm}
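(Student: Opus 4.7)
The proof follows the blueprint of \thmref{dom-uni-1}, but uses the isomorphism $\widehat\varphi$ from \lemref{lem:phi_ext} to convert the type II problem into a type I problem for the opposite classical type. Consider first $\xx=\mf c$. For each $i=1,\ldots,\ell$, write $\xi_i=\ov{\la_i}+d_i\mathbbm{1}_{m|n}$ with $\la_i\in\cP(m|n)$, $d_i\in\hf\Z_+$, and $(\la_i)'_1+(\la_i)'_2\le 2d_i$. The argument in the proof of \propref{prop: U_typeI_II}(ii) identifies the $\ovcG^{\mf c}[m]_n$-module $L(\ovcG^{\mf c}[m]_n,\xi_i)$ with the pullback of the $\G^{\mf d}[m]_n$-module $L(\G^{\mf d}[m]_n,\la'_i[m])$ along $\widehat\varphi\circ\ov\iota':\ovcG^{\mf c}[m]_n\to\G^{\mf d}[m]_n$. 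Tensoring up, $L(\ovcG^{\mf c}[m]_n,\un\xi)$ is identified, as an $\ovcG^{\mf c}[m]_n$-module, with $L(\G^{\mf d}[m]_n,\la'_1[m])\otimes\cdots\otimes L(\G^{\mf d}[m]_n,\la'_\ell[m])$, and each $\la'_i[m]$ lies in $Q^{\mf d}(m|n)$.

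Next I will track how the Casimir tensor transforms under this identification. A direct computation on the Cartan basis---using $\widehat\varphi(E^{\mf c}_j)=E^{\mf d}_{\varphi(j)}$, the relation $(-1)^{2\varphi(j)}=-(-1)^{2j}$ coming from the fact that $\varphi$ swaps integer and half-integer indices, and $\widehat\varphi(K)=-K$---shows $\widehat\varphi^{\otimes 2}(\ov\Omega^{\mf c}[m]_n)=-\Omega^{\mf d}[m]_n$, the same sign flip propagating to the root vector sum because the supertrace bilinear form itself changes by an overall sign under $\widehat\varphi$. Combined with the identity $(\ov\iota')^{\otimes 2}(\mathring{\ov\Omega}^{\mf c}[m]_n)=\ov\Omega^{\mf c}[m]_n-(m-n)K\otimes K$ from the proof of \propref{iota_corresp}, this yields
\[
(\widehat\varphi\circ\ov\iota')^{\otimes 2}(\mathring{\ov\Omega}^{\mf c}[m]_n)=-\Omega^{\mf d}[m]_n-(m-n)K\otimes K.
\]
Since $K$ acts as the scalar $d_i$ on the factor $L(\G^{\mf d}[m]_n,\la'_i[m])$, the operator $\xov H^i[m]_n$, viewed through the identification, becomes $-\cH^i[m]_n$ (the Gaudin Hamiltonian for $\G^{\mf d}[m]_n$) plus a scalar; the two therefore have the same diagonalizability behavior on singular subspaces.

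It remains to prove the diagonalizability of $\cH^i[m]_n$ for $\G^{\mf d}[m]_n$ on the singular subspace of $L(\G^{\mf d}[m]_n,\la'_1[m])\otimes\cdots\otimes L(\G^{\mf d}[m]_n,\la'_\ell[m])$. This is achieved by repeating the argument of \thmref{thm:R-D} with the tensor functor $T_{[m]}$ and \thmref{eigenvector-C1}(i) in place of $\ov T_{[m]}$ and \thmref{eigenvector-C1}(ii): applying \thmref{eigenvector-C1}(i) with $m'=m$ and then with $m'=0$ to the $\wcG^{\mf d}$-module whose weights in $\w Q^{\mf d}$ correspond to the $\la'_i[m]$ under \eqnref{Qx}, and invoking \corref{cor:R}, produces an eigenbasis of $\cH^i[m]$ on the $n=\infty$ version from an eigenbasis of $\cH^i[0]$ acting on a tensor product of finite-dimensional irreducible $\cG^{\mf d}[0]$-modules. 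Finite $n$ is reached via \propref{trun_eigenvector}(i), and \propref{iota_corresp}(i) converts between $\cH^i[m]_n$ and $H^i[m]_n$. The case $\xx=\mf d$ is symmetric, obtained by interchanging $\mf c$ and $\mf d$ and using \propref{prop: U_typeI_II}(iv) with the isomorphism $\widehat\varphi:\ovcG^{\mf d}[m]_n\to\cG^{\mf c}[m]_n$ in place of (ii). The main subtlety is the bookkeeping in the second paragraph: carefully verifying that $\widehat\varphi$ changes the invariant bilinear form by an overall sign, so that the Casimir flips sign, and keeping proper track of the residual $K\otimes K$ correction coming from the central extension.
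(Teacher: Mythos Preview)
Your proof is correct and follows the same route as the paper: transport the type II problem for $\ovcG^{\xx}[m]_n$ to the opposite classical type via the isomorphism $\widehat\varphi$ of \lemref{lem:phi_ext}, then rerun the machinery behind \thmref{dom-uni-1} on the $\cG$-side using $T_{[m]}$, \propref{trun_eigenvector}(i) and \propref{iota_corresp}(i). The paper compresses all of this into two sentences and leaves precisely the points you work out---the sign flip $\widehat\varphi^{\otimes 2}(\ov\Omega^{\mf c}[m]_n)=-\Omega^{\mf d}[m]_n$ and the residual $(m-n)K\otimes K$ correction from the central extension---to the reader.
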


\begin{proof} Each $\ovcG^{\mf c}[m]_n$(resp., $\ovcG^{\mf d}[m]_n$)-module $L(\ovcG^{\mf c}[m]_n,{\xi}_i)$ (resp., $L(\ovcG^{\mf d}[m]_n,{\xi}_i)$) can be regarded as a $\cG^{\mf d}[m]_n$(resp., $\cG^{\mf c}[m]_n$)-module via the isomorphism $\widehat{\varphi}$ defined by \eqref{phi_cd}. The theorem follows from \thmref{dom-uni-1} together with an explicit description of the highest weights involved.
\end{proof}

We anticipate that the results of this paper may provide an approach of obtaining the common eigenvectors of the Gaudin Hamiltonians associated to the Lie (super)algebra $\ovcG^\xx[m]_n$ from the common eigenvectors of the Gaudin Hamiltonians associated to the Lie algebra $\cG^\xx[0]_k$ for some $k \in \N$.

Let us explain in more detail for $\xx=\mf{a}$.
For each $i=1, \ldots, \ell$, consider the Gaudin Hamiltonian $\xov H^i[m]_n$ on the space $\ov M_{\ovmu}^{\rm sing}$, where $\ovmu$ is a singular weight ({\rm i.e.} a weight of a singular vector) of $\ov M$ and $\ov M:=L(\ovcG^{\mf{a}}[m]_n, \ov\xi_1) \otimes \cdots \otimes L(\ovcG^{\mf{a}}[m]_n, \ov\xi_\ell)$ for $\ov\xi_1, \ldots, \ov\xi_\ell \in \ov{\mc Q}^{\mf{a}, I}(m|n)$. The arguments in Sections \ref{GH} and \ref{diag} allow us to obtain the common eigenvectors of all $\xov H^i[m]_n$'s from the common eigenvectors of all Gaudin Hamiltonians ${H}^i[0]_k$'s for the Lie algebra $\cG^\mf{a}[0]_k$, for some $k \in \N$, on the space spanned by singular vectors in $L_1 \otimes \cdots \otimes L_\ell$, where $L_i$ is a finite-dimensional irreducible $\cG^\mf{a}[0]_k$-module depending on $L(\ovcG^{\mf{a}}[m]_n, \ov\xi_i)$ for $i=1, \ldots, \ell$.
An explicit construction would be to apply a sequence of certain odd reflections to the Bethe vectors of ${H}^i[0]_k$, which are constructed by Bethe ansatz method (see, for example, \cite[p. 31]{FFR}). These odd reflections depend on the weight $\mu$ and can be determined explicitly.
Let us say we would like to find common eigenvectors for the Gaudin Hamiltonians $\xov H^i[1]_n$'s on $\ov M_{\ovmu}^{\rm sing}$ for the singular weight $\ovmu:=3 \ep_1+3\ep_{1/2}$.
For type $\mf a$, we have $d=0$, and hence we may identify the action of $\ovcG^{\mf{a}}[1]_n$ (resp., $\xov H^i[1]_n$) with the action of $\SG^{\mf{a}}[1]_n$ (resp., $\ov{\mc H}^i[1]_n$) and $\ov{\mc Q}^{\mf{a}, I}(1|n)$ with $\ov{ Q}^{\mf a, I}(1|n)$. Therefore $\ov M=L(\SG^{\mf{a}}[1]_n, \ov\xi_1) \otimes \cdots \otimes L(\SG^{\mf{a}}[1]_n, \ov\xi_\ell)$.
Denote by $\mu$ (resp., $\xi_i$) the partition associated to the weight $\ovmu$ (resp., $\ov\xi_i$).
The common eigenvectors of the $\ov{\mc H}^i[1]_n$'s on $\ov M_{\ovmu}^{\rm sing}$ correspond to the common eigenvectors of the $\wcH^i$'s on $(L(\w\G^{\mf{a}}, \w\xi_1) \otimes \cdots \otimes L(\w\G^{\mf{a}}, \w\xi_\ell))^{\rm sing}_{\w\mu}$ and thus correspond to the common eigenvectors of $\cH^i[0]$'s on $(L(\G^{\mf{a}}[0], \xi_1[0]) \otimes \cdots \otimes L(\G^{\mf{a}}[0], \xi_\ell[0])^{\rm sing}_{\mu[0]}$, where  
$\mu[0]=3\ep_1+\ep_2+\ep_3+\ep_4\leftrightarrow\w\mu\leftrightarrow\ov\mu=\ovmu[1]$ and
$\xi_i[0]\leftrightarrow\w\xi_i\leftrightarrow\ov\xi_i=\ov\xi_i[1]$ 
are determined by the bijections in \eqref{Qa}. 
More precisely, we may choose the common eigenvectors $\ov v:=E_{1/2,2} E_{1/2,3} E_{1/2,4} v$ for all Bethe vectors $v$ of weight $\mu[0]$ in 
$(L(\G^{\mf{a}}[0]_k, \xi_1[0])\otimes\cdots\otimes L(\G^{\mf{a}}[0]_k, \xi_\ell[0]))^{\rm sing}=(L(\cG^{\mf{a}}[0]_k, \xi_1[0])\otimes\cdots\otimes L(\cG^{\mf{a}}[0]_k, \xi_\ell[0]))^{\rm sing}$ for some $k$.
We emphasize that although all of these $E_{i,j}$'s, coming from odd reflections, lie in $U(\w\G^\mf{a})$, each resulting vector $\ov v$ is an element of ${\ov M}_{\ovmu}^{\rm sing}$. In this case, we can choose $k=4$.

One should expect that the eigenvectors obtained in this way take complicated forms in general. It would be interesting to know whether this procedure is related to any other known method of constructing eigenvectors.

\vskip 0.5cm
\noindent{\bf Acknowledgments.}
We would like to thank an anonymous expert for helpful comments.
The first author was partially supported by Ministry of Science and Technology of Taiwan grant 110-2115-M-006-006.
The second author was partially supported by Ministry of Science and Technology of Taiwan grant 109-2115-M-006-019-MY3.

\bigskip
\frenchspacing

\bigskip
\noindent
Department of Mathematics, National Cheng Kung University, Tainan 701401, Taiwan\\
{\it E-mail address}: \texttt{keng@ncku.edu.tw}

\bigskip
\noindent
Department of Mathematics, National Cheng Kung University, Tainan 701401, Taiwan\\
{\it E-mail address}: \texttt{nlam@ncku.edu.tw}

\end{document}